\newcommand{\qedllncs}{}
\newif\ifincludefigures
\newcommand{\ignore}[1]{{}}
\renewcommand\vec{\bm} 
\newcommand\numberthis{\addtocounter{equation}{1}\tag{\theequation}} 
\newcommand*\diff{\mathop{}\!\mathrm{d}}
\newcommand{\minuv}{\min}
\newcommand{\maxuv}{\max}
\newcounter{thm}
\newtheorem{theorem}[thm]{Theorem}
\newtheorem{lemma}[thm]{Lemma}
\newtheorem{corollary}[thm]{Corollary}
\begin{document}
	
	\title{Algorithms for Energy Conservation in Heterogeneous Data Centers\footnote{Work supported by the European Research Council, Grant Agreement No.\ 691672.}}
	
	\author{Susanne Albers \\
		Technical University of Munich \\
		albers@in.tum.de \\
		\and
		Jens Quedenfeld\footnote{Contact author} \\
		Technical University of Munich \\
		jens.quedenfeld@in.tum.de \\
	}
	
	\maketitle
	
	\begin{abstract}
		Power consumption is the major cost factor in data centers. It can be reduced by dynamically right-sizing the data center according to the currently arriving jobs. If there is a long period with low load, servers can be powered down to save energy. For identical machines, the problem has already been solved optimally by Lin et al. (2013) and Albers and Quedenfeld (2018).
		
		In this paper, we study how a data-center with heterogeneous servers can dynamically be right-sized to minimize the energy consumption. There are $d$ different server types with various operating and switching costs. 
		We present a deterministic online algorithm that achieves a competitive ratio of~$2d$ as well as a randomized version that is $1.58d$-competitive. Furthermore, we show that there is no deterministic online algorithm that attains a competitive ratio smaller than~$2d$. Hence our deterministic algorithm is optimal. In contrast to related problems like convex body chasing and convex function chasing
		, we investigate the discrete setting where the number of active servers must be integral, so we gain truly feasible solutions.
		
	\end{abstract}

\section{Introduction}
\label{sec:intro}
Energy management is an important issue in data centers. A huge amount of a data center's financial budget is spent on electricity that is needed to operate the servers as well as to cool them \cite{Brill2007,Hamilton2008}. 
However, server utilization is typically low. In fact there are data centers where the average server utilization is 
as low as 12\%~\cite{Delforge2014}; only for a few days a year is full processing power needed. Unfortunately, idle servers still consume about half of their peak power~\cite{Schmid2009power}. Therefore, right-sizing a data center by powering down idle servers can save a significant amount of energy. However, shutting down a server and powering it up immediately afterwards incurs much more cost than holding the server in the active state during this time period. The cost for powering up and down does not only contain the increased energy consumption but also, for example, wear-and-tear costs or the risk that the server does not work properly after restarting~\cite{LinWierman2013extended}. Consequently, algorithms are needed that manage the number of active servers to minimize the total cost, without knowing when new jobs will arrive in the future. 
%
Since about 3\% of the global electricity production is consumed by data centers~\cite{Bawden2016}, a reduction of their energy consumption can also decrease greenhouse emissions. Thus, right-sizing data centers is not only important for economical but also for ecological reasons.

Modern data centers usually contain heterogeneous servers. If the capacity of a data center is no longer sufficient, it is extended by including new servers. The old servers are still used however. Hence, there are different server types with various operating and switching costs in a data center. Heterogeneous data centers may also include
different processing architectures. There can be servers that use GPUs to perform massive parallel calculations. However, GPUs are not suitable for all jobs. For example, tasks with many branches can be computed much faster on common CPUs than on GPUs~\cite{Shan2006}.

\textbf{Problem Formulation}
We consider a data center with $d$ different server types. There are $m_j$ servers of type $j$. Each server has an active state where it is able to process jobs, and an inactive state where no energy is consumed. Powering up a server of type $j$ (i.e., switching from the inactive into the active state) incurs a cost of~$\beta_j$ (called \emph{switching} cost); powering down does not cost anything. 
We consider a finite time horizon consisting of the time slots $\{1, \dots, T\}$.
For each time slot $t \in \{1, \dots, T\}$, jobs of total volume $\lambda_t \in \mathbb{N}_0$ arrive and have to be processed during the time slot. 
There must be at least $\lambda_t$ active servers to process the arriving jobs.
We consider a basic setting where the operating cost of a server of type $j$ is load- and time-independent and denoted by $l_{j} \in \mathbb{R}_{\geq 0}$. Hence, an active server incurs a constant but 
type-dependent operating cost per time slot.

A schedule $X$ is a sequence $\vec{x}_1, \dots, \vec{x}_T$ with $\vec{x}_t = (x_{t,1}, \dots, x_{t,d})$ where each $x_{t,j}$ indicates the number of active servers of type $j$ during time slot $t$. At the beginning and the end of the considered time horizon all servers are shut down, i.e., $\vec{x}_0 = \vec{x}_{T+1} = (0, \dots, 0)$. A schedule is called \emph{feasible} if there are enough active servers to process the arriving jobs and if there are not more active servers than available, i.e., $\sum_{j=1}^{d} x_{t,j} \geq \lambda_t$ and $x_{t,j} \in \{0, 1, \dots, m_j\}$ for all $t \in \{1, \dots, T\}$ and $j \in \{1, \dots, d\}$. The cost of a feasible schedule is defined by
\begin{equation}\label{eqn:online:cost}
C(X) \coloneqq \sum_{t=1}^T \left( \sum_{j=1}^{d} l_{j} x_{t,j} + \sum_{j=1}^{d} \beta_j (x_{t,j} - x_{t-1, j})^+ \right) 
\end{equation}

where $(x)^+ \coloneqq \max (x, 0)$.
The switching cost is only paid for powering up. However, this is not a restriction, since all servers are inactive at the beginning and end of the workload. Thus the cost of powering down can be folded into the cost of powering up. 
A problem instance is specified by the tuple $\mathcal{I} = (T, d, \vec{m}, \vec{\beta}, \vec{l}, \Lambda)$ where $\vec{m} = (m_1, \dots, m_d)$, $\vec{\beta} = (\beta_1, \dots, \beta_d)$, $\vec{l} = (l_{1}, \dots, l_{d})$ and $\Lambda = (\lambda_1, \dots, \lambda_T)$.
The task is to find a schedule with minimum cost. 

We focus on the central case without \emph{inefficient} server types. A server type~$j$ is called \emph{inefficient} if there is another server type $j' \not= j$
with both smaller (or equal) operating and switching costs, i.e., $l_j \geq l_{j'}$ and $\beta_j \geq \beta_{j'}$. This assumption is natural because a better server type with a lower operating cost usually has a higher switching cost. An inefficient server of type $j$ is only powered up, if all servers of all types $j'$ with $\beta_{j'} \leq \beta_{j}$ and $l_{j'} \leq l_{j}$ are already running. Therefore, excluding inefficient servers is not a relevant restriction in practice. In related work, Augustine et al. \cite{AugustineIrani2008} exclude inefficient states when operating a single server.

\textbf{Our contribution}
We analyze the online setting of this problem where the job volumes $\lambda_t$ arrive one-by-one. The vector of the active servers $\vec{x}_t$ has to be determined without knowledge of future jobs~$\lambda_{t'}$ with $t' > t$. A main contribution of our work, compared to previous results, is that we investigate heterogeneous
data centers and examine the online setting when truly feasible (integral) solutions are sought.

In Section~\ref{sec:online:const}, we present a $2d$-competitive deterministic online algorithm, i.e., the total cost of the schedule calculated by our algorithm is at most $2d$ times larger than the cost of an optimal offline solution. 
%
Roughly, our algorithm works as follows. It calculates an optimal schedule for the jobs received so far and ensures that the operating cost of the active servers is at most as large as the operating cost of the active servers in the optimal schedule. If this is not the case, servers with high operating cost are replaced by servers with low operating cost. If a server is not used for a specific duration depending on its switching and operating costs, it is shut down.

In Section~\ref{sec:online:rand}, we devise a randomized version of our algorithm achieving a competitive ratio of $\frac{e}{e-1}d \approx 1.582d$ against an oblivious adversary.


In Section~\ref{sec:online:lower}, we show that there is no deterministic online algorithm that achieves a competitive ratio smaller than $2d$. Therefore, our algorithm is optimal. 
Additionally, for a data center that contains $m$ unique servers (that is $m_j = 1$ for all $j \in \{1, \dots, d\}$), we show that the best achievable competitive ratio is $2m$. 

\textbf{Related work} The design of energy-effcient algorithms has received quite some research interest over the last years, see e.g. \cite{BansalKimbrelPruhs2007,IraniPruhs2005,Antoniadis2020} and references therein. Specifically, data center right-sizing has attracted considerable attention lately.
Lin and Wierman \cite{LinWierman2013,LinWierman2013extended} analyzed the data-center right-sizing problem for data centers with identical servers ($d=1$). The operating cost is load-dependent and modeled by a convex function. In contrast to our setting, continuous solutions are allowed, i.e., the number of active server $x_{t}$ can be fractional. This allows for other techniques in the design and analysis of an algorithm, but the created schedules cannot be used directly in practice. They gave a 3-competitive deterministic online algorithm for this problem. Bansal et al.~\cite{Bansal2015} improved this result by randomization and developed a 2-competitive online algorithm. In our previous paper~\cite{AlbersQuedenfeld2018}, we showed that~2 is a lower bound for randomized algorithms in the continuous setting; this result was independently shown by~\cite{Antoniadis2017}. Furthermore, we analyzed the discrete setting of the problem where the number of active servers is integral ($x_t \in \mathbb{N}_0$). We presented a 3-competitive deterministic and a 2-competitive randomized online algorithm. Moreover, we proved that these competitive ratios are optimal. 

Data-center right-sizing of heterogeneous data centers is related to convex function chasing, which is also known as smoothed online convex optimization~\cite{ChenGoelWierman2018}. At each time slot $t$, a convex function $f_t$ arrives. The algorithm then has to choose a point $\vec{x}_t$ and pay the cost $f_t(\vec{x}_t)$ as well as the movement cost $\|\vec{x}_t - \vec{x}_{t-1}\|$ where $\|\cdot \|$ is any metric.
The problem described by equation~\eqref{eqn:online:cost} is a special case of convex function chasing if fractional schedules are allowed, i.e., $x_{t,j} \in [0, m_j]$ instead of $x_{t,j} \in \{0, \dots, m_j\}$. The operating cost $\sum_{j=1}^{d} l_{j} x_{t,j}$ in equation~\eqref{eqn:online:cost} together with the feasibility requirements can be modeled as  a convex function that is infinite for $\sum_{j=1}^{d} x_{t,j} < \lambda_t$ and $x_{t,j} \notin [0,m_j]$. The switching cost equals the Manhattan metric if the number of servers is scaled appropriately. Sellke \cite{Sellke2020} gave a $(d+1)$-competitive algorithm for convex function chasing. 
%
A similar result was found by Argue et al.~\cite{Argue2020}.

In the discrete setting, convex function chasing has at least an exponential competitive ratio, as the following setting shows. Let $m_j = 1$ and $\beta_j = 1$ for all $j \in \{1, \dots, d\}$, so the possible server configurations are $\{0,1\}^d$. The arriving convex functions $f_t$ are infinite for the current position $\vec{x}_{t-1}$ of the online algorithm and $0$ for all other positions $\{0,1\}^d \setminus \{\vec{x}_{t-1}\}$. After $T \coloneqq 2^d -1$ functions arrived, the switching cost paid by the algorithm is at least $2^d -1$ (otherwise it has to pay infinite operating costs), whereas the offline schedule can go directly to a position without any operating cost and only pays a switching cost of at most~$d$. 

Already for the 1-dimensional case (i.e. identical machines), it is not trivial to round a fractional schedule without increasing the competitive ratio (see~\cite{LinWierman2013extended} and \cite{AlbersQuedenfeld2018extended}). In $d$-dimensional space, it is completely unclear, if continuous solutions can be rounded without arbitrarily increasing the total cost.
Simply rounding up can lead to arbitrarily large switching costs, for example if the fractional solution rapidly switches between 1 and $1 + \epsilon$. Using a randomized rounding scheme like in \cite{AlbersQuedenfeld2018extended} (that was used for homogeneous data centers) independently for each dimension can result in an infeasible schedule (for example, if $\lambda_t = 1$ and $\vec{x}_t = (1/d, \dots, 1/d)$ is rounded down to $(0, \dots, 0)$).
Therefore, Sellke's result does not help us for analyzing the discrete setting. Other publications handling convex function chasing or convex body chasing are 
\cite{FriedmanLinial1993,BansalBoehm2018,BubeckSellke2020nested}.

Goel and Wierman \cite{GoelWierman2018} developed a $(3+\mathcal{O}(1/\mu))$-competitive algorithm called Online Balanced Descent (OBD) for convex function chasing, where the arriving functions were required to be $\mu$-strongly convex. We remark that the operating cost defined by equation~\eqref{eqn:online:cost} is not strongly convex, i.e., $\mu = 0$. Hence their result cannot be used for our problem.	
A similar result is given by Chen et al. \cite{ChenGoelWierman2018} who showed that OBD is $(3+\mathcal{O}(1/\alpha))$-competitive if the arriving functions are locally $\alpha$-polyhedral. In our case, $\alpha = \min_{j \in \{1, \dots, d\}} l_j / \beta_j$, so $\alpha$ can be arbitrarily small depending on the problem instance.


Another similar problem is the Parking Permit Problem by Meyerson~\cite{Meyerson2005}. There are $d$ different permits which can be purchased for $\beta_j$ dollars and have a duration of $D_j$ days. Certain days are driving days where at least one parking permit is needed ($\lambda_t \in \{0,1\}$). The permit cost corresponds to our switching cost. However, the duration of the permit is fixed to $D_j$, whereas in our problem the online algorithm can choose for each time slot if it wants to power down a server. Furthermore, there is no operating cost. Even if each server type is replaced by an infinite number of permits with the duration $t$ and the cost $\beta_j +  l_j \cdot t$, it is still a different problem, because the algorithm has to choose the time slot for powering down in advance (when the server is powered up). 

Data-center right-sizing of heterogeneous data centers is related to geographical load balancing analyzed in \cite{LinLiuWierman2012} and \cite{LiuLinWierman2011}. 
Other applications are shown in \cite{Wang2014,Kim2014,Chen2015,Kim2015,BadieiLiWierman2015,GoelChenWierman2017,Zhang2018}.

\section*{Notation}
\label{sec:notation}
Let $[k] \coloneqq \{1, 2, \dots k\}$, $[k]_0 \coloneqq \{0, 1, \dots k\}$ and $[k:l] \coloneqq \{k, k+1, \dots, l\}$ where $k,l \in \mathbb{N}_0$.

\section{Deterministic Online Algorithm}
\label{sec:online:const}

In this section we present a deterministic $2d$-competitive online algorithm for the problem described in the preceding section. 
The basic idea of our algorithm is to calculate an optimal schedule for the problem instance that ends at the current time slot. Based on this schedule, we decide when a server is powered up. If a server is idle for a specific time, it is powered down.

Formally, given the original problem instance $\mathcal{I} = (T, d, \vec{m}, \vec{\beta}, \vec{l}, \Lambda)$, the shortened problem instance~$\mathcal{I}^t$ is defined by $\mathcal{I}^t \coloneqq (t, d, \vec{m}, \vec{\beta}, \vec{l}, \Lambda^t)$ with $\Lambda^t = (\lambda_1, \dots, \lambda_t)$. Let $\hat{X}^t$ denote an optimal schedule for $\mathcal{I}^t$ and let $X^\mathcal{A}$ be the schedule calculated by our algorithm~$\mathcal{A}$. 

W.l.o.g. there are no server types with the same operating and switching costs, i.e., $\beta_j = \beta_{j'}$ and $l_j = l_{j'}$ implies $j = j'$. Furthermore, let $l_1 > \dots > l_d$, i.e., the server types are sorted by their operating costs. Since inefficient server types are excluded, this implies that $\beta_1 < \dots < \beta_d$. 

We separate a problem instance into $m \coloneqq \sum_{j=1}^{d} m_j$ lanes. At time slot $t$, there is a single job in lane $k \in [m]$, if and only if $k \leq \lambda_t$. We can assume that $\lambda_t \leq m$ holds for all $t \in [T]$, because otherwise there is no feasible schedule for the problem instance. 
Let $X$ be an arbitrary feasible schedule with $\vec{x}_t = (x_{t,1}, \dots, x_{t,d})$. We define
\begin{equation} \label{eqn:online:const:ytkdef}
y_{t,k} \coloneqq  \begin{cases}
\max \{j \in [d] \mid \sum_{j' = j}^{d} x_{t,j'} \geq k\}  &\text{if $k \in \left[\sum_{j=1}^{d} x_{t,j}\right]$} \\
0 &\text{else}
\end{cases}
\end{equation}
to be the server type that handles the $k$-th lane during time slot $t$ (see Figure~\ref{fig:online:const:xtjytk}). If $y_{t,k} = 0$, then there is no active server in lane $k$ at time slot $t$. By definition, the values $y_{t,1}, \dots, y_{t,m}$ are sorted in descending order, i.e., $y_{t,k} \geq y_{t,k'}$ for $k < k'$. 
Note that $y_{t,k} = 0$ implies $\lambda_t < k$, because otherwise there are not enough active servers to handle the jobs at time $t$. For the schedule~$\hat{X}^t$, the server type used in lane $k$ at time slot $t'$ is denoted by $\hat{y}^t_{t',k}$. Our algorithm calculates $y^{\mathcal{A}}_{t,k}$ directly, the corresponding variables $x^\mathcal{A}_{t,j}$ can be determined by $x^\mathcal{A}_{t,j} = |\{k \in [m] \mid y^{\mathcal{A}}_{t,k} = j\}|$. 
A tabular overview of the notation is shown in~\ref{sec:appendix:variables}.

\ifincludefigures
\begin{figure}[t]
	\setlength{\abovecaptionskip}{6pt plus 0pt minus 0pt}
	\setlength{\belowcaptionskip}{0pt plus 0pt minus 0pt}
	
	\centering
	\begin{tikzpicture}

	\def\lambdaarray{{0,1,3,0,0,2,1,4,0,3,1,1,0}}
	\def\xtja{{0,0,2,0,0,0,0,1,0,0,0,0,0}}
	\def\xtjb{{0,0,0,0,0,1,1,2,2,2,0,0,0}}
	\def\xtjc{{0,1,1,1,1,1,1,1,1,1,1,1,0}}
	
	\pgfmathsetmacro{\xBegin}{0}
	\pgfmathsetmacro{\xStep}{0.65}
	\pgfmathsetmacro{\xLength}{12}
	\pgfmathsetmacro{\tMax}{\xLength - 1}
	
	\pgfmathsetmacro{\yBeginLambda}{0}
	\pgfmathsetmacro{\yBeginXtj}{-1.2}
	\pgfmathsetmacro{\yStep}{0.55}
	\pgfmathsetmacro{\yLength}{4}
	\pgfmathsetmacro{\yBeginYtk}{\yBeginXtj - \yStep * \yLength -1.1}

	\pgfmathsetmacro{\axisExtend}{0.6}
	
	\pgfmathsetmacro{\tickHalfLength}{0.07}
	
	\tikzstyle{arrow}=[-stealth]
	\tikzstyle{job}=[very thick]
	\tikzstyle{jobfill}=[fill=white]
	\tikzstyle{schedule}=[]
	
	\foreach \x in {0,...,\tMax} {
		\fill[jobfill] (\xBegin + \x * \xStep , \yBeginLambda) rectangle
			(\xBegin + \x * \xStep + \xStep, \yBeginLambda + \lambdaarray[\x] * \yStep);
	}
	
	\draw[arrow] (\xBegin, \yBeginLambda) to (\xBegin + \xStep * \xLength +  \axisExtend * \xStep, \yBeginLambda)
		node[below] {$t$};
	\foreach \x in {1,...,\xLength } {
		\draw (\xBegin + \x * \xStep, \yBeginLambda - \tickHalfLength) to (\xBegin + \x * \xStep, \yBeginLambda + \tickHalfLength);
		\pgfmathsetmacro{\t}{int(\x - 1)}
		\node[below] at (\xBegin + \x * \xStep - 0.5 * \xStep, \yBeginLambda) {\t};
	}
	
	\draw[arrow] (\xBegin, \yBeginLambda) to (\xBegin, \yBeginLambda  + \yStep * \yLength + \axisExtend * \yStep)
		node[left] {$\lambda_t$};
	\foreach \y in {1,...,\yLength } {
		\draw (\xBegin - \tickHalfLength, \yBeginLambda + \y * \yStep) to (\xBegin + \tickHalfLength, \yBeginLambda + \y * \yStep);
	}
	
	\foreach \y in {1,...,\yLength } {
		\node[left] at (\xBegin - \tickHalfLength, \yBeginLambda + \y * \yStep) {\y};
	}

	\foreach \x in {0,...,\tMax} {
		\pgfmathsetmacro{\m}{\lambdaarray[\x] - 1}
		\pgfmathtruncatemacro\myresult{\m>=1?1:0}
		\ifnum\myresult>0\relax
			\foreach \y in {1,...,\m} {
				\draw[job, thin] (\xBegin + \x * \xStep, \yBeginLambda + \y * \yStep) 
				to (\xBegin + \x * \xStep + \xStep, \yBeginLambda + \y * \yStep);
			}
		\fi
		
		\draw[job] (\xBegin + \x * \xStep , \yBeginLambda + \lambdaarray[\x] * \yStep) 
				to (\xBegin + \x * \xStep + \xStep, \yBeginLambda + \lambdaarray[\x] * \yStep)
				to (\xBegin + \x * \xStep + \xStep, \yBeginLambda + \lambdaarray[\x + 1] * \yStep)
				to (\xBegin + \x * \xStep + \xStep + 0.01 * \xStep, \yBeginLambda + \lambdaarray[\x + 1] * \yStep);
	}

	\node[left] at (\xBegin - 0 * \xStep, \yBeginXtj) {$\vec{x}_t = $};
	\foreach \x in {0,...,\tMax} {
		\pgfmathsetmacro{\xtjax}{\xtja[\x] }
		\pgfmathsetmacro{\xtjbx}{\xtjb[\x] }
		\pgfmathsetmacro{\xtjcx}{\xtjc[\x] }
		\node at (\xBegin + \x * \xStep + 0.5 * \xStep, \yBeginXtj) 
			{$\begin{pmatrix} \xtjax \\ \xtjbx \\ \xtjcx \end{pmatrix}$};
	}

	\draw[arrow] (\xBegin, \yBeginYtk) to (\xBegin + \xStep * \xLength +  \axisExtend * \xStep, \yBeginYtk)
		node[below] {$t$};
	\foreach \x in {1,...,\xLength } {
		\draw (\xBegin + \x * \xStep, \yBeginYtk - \tickHalfLength) to (\xBegin + \x * \xStep, \yBeginYtk + \tickHalfLength);
		\pgfmathsetmacro{\t}{int(\x - 1)}
		\node[below] at (\xBegin + \x * \xStep - 0.5 * \xStep, \yBeginYtk) {\t};
	}
	
	\draw[arrow] (\xBegin, \yBeginYtk) to (\xBegin, \yBeginYtk  + \yStep * \yLength +  \axisExtend * \yStep)
		node[left] {$k$};
	\foreach \y in {1,...,\yLength } {
		\draw (\xBegin - \tickHalfLength, \yBeginYtk + \y * \yStep) to (\xBegin + \tickHalfLength, \yBeginYtk + \y * \yStep);
	}
	\foreach \y in {1,...,\yLength } {
		\node[left] at (\xBegin - \tickHalfLength, \yBeginYtk + \y * \yStep) {\y};
	}
	
	
	\draw[schedule] (\xBegin + \xStep, \yBeginYtk) rectangle (\xBegin + 12 * \xStep, \yBeginYtk + \yStep)  
		node[pos=.5] {$3$};
	\draw[schedule] (\xBegin + 2 * \xStep, \yBeginYtk  + \yStep) rectangle (\xBegin + 3 * \xStep, \yBeginYtk + 2 * \yStep)  
		node[pos=.5] {$1$};
	\draw[schedule] (\xBegin + 2 * \xStep, \yBeginYtk  + 2* \yStep) rectangle (\xBegin + 3 * \xStep, \yBeginYtk + 3* \yStep)  
		node[pos=.5] {$1$};
	\draw[schedule] (\xBegin + 5 * \xStep, \yBeginYtk  + \yStep) rectangle (\xBegin + 10 * \xStep, \yBeginYtk + 2 * \yStep)  
		node[pos=.5] {$2$};
	\draw[schedule] (\xBegin + 7 * \xStep, \yBeginYtk  + 2 * \yStep) rectangle (\xBegin + 10 * \xStep, \yBeginYtk + 3* \yStep)  
		node[pos=.5] {$2$};
	\draw[schedule] (\xBegin + 7 * \xStep, \yBeginYtk  + 3 * \yStep) rectangle (\xBegin + 8 * \xStep, \yBeginYtk + 4* \yStep)  
		node[pos=.5] {$1$};
		
	\end{tikzpicture}
	\caption{Example of a job sequence (upper plot) and a feasible schedule $X$ written in both notations $x_{t,j}$ (middle) and $y_{t,k}$ (lower plot). \normalfont Outside of the rectangles in the lower plot, the value of $y_{t,k}$ is 0.}
	\label{fig:online:const:xtjytk}
\end{figure}
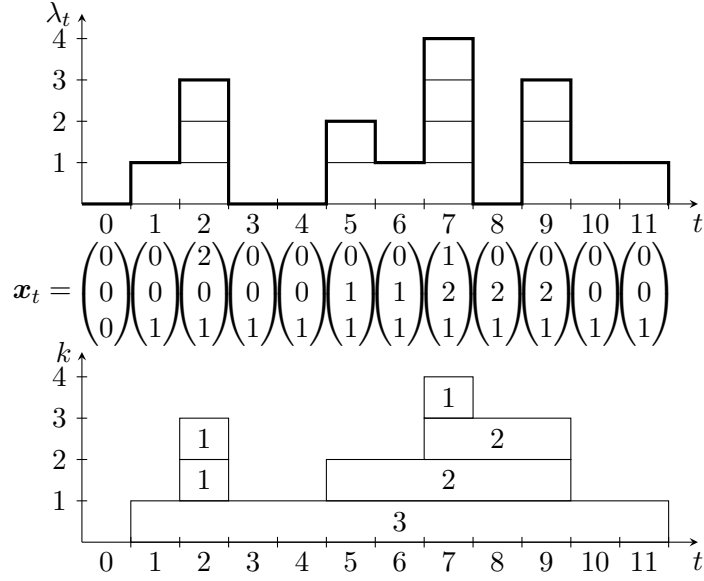
\fi

Our algorithm works as follows:
First, an optimal solution $\hat{X}^t$ is calculated. If there are several optimal schedules, we choose a schedule that fulfills the inequality $\hat{y}^t_{t',k} \geq \hat{y}^{t-1}_{t',k}$ for all time slots $t' \in [t]$ and lanes $k \in [m]$, so $\hat{X}^t$ never uses smaller server types than the previous schedule $\hat{X}^{t-1}$. We will see in Lemma~\ref{lemma:online:const:opt:stypes} that such a schedule exists and how to construct it.

If there is a server type $j$ with $l_j = 0$, then in an optimal schedule such a server can be powered up before it is needed, although $\lambda_t = 0$ holds for this time slot. Similarly, such a server can run for more time slots than necessary. W.l.o.g. let $\hat{X}^t$ be a schedule where servers are powered up as late as possible and powered down as early as possible. 

Beginning from the lowest lane ($k = 1$), it is ensured that $\mathcal{A}$ uses a server type that is not smaller than the server type used by $\hat{X}^t$, i.e., $y^{\mathcal{A}}_{t,k} \geq \hat{y}^t_{t,k}$ must be fulfilled. If the server type $y^{\mathcal{A}}_{t-1,k}$ used in the previous time slot is smaller than $\hat{y}^t_{t,k}$, it is powered down and server type $\hat{y}^t_{t,k}$ is powered up. 
A server of type $j$ that is not replaced by a greater server type stays active for $\bar{t}_j \coloneqq \left\lfloor {\beta_j}/{l_j} \right\rfloor$ time slots.
If $\hat{X}^t$ uses a smaller server type $j' \leq j$ in the meantime, then server type~$j$ will run for at least $\bar{t}_{j'}$ further time slots (including time slot~$t$). Formally, a server of type $j$ in lane $k$ is powered down at time slot $t$, if $\hat{y}^{t'}_{t',k} \not = j'$ holds for all server types $j' \leq j$ and time slots $t' \in [t - \bar{t}_{j'} + 1 : t]$.

The pseudocode below clarifies how algorithm~$\mathcal{A}$ works. The variables $e_k$ for $k \in [m]$ store the time slot when the server in the corresponding lane will be powered down. Figure~\ref{fig:online:const:algo} visualizes how the schedule $X^{\mathcal{A}}$ changes from time slot $t-1$ to $t$.


\begin{algorithm}\label{alg:online:const}
	\caption{Algorithm $\mathcal{A}$}
	\begin{algorithmic}[1]
		\For{$t \coloneqq 1$ \textbf{to} $T$}
			\State Calculate $\hat{X}^t$ such that $\hat{y}^t_{t',k} \geq \hat{y}^{t-1}_{t',k}$ for all $t' \in [t]$ and $k \in [m]$
			\For{$k \coloneqq 1$ \textbf{to} $m$}
				\If {$y^{\mathcal{A}}_{t-1,k} < \hat{y}^t_{t,k}$ \textbf{or} $t \geq e_k$}
					\State $y^{\mathcal{A}}_{t,k} \coloneqq \hat{y}^t_{t,k}$
					\State $e_k \coloneqq t + \bar{t}_{y^{\mathcal{A}}_{t,k}}$
				 \Else 
				 	\State $y^{\mathcal{A}}_{t,k} \coloneqq y^{\mathcal{A}}_{t-1,k}$
				 	\State $e_k \coloneqq \max \{e_k, t + \bar{t}_{\hat{y}^t_{t,k}}\}$ where $\bar{t}_0 \coloneqq 0$
				 \EndIf
			\EndFor
		\EndFor
	\end{algorithmic}
	\vspace{-3pt}
\end{algorithm}


\ifincludefigures
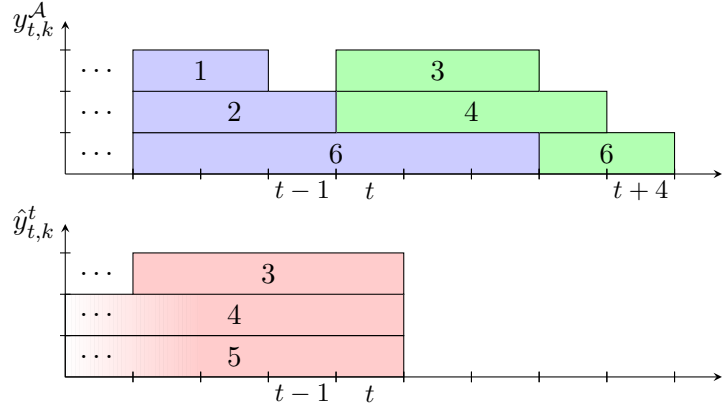
\begin{figure}[t]
	\setlength{\abovecaptionskip}{6pt plus 0pt minus 0pt}
	\setlength{\belowcaptionskip}{0pt plus 0pt minus 0pt}
	
	\centering
	\begin{tikzpicture}
		\pgfmathsetmacro{\xBegin}{0}
		\pgfmathsetmacro{\xStep}{0.9}
		\pgfmathsetmacro{\xLength}{9}
		\pgfmathsetmacro{\tMax}{\xLength - 1}
		
		\pgfmathsetmacro{\yBeginAlgo}{0}
		\pgfmathsetmacro{\yBeginOpt}{-2.7}
		\pgfmathsetmacro{\yStep}{0.55}
		\pgfmathsetmacro{\yLength}{3}
		
		\pgfmathsetmacro{\axisExtend}{0.7}
		
		\pgfmathsetmacro{\tickHalfLength}{0.07}
		
		\tikzstyle{arrow}=[-stealth]
		\tikzstyle{algbefore}=[fill=blue!20]
		\tikzstyle{algafter}=[fill=green!30]
		\tikzstyle{opt}=[fill=red!20]

		\draw[algafter] (\xBegin +7 * \xStep, \yBeginAlgo + 0 * \yStep) 
			rectangle (\xBegin + 9 * \xStep, \yBeginAlgo + 1 * \yStep) 
			node[pos=0.5] {$6$};
			
		\draw[algafter] (\xBegin + 4 * \xStep, \yBeginAlgo + 1 * \yStep) 
			rectangle (\xBegin + 8 * \xStep, \yBeginAlgo + 2* \yStep) 
			node[pos=0.5] {$4$};
			
		\draw[algafter] (\xBegin + 4 * \xStep, \yBeginAlgo + 2* \yStep) 
			rectangle (\xBegin + 7 * \xStep, \yBeginAlgo + 3 * \yStep) 
			node[pos=0.5] {$3$};

		\draw[algbefore] (\xBegin + 1 * \xStep, \yBeginAlgo + 0 * \yStep) 
			rectangle (\xBegin + 7 * \xStep, \yBeginAlgo + 1 * \yStep) 
			node[pos=0.5] {$6$};
			
		\draw[algbefore] (\xBegin + 1 * \xStep, \yBeginAlgo + 1 * \yStep) 
			rectangle (\xBegin + 4 * \xStep, \yBeginAlgo + 2 * \yStep) 
			node[pos=0.5] {$2$};
		
		\draw[algbefore] (\xBegin + 1 * \xStep, \yBeginAlgo + 2 * \yStep) 
			rectangle (\xBegin + 3 * \xStep, \yBeginAlgo + 3 * \yStep) 
			node[pos=0.5] {$1$};
			
		\node at (\xBegin + 0.5 * \xStep, \yBeginAlgo + 0.5 * \yStep) {$\dots$};
		\node at (\xBegin + 0.5 * \xStep, \yBeginAlgo + 1.5 * \yStep) {$\dots$};
		\node at (\xBegin + 0.5 * \xStep, \yBeginAlgo + 2.5 * \yStep) {$\dots$};
		
		\draw[arrow] (\xBegin, \yBeginAlgo) to (\xBegin + \xStep * \xLength +  \axisExtend * \xStep, \yBeginAlgo);
		\foreach \x in {1,...,\xLength } {
			\draw (\xBegin + \x * \xStep, \yBeginAlgo - \tickHalfLength) to (\xBegin + \x * \xStep, \yBeginAlgo + \tickHalfLength);
		}
		\foreach \x in {1,...,\xLength } {
			\draw (\xBegin + \x * \xStep, \yBeginAlgo - \tickHalfLength) to (\xBegin + \x * \xStep, \yBeginAlgo + \tickHalfLength);
		}
		\node[] at (\xBegin + 3.5 * \xStep, \yBeginAlgo - 0.4 * \yStep) {\small $t-1$};
		\node[] at (\xBegin + 4.5 * \xStep, \yBeginAlgo - 0.4 * \yStep) {\small $t$};
		\node[] at (\xBegin + 8.5 * \xStep, \yBeginAlgo - 0.4 * \yStep) {\small $t+4$};
		
		\draw[arrow] (\xBegin, \yBeginAlgo) to (\xBegin, \yBeginAlgo  + \yStep * \yLength + \axisExtend * \yStep)
		node[left] {$y^{\mathcal{A}}_{t,k}$};
		\foreach \y in {1,...,\yLength } {
			\draw (\xBegin - \tickHalfLength, \yBeginAlgo + \y * \yStep) to (\xBegin + \tickHalfLength, \yBeginAlgo + \y * \yStep);
		}
		
		\fill[left color=white, right color=red!20!white] 
			(\xBegin + 0 * \xStep, \yBeginOpt + 0 * \yStep) 
			rectangle (\xBegin + 2* \xStep, \yBeginOpt + 1 * \yStep);
		\fill[left color=white, right color=red!20!white] 
			(\xBegin + 0 * \xStep, \yBeginOpt + 1 * \yStep) 
			rectangle (\xBegin + 2* \xStep, \yBeginOpt + 2 * \yStep);
		\fill[color=red!20!white] 
			(\xBegin + 2 * \xStep, \yBeginOpt + 0 * \yStep) 
			rectangle (\xBegin + 5* \xStep, \yBeginOpt + 1 * \yStep);
		\fill[color=red!20!white] 
			(\xBegin + 2 * \xStep, \yBeginOpt + 1 * \yStep) 
			rectangle (\xBegin + 5* \xStep, \yBeginOpt + 2 * \yStep);
		
		\draw[] (\xBegin + 0 * \xStep, \yBeginOpt + 0 * \yStep) 
			rectangle (\xBegin + 5 * \xStep, \yBeginOpt + 1 * \yStep) 
			node[pos=0.5] {$5$};
		\draw[] (\xBegin + 0 * \xStep, \yBeginOpt + 1 * \yStep) 
			rectangle (\xBegin + 5 * \xStep, \yBeginOpt +2 * \yStep) 
			node[pos=0.5] {$4$};
		\draw[opt] (\xBegin + 1 * \xStep, \yBeginOpt + 2 * \yStep) 
			rectangle (\xBegin + 5 * \xStep, \yBeginOpt +3 * \yStep) 
			node[pos=0.5] {$3$};
		
		\node at (\xBegin + 0.5 * \xStep, \yBeginOpt + 0.5 * \yStep) {$\dots$};
		\node at (\xBegin + 0.5 * \xStep, \yBeginOpt + 1.5 * \yStep) {$\dots$};
		\node at (\xBegin + 0.5 * \xStep, \yBeginOpt + 2.5 * \yStep) {$\dots$};
		
		\draw[arrow] (\xBegin, \yBeginOpt) to (\xBegin + \xStep * \xLength +  \axisExtend * \xStep, \yBeginOpt);
		\foreach \x in {1,...,\xLength } {
			\draw (\xBegin + \x * \xStep, \yBeginOpt - \tickHalfLength) to (\xBegin + \x * \xStep, \yBeginOpt + \tickHalfLength);
		}
		\foreach \x in {1,...,\xLength } {
			\draw (\xBegin + \x * \xStep, \yBeginOpt - \tickHalfLength) to (\xBegin + \x * \xStep, \yBeginOpt + \tickHalfLength);
		}
		\node[] at (\xBegin + 3.5 * \xStep, \yBeginOpt - 0.4 * \yStep) {\small $t-1$};
		\node[] at (\xBegin + 4.5 * \xStep, \yBeginOpt - 0.4 * \yStep) {\small $t$};
		
		\draw[arrow] (\xBegin, \yBeginOpt) to (\xBegin, \yBeginOpt  + \yStep * \yLength + \axisExtend * \yStep)
		node[left] {$\hat{y}^{t}_{t,k}$};
		\foreach \y in {1,...,\yLength } {
			\draw (\xBegin - \tickHalfLength, \yBeginOpt + \y * \yStep) to (\xBegin + \tickHalfLength, \yBeginOpt + \y * \yStep);
		}

	\end{tikzpicture}
	\caption{{\normalfont (figure is colored)} Example of an update in algorithm~$\mathcal{A}$ from time slot~$t-1$ to~$t$. \normalfont The schedule of $\mathcal{A}$ (upper plot) at $t-1$ is shown in blue, the changes after reacting to $\lambda_t$ are printed in green. The optimal schedule $\hat{X}^t$ is shown in the lower plot in red. Let $\bar{t}_j \coloneqq j$. 
	In the lowest lane $k = 1$, we have $y^\mathcal{A}_{t,1} = 6 \geq 5 = \hat{y}^t_{t,1}$, so server type $y^\mathcal{A}_{t,1}$ will run for at least $\bar{t}_5 = 5$ further time slots (including the current time slot~$t$), i.e., $y^\mathcal{A}_{t,1}$ will be powered down after time slot $t+4$. 
	In lane $k=2$, server type $y^\mathcal{A}_{t-1,2} = 2$ is powered down (because $y^\mathcal{A}_{t-1,2} < \hat{y}^t_{t,2}$) and replaced by $\hat{y}^t_{t,2} = 4$. In lane $k=3$, Algorithm~$\mathcal{A}$ has no active server during time slot $t-1$, so server type $\hat{y}^t_{t,3} = 3$ is powered up.}
	\label{fig:online:const:algo}
\end{figure}
\fi

\textbf{Structure of optimal schedules}
Before we can analyze the competitiveness of algorithm~$\mathcal{A}$, we have to show that an optimal schedule with the desired properties required by line~2 actually exists. First, we will investigate basic properties of optimal schedules. The following lemma shows that in an optimal schedule, a server of type $j$ that runs in lane $k$ does not change the lane while being active.


\begin{lemma}[No lane switching] \label{lemma:online:const:opt:laneswitching}
	Let $\hat{X}$ be an optimal schedule. If $\hat{y}_{t-1,k} = j$ and $\hat{y}_{t,k} \not= j$, then there exists no other lane $k' \not= k$ with $\hat{y}_{t-1,k'} \not= j$ and $\hat{y}_{t,k'} = j$. 
\end{lemma}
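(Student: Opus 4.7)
It is convenient to phrase everything in terms of the contiguous interval $I_s := \{k : \hat{y}_{s,k} = j\} = [A_s + 1, A_s + x_{s,j}]$ of lanes carrying type $j$ at time $s$, where $A_s := \sum_{j' > j} x_{s, j'}$ (the interval form is immediate from \eqref{eqn:online:const:ytkdef}). The hypothesis of the lemma says $k \in I_{t-1} \setminus I_t$, and the negation of its conclusion produces some $k' \in I_t \setminus I_{t-1}$; since both sets are intervals, having both symmetric differences nonempty is equivalent to the two intervals being non-nested. The plan is to assume that they are non-nested and to exhibit a strictly cheaper feasible schedule, contradicting the optimality of $\hat{X}$.

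Non-nestedness splits into two symmetric cases depending on whether $A_t > A_{t-1}$ and $A_t + x_{t,j} > A_{t-1} + x_{t-1,j}$ both hold strictly, or both strict inequalities are reversed. I focus on the first; the second is handled by a mirrored argument. The first strict inequality forces some $j'' > j$ with $x_{t, j''} > x_{t-1, j''}$, i.e., a type-$j''$ server is freshly powered up at time $t$. The modification pulls this activation backwards in time: choose $t_0 \le t-1$ minimal such that $x_{t', j} \ge 1$ and $x_{t', j''} \le m_{j''} - 1$ hold for all $t' \in [t_0, t-1]$, then set $x'_{t', j} := x_{t', j} - 1$ and $x'_{t', j''} := x_{t', j''} + 1$ on that window, leaving every other entry of the schedule alone. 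Feasibility is immediate since $\sum_\ell x'_{t', \ell}$ is unchanged (so the load constraint $\lambda_{t'}$ is preserved) and the capacity bounds $x'_{t', \ell} \in [0, m_\ell]$ hold by the choice of $t_0$. The operating cost strictly drops by $(l_j - l_{j''})(t - t_0) > 0$.

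The remaining task is to control the switching cost. Inside the open window $(t_0, t-1]$ both $x_{\cdot, j}$ and $x_{\cdot, j''}$ are shifted by the same constant, so each $(x_{t', \cdot} - x_{t'-1, \cdot})^+$ is unaltered; only slots $t_0$ and $t$ can see a change. Using the elementary identities $(a-1)^+ - a^+ \in \{-1, 0\}$ and $(b+1)^+ - b^+ \in \{0, 1\}$, at time $t$ the type-$j''$ term contributes exactly $-\beta_{j''}$ (the activation at $t$ disappears, because $x_{t,j''} > x_{t-1,j''}$) and the type-$j$ term contributes at most $+\beta_j$; at time $t_0$ the type-$j''$ term contributes at most $+\beta_{j''}$ and the type-$j$ term at most $0$. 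The minimality of $t_0$ forces one of $x_{t_0-1, j} = 0$, $x_{t_0-1, j''} = m_{j''}$, or $t_0 = 1$; a brief case split shows that in each situation at least one of the two potentially positive boundary terms vanishes, so the total switching change is $\le 0$. Combined with the strictly negative operating-cost change, the overall cost strictly drops, contradicting optimality. Case~2 is handled by the mirror image, extending a type-$j''$ server forward on a maximal window $[t, t_1]$. The main obstacle is the boundary bookkeeping at $t_0$ (respectively $t_1+1$): the role of choosing the window maximally is precisely to kill whichever boundary term would otherwise swallow the operating savings.
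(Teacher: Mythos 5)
Your proposal is correct and follows essentially the same route as the paper: after reducing to the two non-nested configurations (your $A_t>A_{t-1}$ versus $A_t<A_{t-1}$ split is exactly the paper's $k'>k$ versus $k'<k$ case distinction), you run the same exchange argument that extends the active interval of a type-$j''>j$ server across a window adjacent to $t$ while retiring a type-$j$ server there, so the operating cost strictly drops and the switching cost does not increase; your capacity-aware window and explicit $(\cdot)^+$ bookkeeping are in fact more careful than the paper's. One small correction to your boundary analysis: in the sub-cases $t_0=1$ and $x_{t_0-1,j}=0$ it is not true that one of the two potentially positive boundary terms vanishes --- rather, the type-$j$ term at $t_0$ is then exactly $-\beta_j$, so the net switching change is bounded by $-\beta_{j''}+\beta_j+\beta_{j''}-\beta_j=0$; only in the sub-case $x_{t_0-1,j''}=m_{j''}$ does the $+\beta_{j''}$ term at $t_0$ literally vanish, and in all three sub-cases your claimed conclusion that the total switching change is $\le 0$ does hold.
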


\begin{proof}
	Let $\hat{y}_{t-1,k} = j$ and $\hat{y}_{t,k} \not= j$. To get a contradiction, assume that there exists a lane $k' \not= k$ with $\hat{y}_{t-1,k'} \not= j$ and $\hat{y}_{t,k'} = j$. We differ between the cases (1) $k' < k$ and (2) $k' > k$. In case 1, the server type $\hat{y}_{t-1,k'}$ must be greater than $j$, since the server types are sorted. Furthermore, at time slot $t-1$ there are at least $k'$ active servers whose types are greater than $j$, and at time slot $t$ there are at most $k' - 1$ active servers whose types are greater than $j$. Therefore a server of type $j' > j$ is powered down after $t-1$. Let $t' > t$ be the first time slot where $\hat{x}_{t',j} < \hat{x}_{t,j}$. By replacing one server of type $j$ during the time slots $[t:t'-1]$ by $j'$ (i.e., $j'$ is not powered down at $t$, but instead at $t'$), we reduce the operating cost without increasing the switching cost. Therefore, $\hat{X}$ cannot be an optimal schedule.
	
	Case 2 works analogously: we have $k' > k$, so the server type $\hat{y}_{t,k'}$ must be greater than $j$. At time slot $t-1$ there are at most $k-1$ active servers whose types are greater than $j$, and at time slot $t$ there are at least $k$ active servers whose types are greater than $j$. Therefore a server of type $j' > j$ is powered up after $t-1$. Let $t' < t$ be the last time slot where $\hat{x}_{t',j} < \hat{x}_{t,j}$. We replace server type $j$ during $[t'+1:t]$ by $j'$.  The total costs are decreased by this transformation, so $\hat{X}$ cannot be an optimal schedule. Therefore, a lane $k' \not= k$ with $\hat{y}_{t-1,k'} \not= j$ and $\hat{y}_{t,k'} = j$ cannot exist. \qedllncs
\end{proof}

The next lemma shows that in an optimal schedule, a server is only powered up or powered down if the number of jobs is increased or decreased, respectively.
\begin{lemma} \label{lemma:online:const:opt:onoff}
	Let $\hat{X}$ be an optimal schedule. If $\hat{y}_{t-1,k} > 0$ and $\hat{y}_{t,k} = 0$, then $\lambda_{t-1,k} = 1$ and $\lambda_{t,k} = 0$. Analogously, $\hat{y}_{t-1,k} = 0$ and $\hat{y}_{t,k} > 0$ implies $\lambda_{t-1,k} = 0$ and $\lambda_{t,k} = 1$.
\end{lemma}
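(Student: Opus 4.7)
The plan is to prove the first implication by a local exchange argument and note that the second is completely symmetric. For the easy half, observe that $\hat{y}_{t,k} = 0$ combined with the definition~\eqref{eqn:online:const:ytkdef} gives $k > \sum_{j'} \hat{x}_{t,j'}$, which together with feasibility $\sum_{j'} \hat{x}_{t,j'} \geq \lambda_t$ yields $\lambda_t < k$; hence $\lambda_{t,k} = 0$.

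For the nontrivial half, $\lambda_{t-1,k} = 1$, I will assume for contradiction that $\lambda_{t-1} < k$ and build a cheaper feasible schedule $\hat{X}'$ by powering the server in lane $k$ down one slot earlier. Writing $j := \hat{y}_{t-1,k} > 0$, I set $\hat{y}'_{t-1,k} := 0$ and leave every other entry of $\hat{y}$ unchanged, so $\hat{x}'_{t-1,j} = \hat{x}_{t-1,j} - 1$ while all other type-counts at all other times are untouched. Feasibility at $t-1$ holds because $\sum_{j'} \hat{x}'_{t-1,j'} \geq k - 1 \geq \lambda_{t-1}$, and feasibility elsewhere is automatic.

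The bookkeeping then reduces to three checks. First, the operating cost drops by exactly $l_j$. Second, the switching cost at $t-1$ can only shrink, since $(\hat{x}_{t-1,j} - 1 - \hat{x}_{t-2,j})^+ \leq (\hat{x}_{t-1,j} - \hat{x}_{t-2,j})^+$. Third, the switching cost at $t$ is unchanged: Lemma~\ref{lemma:online:const:opt:laneswitching} applied to lane $k$ losing type $j$ between $t-1$ and $t$ forbids any other lane from gaining type $j$ at that step, forcing $\hat{x}_{t,j} \leq \hat{x}_{t-1,j} - 1$, so both $(\hat{x}_{t,j} - \hat{x}_{t-1,j})^+$ and $(\hat{x}_{t,j} - (\hat{x}_{t-1,j} - 1))^+$ vanish. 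Hence $C(\hat{X}') \leq C(\hat{X}) - l_j$, contradicting optimality when $l_j > 0$; the corner case $l_j = 0$ is excluded by the convention that $\hat{X}$ powers servers down as early as possible.

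The second implication is mirror-symmetric: $\lambda_{t-1,k} = 0$ is immediate from~\eqref{eqn:online:const:ytkdef} and feasibility, and to prove $\lambda_{t,k} = 1$ I delay the power-up of the server by one slot, setting $\hat{y}'_{t,k} := 0$. The only new ingredient is the contrapositive of Lemma~\ref{lemma:online:const:opt:laneswitching}: applying that lemma to any lane that hypothetically loses type $j$ at step $t$ would forbid lane $k$ from gaining type $j$ there, so in fact $\hat{x}_{t,j} \geq \hat{x}_{t-1,j} + 1$, and the switching-cost analysis at $t$ and $t+1$ mirrors the one above. The delicate step is precisely this switching-cost accounting: my modification alters $\hat{x}_{\cdot,j}$ at a single time, but the $(\cdot)^+$ operator couples neighbouring slots, and the strict inequality on type-$j$ counts extracted from the No-Lane-Switching Lemma is exactly what prevents a neighbouring switching term from absorbing the $l_j$ savings.
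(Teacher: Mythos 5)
Your proof is correct and follows essentially the same route as the paper's: invoke the No-Lane-Switching Lemma to conclude that a type-$j$ server is genuinely powered down (resp.\ powered up) at the transition in question, then contradict optimality by shifting that power-down (resp.\ power-up) by one time slot, while the easy halves $\lambda_{t,k}=0$ and $\lambda_{t-1,k}=0$ follow from feasibility. You are in fact somewhat more careful than the paper, which only asserts that the operating cost drops by $l_j$; your explicit switching-cost bookkeeping via $\hat{x}_{t,j} \leq \hat{x}_{t-1,j}-1$ (resp.\ $\hat{x}_{t,j} \geq \hat{x}_{t-1,j}+1$) and your treatment of the $l_j=0$ corner case through the as-early/as-late-as-possible convention fill in details the paper leaves implicit.
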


\begin{proof}
	Let $\hat{y}_{t-1,k} > 0$ and $\hat{y}_{t,k} = 0$. By Lemma~\ref{lemma:online:const:opt:laneswitching}, we know that a server of type $j \coloneqq \hat{y}_{t-1,k}$ is powered down after time slot $t-1$. There cannot be a job in lane~$k$ at time $t$, because there is no active server in $\hat{X}$, so $\lambda_{t,k} = 0$. Assume that there is no job for the previous time slot, i.e., $\lambda_{t-1,k} = 0$. Then we get a better schedule by powering down the server in lane $k$ one time slot earlier (i.e., after time slot $t-2$), because the operating cost is reduced by $l_j$, so $\hat{X}$ would not be optimal. Therefore, $\lambda_{t-1,k} = 1$ must hold. For $\hat{y}_{t-1,k} = 0$ and $\hat{y}_{t,k} > 0$ the proof works analogously. \qedllncs
\end{proof}

The following lemma shows that in an optimal schedule in a given lane $k$, the server type does not change immediately, i.e., there must be at least one time slot, where no server is running in lane $k$.
\begin{lemma}[No immediate server changes] \label{lemma:online:const:opt:serverchanges}
	Let $\hat{X}$ be an optimal schedule. If $\hat{y}_{t-1,k} > 0$ and $\hat{y}_{t,k} > 0$, then $\hat{y}_{t-1,k} = \hat{y}_{t,k}$ holds.
\end{lemma}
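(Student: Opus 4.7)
The plan is to assume for contradiction that $\hat{y}_{t-1,k} = j > 0$ and $\hat{y}_{t,k} = j' > 0$ with $j \neq j'$, deduce from Lemma~\ref{lemma:online:const:opt:laneswitching} that at time $t$ a type-$j$ server is powered down and a type-$j'$ server is powered up globally, and then exhibit a local exchange that strictly reduces the cost of $\hat{X}$.

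The first step lifts the lane-level assumption to a global count statement. Since $\hat{y}_{t-1,k} = j \neq \hat{y}_{t,k}$, Lemma~\ref{lemma:online:const:opt:laneswitching} rules out any lane $k' \neq k$ with $\hat{y}_{t-1,k'} \neq j$ and $\hat{y}_{t,k'} = j$; thus no other lane acquires type $j$ at time $t$, so $\hat{x}_{t,j} \leq \hat{x}_{t-1,j} - 1$. Symmetrically, if some lane $k' \neq k$ had $\hat{y}_{t-1,k'} = j'$ and $\hat{y}_{t,k'} \neq j'$, applying Lemma~\ref{lemma:online:const:opt:laneswitching} at $k'$ would forbid \emph{any} lane from acquiring $j'$ at time $t$, contradicting $\hat{y}_{t,k} = j'$. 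Hence no lane loses $j'$, giving $\hat{x}_{t,j'} \geq \hat{x}_{t-1,j'} + 1$. In short, at time $t$ the optimal schedule simultaneously shuts down one type-$j$ server and starts one type-$j'$ server.

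The second step is an exchange argument, split by the order of $j$ and $j'$ and using $l_1 > \dots > l_d$ together with $\beta_1 < \dots < \beta_d$. If $j > j'$, then $l_j < l_{j'}$ and $\beta_j > \beta_{j'}$; I would modify $\hat{X}$ by leaving the extra type-$j$ server alive and never powering up the type-$j'$ server, undoing the exchange at the first time $\tau' > t$ where the original count of $j'$ drops back below $\hat{x}_{t,j'}$. The switching cost drops by $\beta_{j'}$ and the operating cost changes by $(\tau' - t)(l_j - l_{j'}) < 0$, so the total change is strictly negative. If $j < j'$, then $l_j > l_{j'}$ and $\beta_j < \beta_{j'}$; I would instead shift the power-up of the type-$j'$ server from $t$ back to the time $\tau < t$ at which the departing type-$j$ server was originally powered up, and cancel that $j$-server entirely. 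The switching cost drops by $\beta_j$ (the $\beta_{j'}$ event is merely moved, not duplicated) and the operating cost changes by $(t - \tau)(l_{j'} - l_j) < 0$, again yielding a strictly cheaper schedule.

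The main obstacle is feasibility of the modified schedule, specifically the upper bound $\hat{x}'_{t'',j} \leq m_j$ in the first case and $\hat{x}'_{t'',j'} \leq m_{j'}$ in the second. If the bound is first violated at some interior time $t^*$, then $\hat{X}$ itself powers up a server of the offending type exactly at $t^*$; truncating the exchange window to $[t, t^*-1]$ lets that pre-existing power-up absorb our extra server, so the operating-cost savings on the truncated window remain and a strictly positive net switching-cost saving (of $\beta_{j'}$ or $\beta_j$ respectively) is still realized. The resulting feasible schedule is strictly cheaper than $\hat{X}$, contradicting optimality and establishing the lemma.
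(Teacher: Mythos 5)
Your proof follows essentially the same route as the paper's: the same case split on whether $j<j'$ or $j>j'$, the same backward merge (for $j<j'$) and forward extension (for $j>j'$) exchanges, and the same device of truncating the exchange window at a pre-existing switching event of $\hat{X}$ when the capacity bound $m_j$ gets in the way. Your Step~1, which uses Lemma~\ref{lemma:online:const:opt:laneswitching} to turn the lane-level change into a genuine global power-down of a type-$j$ server and power-up of a type-$j'$ server, makes explicit something the paper leaves implicit. Two details in your truncation step need repair, though neither threatens the argument. First, in the backward case $j<j'$ the event you truncate at is a power-\emph{down} of $j'$ (the last one before $t$ inside the window), not a power-up, and the switching saving realized there is $\beta_{j'}$, not $\beta_j$. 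Second, in the forward case $j>j'$, after the kept-alive type-$j$ server absorbs the power-up at $t^*$ you must still power up a type-$j'$ server at $t^*$ --- otherwise the modified schedule has one fewer active server than $\hat{X}$ from $t^*$ onward and may violate the demand constraint $\sum_j x_{t,j}\geq\lambda_t$ --- so the net switching saving is $\beta_j$ rather than $\beta_{j'}$; it is still strictly positive, and together with the operating-cost saving on the nonempty window $[t,t^*-1]$ (the bound cannot be violated at $t$ itself, since $\hat{x}_{t,j}+1\leq\hat{x}_{t-1,j}\leq m_j$) the contradiction with optimality goes through.
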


\begin{proof}
	Assume that this lemma does not hold. Let $t$ be the first time slot and $k$ the lowest lane during this time slot where $\hat{y}_{t-1,k} > 0$ and $\hat{y}_{t,k} > 0$, but $\hat{y}_{t-1,k} \not= \hat{y}_{t,k}$. To simplify the notation, let $j \coloneqq \hat{y}_{t-1,k}$ and $j' \coloneqq \hat{y}_{t,k}$. We differ between the cases (1) $j < j'$ and (2) $j > j'$. In case 1, let $t' < t$ be the last time slot where the server type $j$ in lane $k$ was powered up. By replacing server type $j$ by $j'$ during $[t':t-1]$, we reduce the operating cost without increasing the switching cost. If this violates the condition $\hat{x}_{t,j} \leq m_j$, we instead choose the last time slot $t'' \in [t'+1:t-1]$ where $j'$ is powered down. By replacing $j$ with $j'$ during $[t''+1:t-1]$ we reduce the operating cost and save the cost for powering up server type $j'$. It can happen that $j$ has to be powered up one more time, however, the switching cost of $j'$ is smaller than the switching cost of $j$, so the total switching cost is reduced. Case 2 works analogously. We have shown that the total cost can be decreased, so $\hat{X}$ would not be an optimal schedule. Therefore, the lemma must hold. \qedllncs
\end{proof}


Given the optimal schedules $\hat{X}^u$ and $\hat{X}^v$ with $u < v$, we construct a \emph{minimum} schedule $X^{\text{min}(u,v)}$ with $y^{\text{min}(u,v)}_{t,k} \coloneqq \min \{ \hat{y}^u_{t,k}, \hat{y}^v_{t,k} \}$. Furthermore, we construct a \emph{maximum} schedule $X^{\text{max}(u,v)}$ as follows. Let $z_l(t,k)$ be the last time slot $t' < t$ with $\hat{y}^u_{t',k} = \hat{y}^v_{t',k} = 0$ (no active servers in both schedules) and let $z_r(t,k)$ be the first time slot $t' > t$ with $\hat{y}^u_{t',k} = \hat{y}^v_{t',k} = 0$. The schedule $X^{\text{max}(u,v)}$  is defined by 
\begin{equation} \label{eqn:online:const:opt:maxconstruciton}
y^{\text{max}(u,v)}_{t,k} \coloneqq \max_{t' \in [z_l(t,k) + 1 : z_r(t,k) - 1]} \{ \hat{y}^u_{t',k}, \hat{y}^v_{t',k} \} .
\end{equation}
Another way to construct $X^{\text{max}(u,v)}$ is as follows. First, we take the maximum of both schedules (analogously to $X^{\text{min}(u,v)}$). However, this can lead to situations where the server type changes immediately, so the necessary condition for optimal schedules would not be fulfilled. Therefore, we replace the lower server type by the greater one until there are no more immediate server changes. This construction is equivalent to equation~\eqref{eqn:online:const:opt:maxconstruciton}. 

We will see in Lemma~\ref{lemma:online:const:opt:stypes} that the \emph{maximum} schedule is an optimal schedule for $\mathcal{I}^v$ and fulfills the property required by algorithm~$\mathcal{A}$ in line~2, which says that the server type used in lane~$k$ at time~$t$ never decreases when the considered problem instance is expanded. To prove this property, first we have to show that $X^{\text{min}(u,v)}$ and $X^{\text{max}(u,v)}$ are feasible schedules for the problem instances $\mathcal{I}^u$ and $\mathcal{I}^v$, respectively. 

\begin{lemma} \label{lemma:online:const:opt:minmax}
	$X^{\text{min}(u,v)}$ and $X^{\text{max}(u,v)}$ are feasible for $\mathcal{I}^u$ and $\mathcal{I}^v$, respectively.
\end{lemma}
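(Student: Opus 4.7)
The plan is to verify the two feasibility requirements separately for each schedule: demand, which in the $y$-notation reads $y_{t,k} > 0$ for every $k \leq \lambda_t$, and capacity, $x_{t,j} \leq m_j$ for every type $j$. Introducing the cumulative counts $n^{\bullet}_{t,j} \coloneqq |\{k \in [m] : y^{\bullet}_{t,k} \geq j\}|$ is useful, since descending sortedness of each $y^{\bullet}_{t,\cdot}$ gives $x^{\bullet}_{t,j} = n^{\bullet}_{t,j} - n^{\bullet}_{t,j+1}$.

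For $X^{\text{min}(u,v)}$ on $\mathcal{I}^u$, demand is immediate: feasibility of the source schedules gives $\hat{y}^u_{t,k}, \hat{y}^v_{t,k} \geq 1$ whenever $k \leq \lambda_t$ and $t \leq u$, so their minimum is also $\geq 1$. For capacity, descending sortedness of each $\hat{y}^{\bullet}_{t,\cdot}$ yields $n^{\text{min}(u,v)}_{t,j} = \min\{n^u_{t,j}, n^v_{t,j}\}$, and a brief case analysis on which of $n^u_{t,j+1}, n^v_{t,j+1}$ is smaller gives $x^{\text{min}(u,v)}_{t,j} \leq \max\{x^u_{t,j}, x^v_{t,j}\} \leq m_j$.

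For $X^{\text{max}(u,v)}$ on $\mathcal{I}^v$, demand is again easy, since $t$ itself lies in the window $W_k \coloneqq [z_l(t,k)+1 : z_r(t,k)-1]$, so $y^{\text{max}(u,v)}_{t,k} \geq \hat{y}^v_{t,k}$, which is positive for $k \leq \lambda_t$. The capacity bound is the main obstacle, because $W_k$ depends on $k$ and different lanes may draw their window-maximum from different times. I will use two structural facts. First, the windows are nested in $k$: since ``both schedules have no server in lane $k$ at time $t'$'' implies ``both have none in any higher lane $k' > k$ at $t'$'' by the descending sort, the set of both-zero times only grows with $k$, and hence $W_{k_c} \subseteq \dots \subseteq W_{k_1}$ for any $k_1 < \dots < k_c$. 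Second, a witness argument: fix $t$ and $j$, let $k_1 < \dots < k_c$ be all lanes with $y^{\text{max}(u,v)}_{t,k_i} = j$, and pick a time $t^\star \in W_{k_c}$ at which the maximum over $W_{k_c}$ is attained. Assuming without loss of generality that this maximum is witnessed by $\hat{X}^u$, i.e., $\hat{y}^u_{t^\star,k_c} = j$, descending sortedness gives $\hat{y}^u_{t^\star,k_i} \geq j$ for every $i$, while $t^\star \in W_{k_c} \subseteq W_{k_i}$ together with $y^{\text{max}(u,v)}_{t,k_i} = j$ forces $\hat{y}^u_{t^\star,k_i} \leq j$. Hence $\hat{y}^u_{t^\star,k_i} = j$ for all $i \leq c$, so $\hat{X}^u$ has at least $c$ type-$j$ servers at time $t^\star$, and feasibility of $\hat{X}^u$ yields $c \leq m_j$; the case when the witness lies in $\hat{X}^v$ is symmetric.

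The hardest step is the capacity bound for $X^{\text{max}(u,v)}$: the window-based definition decouples lanes, so the pointwise cumulative-count argument that works for $X^{\text{min}(u,v)}$ does not apply directly. The nested-windows observation is precisely what makes the witness argument go through, as it lets a single time $t^\star$ chosen inside the smallest window $W_{k_c}$ pin down all $c$ lanes at once and thereby reduce the bound to the feasibility of a single source schedule.
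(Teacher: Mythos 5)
Your proof is correct, and while it shares the basic witness idea with the paper's argument for the maximum schedule, the execution is genuinely different in both halves. For $X^{\text{min}(u,v)}$, the paper argues by contradiction via the topmost and lowermost lanes of a hypothetical run of $m_j+1$ type-$j$ lanes; your cumulative-count identity $n^{\text{min}}_{t,j} = \min\{n^u_{t,j}, n^v_{t,j}\}$ (valid because sortedness makes each level set a prefix of lanes) turns this into a two-line direct computation, which is cleaner. For $X^{\text{max}(u,v)}$, the paper first introduces the pointwise maximum $\tilde{X}$, establishes its feasibility ``analogously to part (a)'', and then derives a contradiction by locating a witness time $t'$ in the topmost lane's block and propagating $\tilde{y}_{t',k^-} > j$ back to the lowest lane; you instead make the nesting of the windows $W_k$ in $k$ explicit and count type-$j$ servers of a single source schedule at one witness time $t^\star$, bypassing $\tilde{X}$ entirely. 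What the paper's route buys is that $\tilde{X}$ is also used later (in the proof of Lemma~\ref{lemma:online:const:opt:stypes}'s cost comparison and in Lemma~\ref{lemma:online:const:opt:scaledcost}), so proving its feasibility here is not wasted work; what your route buys is a shorter, self-contained argument whose only external input is the feasibility and sortedness of $\hat{X}^u$ and $\hat{X}^v$, with the key structural fact (windows shrink as the lane index grows) isolated as a reusable observation rather than buried in the phrase ``$y^{\text{max}(u,v)}_{t'',k^-} > 0$ for all $t''$ between $t$ and $t'$''.
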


\begin{proof}
	\renewcommand{\labelenumi}{(\alph{enumi})}
	\begin{enumerate}
		\item \textbf{Feasibility of $X^{\text{min}(u,v)}$} \\
		First, we will show that the demand requirements are fulfilled, so for all $k \in [m]$ and $t \in [u]$, there must be an active server in lane $k$ at time $t$, if $\lambda_{t,k} > 0$. Since $\hat{X}^u$ and $\hat{X}^v$ are feasible schedules, $\hat{y}^u_{t,k} \geq \lambda_{t,k}$ and $\hat{y}^v_{t,k} \geq \lambda_{t,k}$ holds for all $t \in [u]$ and $k \in [m]$. Thus, $y^{\text{min}(u,v)}_{t,k} =  \min \{ \hat{y}^u_{t,k}, \hat{y}^v_{t,k} \} \geq \lambda_{t,k}$ holds. 
		
		Second, we have to check if there are not more active servers in $X^{\text{min}(u,v)}$ than available, i.e. $x^{\text{min}(u,v)}_{t,j} \in [m_j]_0$ for all $t \in [u]$ and $j \in [d]$. Assume that this is not the case, so there exists a time slot $t$ and a server type $j$ with $x^{\text{min}(u,v)}_{t,j} > m_j$. Since the server types of $\hat{X}^u$ and $\hat{X}^v$ are sorted, the server types of $X^{\text{min}(u,v)}$ are sorted too. Thus, there must be at least $m_j + 1$ consecutive lanes with $y^{\text{min}(u,v)}_{t,k} = j$. Let $k^+$ be the topmost and $k^-$ be the lowermost lane with $y^{\text{min}(u,v)}_{t,k} = j$. W.l.o.g. let $\hat{y}^u_{t,k^+} = j$ (the case $\hat{y}^v_{t,k^+} = j$ works analogously), so $\hat{y}^v_{t,k^+} \geq j$. It is not possible that $\hat{y}^u_{t,k^-} = j$, because then there would be $m_j + 1$ active servers of type $j$ in $\hat{X}^u$. On the other hand, $\hat{y}^v_{t,k^-} = j$ implies that $\hat{y}^v_{t,k^+} = j$, since the server types are sorted, so there would be at least $m_j + 1$ active servers of type $j$ in $\hat{X}^v$. Thus, our assumption was wrong and $X^{\text{min}(u,v)}$ is a feasible schedule for $\mathcal{I}^u$.
		
		\item \textbf{Feasibility of $X^{\text{max}(u,v)}$} \\
		Consider the schedule $\tilde{X}$ with $\tilde{y}_{t,k} \coloneqq \max\{ \hat{y}^u_{t,k}, \hat{y}^v_{t,k} \}$ (similar to $X^{\text{max}(u,v)}$, but without eliminating immediate server changes). Analogous to part (a), it can be shown that $\tilde{X}$ is a feasible schedule for $\mathcal{I}^v$. Furthermore, we observe that the server types of $\tilde{X}$ are sorted for a given time slot, since the server types of $\hat{X}^u$ and $\hat{X}^v$ are sorted. Taking the maximum preserves this order. 
		
		The schedule $X^{\text{max}(u,v)}$ fulfills the demand requirements of $\mathcal{I}^v$, because $\tilde{y}_{t,k} > 0$ implies $y^{\text{max}(u,v)}_{t,k} > 0$.
		
		Assume that there are more active servers in $X^{\text{max}(u,v)}$ than available, i.e., there exists a time slot $t \in [v]$ and a server type $j \in [d]$ with $x^{\text{max}(u,v)}_{t,j} > m_j$. Let $k^+$ be the topmost lane with $y^{\text{max}(u,v)}_{t,k} = j$. There must be a time slot $t'$ such that $\tilde{y}_{t',k^+} = j$ and $y^{\text{max}(u,v)}_{t'',k^+} = j$ for all $t''$ between $t$ and $t'$ (i.e., $t'' \in [\min\{t, t'\} : \max \{t, t'\}]$), because otherwise $y^ {\text{max}(u,v)}_{t,k^+} = j$ cannot be fulfilled. Let $k^-$ be the lowest lane with $y^{\text{max}(u,v)}_{t,k} = j$. Since the server types in $\tilde{X}$ are sorted and since $\tilde{X}$ is a feasible schedule, $\tilde{y}_{t',k^-} > j$ holds, because $\tilde{y}_{t',k^-} = j$ would imply that $\tilde{X}$ uses server type $j$ in all lanes $k \in [k^- : k^+]$, but ${|[k^- : k^+]|} > m_j$. However, for all $t''$ between $t$ and $t'$ we have $y^{\text{max}(u,v)}_{t'',k^-} > 0$, since there is an active server in the higher lane $k^+$, so $y^{\text{max}(u,v)}_{t,k^-} = y^{\text{max}(u,v)}_{t',k^-} \geq \tilde{y}_{t',k^-} > j$ which is a contradiction to our assumption. Therefore, $X^{\text{max}(u,v)}$ is a feasible schedule for $\mathcal{I}^v$. \qedllncs
	\end{enumerate}
\end{proof}

Now, we are able to show that the maximum schedule is optimal for the problem instance $\mathcal{I}^v$.

\begin{lemma} \label{lemma:online:const:opt:stypes}
	Let $u,v \in [T]$ with $u < v$. $X^{\text{max}(u,v)}$ is optimal for $\mathcal{I}^v$. 
\end{lemma}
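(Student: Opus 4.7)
The plan is to leverage a lattice exchange identity between $\hat{X}^u$, $\hat{X}^v$ and their constructed min/max. I would first show
\begin{equation*}
C(X^{\text{min}(u,v)}) + C(X^{\text{max}(u,v)}) \le C(\hat{X}^u) + C(\hat{X}^v),
\end{equation*}
with all four schedules viewed on the common horizon $[1:v]$ by zero-padding $\hat{X}^u$ and $X^{\text{min}(u,v)}$ on $[u+1:v]$ (this preserves their costs because the cost function in equation~\eqref{eqn:online:cost} only charges power-ups, not power-downs). By Lemma~\ref{lemma:online:const:opt:minmax}, $X^{\text{min}(u,v)}$ is feasible for $\mathcal{I}^u$, so optimality of $\hat{X}^u$ yields $C(X^{\text{min}(u,v)}) \ge C(\hat{X}^u)$. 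Combining these two facts gives $C(X^{\text{max}(u,v)}) \le C(\hat{X}^v)$, and since $X^{\text{max}(u,v)}$ is also feasible for $\mathcal{I}^v$ (Lemma~\ref{lemma:online:const:opt:minmax}), this makes $X^{\text{max}(u,v)}$ optimal.

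To prove the exchange inequality I would split the total into operating and switching parts and analyse each lane $k$ independently. For the operating cost the elementary pointwise identity $l_{\min(a,b)} + l_{\max(a,b)} = l_a + l_b$, valid for $a,b \in [d]_0$ with the convention $l_0 := 0$, immediately gives $C_{\mathrm{op}}(X^{\text{min}(u,v)}) + C_{\mathrm{op}}(\tilde X) = C_{\mathrm{op}}(\hat{X}^u) + C_{\mathrm{op}}(\hat{X}^v)$, where $\tilde X$ denotes the pointwise-maximum schedule (without gap filling). Inside each connected component (in the notation defined before the statement) both $\tilde X$ and $X^{\text{max}(u,v)}$ are on at every slot, but $X^{\text{max}(u,v)}$ uses the constant largest-index type $J^\ast$ satisfying $J^\ast \ge \tilde y_{t,k}$; since the indices are sorted by decreasing $l$, we obtain $l_{J^\ast} \le l_{\tilde y_{t,k}}$ and therefore $C_{\mathrm{op}}(X^{\text{max}(u,v)}) \le C_{\mathrm{op}}(\tilde X)$, closing the operating-cost half.

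For the switching cost the analysis proceeds component by component. By Lemmas~\ref{lemma:online:const:opt:laneswitching}--\ref{lemma:online:const:opt:serverchanges}, each original optimal schedule decomposes in a lane into disjoint maximal on-intervals with a single server type per interval, so its switching cost in that lane is just $\sum \beta_{\text{type}}$ over these intervals. Within a component, $X^{\text{max}(u,v)}$ contributes a single power-up $\beta_{J^\ast}$, which can be charged to the activation in one of the originals that actually realises $J^\ast$; the remaining activations of the originals then dominate the activations of $X^{\text{min}(u,v)}$, because every new type $j$ appearing in $X^{\text{min}(u,v)}$ is forced by a simultaneous transition in both originals to indices at least $j$, so the contributed $\beta_j$ is bounded by the corresponding original $\beta$. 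Summing these matchings over all components and lanes yields the switching-cost inequality.

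The main obstacle is the switching-cost matching inside a single connected component when the on-intervals of $\hat{X}^u$ and $\hat{X}^v$ interleave in complex ways; however, the clean interval decomposition guaranteed by the preceding structural lemmas keeps this matching combinatorially tractable, and the monotonicity of $\beta_j$ in $j$ ensures every charge has the correct sign.
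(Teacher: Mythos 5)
Your overall strategy coincides with the paper's: reduce everything to the exchange inequality $C(X^{\text{min}(u,v)})+C(X^{\text{max}(u,v)})\le C(\hat X^u)+C(\hat X^v)$, handle the operating cost by the pointwise identity $l_{\min(a,b)}+l_{\max(a,b)}=l_a+l_b$ together with the observation that gap-filling only substitutes larger-index (hence cheaper-to-operate) types, and handle the switching cost per connected component by charging the single power-up of $X^{\text{max}(u,v)}$ to one original block realising the maximal type $J^\ast$ and matching the blocks of $X^{\text{min}(u,v)}$ to the remaining original blocks. The reduction and the operating-cost half are correct and essentially identical to the paper.

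The gap is in the switching-cost matching, which is precisely the step you label ``the main obstacle'' and then do not carry out. Your stated mechanism --- that every new block of $X^{\text{min}(u,v)}$ of type $j$ is forced by a \emph{simultaneous} transition in both originals to indices at least $j$ --- is false: a min-block starts at a slot $t$ where only \emph{one} of $\hat X^u,\hat X^v$ powers up, while the other may already have been running since an earlier slot. Consequently the only original block that starts together with a given min-block may be exactly the block $B^{\text{max}}$ you have already consumed to pay for the power-up of $X^{\text{max}(u,v)}$ (schematically, $\hat y^u_{\cdot,k}=(3,3,3)$ against $\hat y^v_{\cdot,k}=(0,5,5)$ inside one component: the unique min-block, of type $3$, begins at the second slot, and the only original block beginning there is the type-$5$ block realising $J^\ast$). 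Nothing in your argument rules such configurations out or explains where that min-block's $\beta_3$ is then charged, so the inequality $\beta(B^{\text{max}})+\sum_i\beta(B^{\text{min}}_i)\le\sum_i\beta(B^u_i)+\sum_i\beta(B^v_i)$ is asserted rather than proved. The paper resolves exactly this point with an explicitly asymmetric injective matching: min-blocks starting before $s^{\text{max}}$ are paired with original blocks in $\mathcal B^-$ having the same \emph{end} time, the min-block starting at $s^{\text{max}}$ is paired with the last block of $\mathcal B^-$ (which cannot be $B^{\text{max}}$), and min-blocks starting after $s^{\text{max}}$ are paired with blocks of $\mathcal B^+$ by \emph{start} time. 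You would need to supply such a matching (and verify the per-pair inequality $\beta(B^{\text{min}}_i)\le\beta(B^w_i)$ for it) before your switching-cost half closes.
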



\begin{proof}
	To simplify the notation, let $X^{\min} \coloneqq X^{\min(u,v)}$ and $X^{\max} \coloneqq X^{\max(u,v)}$.
	Since $\hat{X}^u$ and $\hat{X}^v$ are optimal schedules for $\mathcal{I}^u$ and $\mathcal{I}^v$, respectively, we know from Lemma~\ref{lemma:online:const:opt:minmax} that $C(\hat{X}^u) \leq C(X^{\minuv})$ and $C(\hat{X}^v) \leq C(X^{\maxuv})$. In the following we will show that $C(X^{\minuv}) + C(X^{\maxuv}) \leq C(\hat{X}^u) + C(\hat{X}^v)$ which implies that $X^{\minuv}$ must be an optimal schedule for $\mathcal{I}^u$ and $X^{\maxuv}$ must be an optimal schedule for $\mathcal{I}^v$. First, we compare the operating cost and afterwards the switching cost of the schedules.
	
	The operating costs of $\hat{X}^u$ and $\hat{X}^v$ in lane $k$ at time slot $t$ are 
	\begin{align*}
	l_{\hat{y}^u_{t,k}} + l_{\hat{y}^v_{t,k}} &= l_{\min \{{\hat{y}^u_{t,k}}, \hat{y}^v_{t,k}\}} + l_{\max \{{\hat{y}^u_{t,k}}, \hat{y}^v_{t,k}\} } 
	\geq l_{y^{\minuv}_{t,k}} + l_{y^{\maxuv}_{t,k}} \numberthis\label{eqn:online:const:opt:stypes:lcost}
	\end{align*}
	with $l_0 \coloneqq 0$ (if $y = 0$, then there is no active server, so the operating cost for this time slot is zero). Note that $l_{\min \{{\hat{y}^u_{t,k}}, \hat{y}^v_{t,k}\}} = l_{y^{\minuv}_{t,k}}$ by definition of $X^{\minuv}$ and $l_{\max \{{\hat{y}^u_{t,k}}, \hat{y}^v_{t,k}\} } \geq l_{y^{\maxuv}_{t,k}}$ because $\max \{{\hat{y}^u_{t,k}}, \hat{y}^v_{t,k}\} \leq y^{\maxuv}_{t,k}$. 
	
	Inequality~\eqref{eqn:online:const:opt:stypes:lcost} indicates that the sum of the operating costs of $\hat{X}^{\minuv}$  and $\hat{X}^{\maxuv}$ are smaller than or equal to the sum of the operating costs of $\hat{X}^u$ and $\hat{X}^v$.
	In the following we will show that the same holds for the switching costs.

	Each lane $k$ in the schedule $X^{\maxuv}$ is divided into blocks such that at the beginning of a block a server is powered up and at the end of the block it is powered down. In the following we consider one single block. Let $j$ denote the server type used in that block and let $a$ and $b$ denote the start and end time slot, respectively. Note that in the time slot immediately before the begin and after the end of the block in both $\hat{X}^u$ and $\hat{X}^v$ there is no active server, i.e. $\hat{y}^u_{a-1,k} = \hat{y}^v_{a-1,k} = 0$ and $\hat{y}^u_{b+1,k} = \hat{y}^v_{b+1,k} = 0$. For $t \in [a:b]$, there is always an active server in at least one of the schedules.
	
	For the time interval $[a:b]$ we divide lane $k$ of the schedules $\hat{X}^u$, $\hat{X}^v$ and $X^{\min (u,v)}$ into blocks ${B^w_1, \dots, B^w_{n_w}}$ with $w \in  \{u, v, \minuv\}$ such that at the beginning of the block a server is powered up and at the end of the block it is powered down. Let $j^w_{i}$ denote the server type used in block $B^w_{i}$ with $w \in \{u, v, \minuv\}$ and $i \in [n_w]$. 
	
	In $\hat{X}^u$ or $\hat{X}^v$ (or both) there must be one block $B^{\maxuv}$ with $j^w_{i} = j$ where $w \in \{u, v\}$ (if there are several blocks that fulfill this property, then we choose an arbitrary one). Let $s^{\maxuv}$ denote the start time slot of $B^{\maxuv}$. Let $\mathcal{B}^-$ be the blocks in $\hat{X}^u$ and $\hat{X}^v$ that start before $s^{\maxuv}$ and let $\mathcal{B}^+$ be the blocks that start after $s^{\maxuv}$. Note that $\{\mathcal{B}^-, \mathcal{B}^+, \{B^{\maxuv}\}\}$ is a partition of $\bigcup_{w \in \{u,v\}, i \in [n_w]} B^w_i$. 
	
	\ifincludefigures
	\begin{figure}[t]
	\setlength{\abovecaptionskip}{6pt plus 0pt minus 0pt}
	\setlength{\belowcaptionskip}{0pt plus 0pt minus 0pt}
	
	\centering
	\begin{tikzpicture}
		\pgfmathsetmacro{\xBegin}{0}
		\pgfmathsetmacro{\xStep}{0.6}
		
		\pgfmathsetmacro{\yBeginU}{0}
		\pgfmathsetmacro{\yBeginV}{\yBeginU - 0.9}
		\pgfmathsetmacro{\yBeginMin}{\yBeginV - 1.2}
		\pgfmathsetmacro{\yBeginMax}{\yBeginMin - 0.9}
		\pgfmathsetmacro{\yBeginTime}{\yBeginMax - 0.3}
		\pgfmathsetmacro{\ySize}{0.6}
		
		\pgfmathsetmacro{\tickHalfLength}{0.07}
		
		\tikzstyle{Bminus}=[fill=blue!25!white]
		\tikzstyle{Bplus}=[fill=green!35!white]
		\tikzstyle{Bmax}=[fill=red!30!white]
		\tikzstyle{BminusX}=[fill=blue!10!white]
		\tikzstyle{BplusX}=[fill=green!15!white]
		\tikzstyle{BmaxX}=[fill=red!10!white]
		
		\tikzstyle{connect}=[-stealth, thick]
		
		\draw[Bminus] (\xBegin + 0 * \xStep, \yBeginU) rectangle (\xBegin + 3 * \xStep, \yBeginU + \ySize)
			node[pos=.5] {3};
		\draw[Bminus] (\xBegin + 5 * \xStep, \yBeginU) rectangle (\xBegin + 9 * \xStep, \yBeginU + \ySize)
			node[pos=.5] {7};
		\draw[Bplus] (\xBegin + 12 * \xStep, \yBeginU) rectangle (\xBegin + 15 * \xStep, \yBeginU + \ySize)
			node[pos=.5] {5};
		
		\draw[Bminus] (\xBegin + 0 * \xStep, \yBeginV) rectangle (\xBegin + 1 * \xStep, \yBeginV + \ySize)
			node[pos=.5] {1};
		\draw[Bminus] (\xBegin + 2 * \xStep, \yBeginV) rectangle (\xBegin + 6 * \xStep, \yBeginV + \ySize)
			node[pos=.5] {9};
		\draw[Bmax] (\xBegin + 7 * \xStep, \yBeginV) rectangle 
			(\xBegin + 13 * \xStep, \yBeginV + \ySize)
			node[pos=.5] {13};
			
		\draw[Bplus] (\xBegin + 14 * \xStep, \yBeginV) rectangle (\xBegin + 15 * \xStep, \yBeginV + \ySize)
			node[pos=.5] {2};
		
		\draw[BminusX] (\xBegin + 0 * \xStep, \yBeginMin) rectangle (\xBegin + 1 * \xStep, \yBeginMin + \ySize)
			node[pos=.5] {1};
		\draw[BminusX] (\xBegin + 2 * \xStep, \yBeginMin) rectangle (\xBegin + 3* \xStep, \yBeginMin + \ySize)
			node[pos=.5] {3};
		\draw[BminusX] (\xBegin + 5 * \xStep, \yBeginMin) rectangle (\xBegin + 6 * \xStep, \yBeginMin + \ySize)
			node[pos=.5] {7};
		\draw[BminusX] (\xBegin + 7 * \xStep, \yBeginMin) rectangle (\xBegin + 9 * \xStep, \yBeginMin + \ySize)
			node[pos=.5] {7};
		\draw[BplusX] (\xBegin + 12 * \xStep, \yBeginMin) rectangle (\xBegin + 13 * \xStep, \yBeginMin + \ySize)
			node[pos=.5] {5};
		\draw[BplusX] (\xBegin + 14 * \xStep, \yBeginMin) rectangle (\xBegin + 15 * \xStep, \yBeginMin + \ySize)
			node[pos=.5] {2};
		
		\draw[BmaxX] (\xBegin + 0 * \xStep, \yBeginMax) rectangle (\xBegin + 15 * \xStep, \yBeginMax + \ySize)
			node[pos=.5] {13};
		
		\draw[connect, bend left=15] (\xBegin + 1 * \xStep, \yBeginMin + \ySize) to (\xBegin + 1 * \xStep, \yBeginV);
		\draw[connect, bend left=15] (\xBegin + 3 * \xStep, \yBeginMin + \ySize) to (\xBegin + 3 * \xStep, \yBeginU);
		\draw[connect, bend left=15] (\xBegin + 6 * \xStep, \yBeginMin + \ySize) to (\xBegin + 6 * \xStep, \yBeginV);
		\draw[connect, bend left=15] (\xBegin + 9 * \xStep, \yBeginMin + \ySize) to (\xBegin + 9 * \xStep, \yBeginU);
		
		\draw[connect, bend right=15] (\xBegin + 12 * \xStep, \yBeginMin + \ySize) to (\xBegin + 12 * \xStep, \yBeginU);
		\draw[connect, bend right=15] (\xBegin + 14 * \xStep, \yBeginMin + \ySize) to (\xBegin + 14 * \xStep, \yBeginV);
		
		\draw[connect] (\xBegin + 10.5 * \xStep, \yBeginMax + \ySize) to (\xBegin + 10.5 * \xStep, \yBeginV);
		
		\node[left] at (\xBegin, \yBeginU + \ySize / 2) {$\hat{y}^u_{t,k}$};
		\node[left] at (\xBegin, \yBeginV + \ySize / 2) {$\hat{y}^v_{t,k}$};
		\node[left] at (\xBegin, \yBeginMin + \ySize / 2) {${y}^{\minuv}_{t,k}$};
		\node[left] at (\xBegin, \yBeginMax + \ySize / 2) {${y}^{\maxuv}_{t,k}$};
		
		\draw[-stealth] (\xBegin - \xStep, \yBeginTime) to (\xBegin + 16 * \xStep, \yBeginTime);
		\foreach \x in {0,...,15 } {
			\draw (\xBegin + \x * \xStep, \yBeginTime - \tickHalfLength) to (\xBegin + \x * \xStep, \yBeginTime + \tickHalfLength);
		}
		\node[] at (\xBegin + 0.5 * \xStep, \yBeginTime - \tickHalfLength * 3) {$a$};
		\node[] at (\xBegin + 14.5* \xStep, \yBeginTime - \tickHalfLength * 3) {$b$};
		
	\end{tikzpicture}
	\caption{{\normalfont (figure is colored)} Visualization of the proof of Lemma~\ref{lemma:online:const:opt:stypes}. \normalfont The number inside each block refers to the used server type. The blocks of the sets $\mathcal{B}^-$ and $\mathcal{B}^+$ are marked in blue and green, respectively. Block $B^{\max}$, which contains the largest server type, is drawn in red. Note that the third block $B^{\minuv}_3$ in $X^{\minuv}$ is mapped to the second block $B^{v}_2$ in $\hat{X}^v$, but it uses the server type $j^{\minuv}_3 = \min \{j^u_2, j^v_2\} = j^u_2 = 7$ instead of $j^{v}_2 = 9$. However, since $\beta_7 < \beta_9$, the switching cost of $B^{\minuv}_3$ is smaller than the switching cost of the assigned block $B^{v}_2$.} 
	\label{fig:online:const:opt:stypes}
\end{figure}
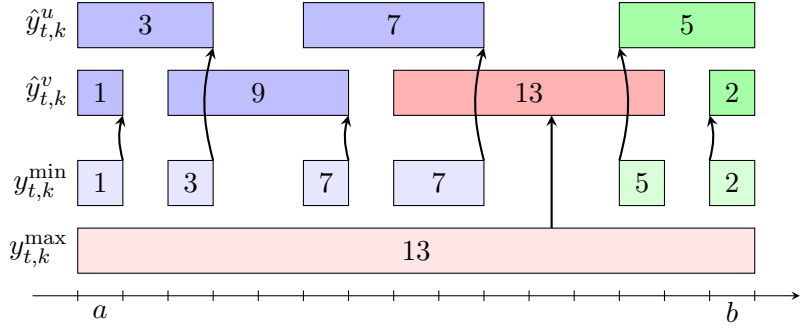
	\fi
	
	Each block $B^{\minuv}_i$ which starts before $s^{\maxuv}$ is mapped to the block in $\mathcal{B}^-$ which has the same \emph{end} time slot. There must be a block $B^{\minuv}_i$ which starts at $s^{\maxuv}$. This block is mapped to the last block in $\mathcal{B}^-$ (which cannot end before $s^{\maxuv}$, so it was not mapped yet).
	Each block $B^{\minuv}_i$ which starts after $s^{\maxuv}$ is mapped to the block in $\mathcal{B}^+$ which has the same \emph{start} time slot. The mapping procedure is visualized in Figure~\ref{fig:online:const:opt:stypes}. It ensures that all blocks $B^{\minuv}_i$ with $i \in [n_w]$ are mapped to a block of $\hat{X}^u$ or $\hat{X}^v$, but not to the block $B^{\maxuv}$. Since $X^{\minuv}$ uses the smaller server type of $\hat{X}^u$ and $\hat{X}^v$, the switching cost of $B^{\minuv}_i$ is smaller than or equal to the switching cost of the mapped block $B^w_i$ with $w \in \{u,v\}$.
	
	Let $\beta(B)$ denote the switching cost of block $B$. The switching costs of $\hat{X}^u$ and $\hat{X}^v$ in lane $k$ during the time interval $[a:b]$ are equal to 
	$
	\beta(B^{\maxuv}) + \sum_{B \in \mathcal{B}^- \cap \mathcal{B}^+} \beta(B)
	$. 
	The switching cost of $X^{\minuv}$ in lane $k$ during $[a:b]$ is at most $\sum_{B \in \mathcal{B}^- \cap \mathcal{B}^+} \beta(B)$ and the switching cost of $X^{\maxuv}$ is exactly $\beta(B^{\maxuv})$, because $X^{\maxuv}$ only consists of one single block. By using this result for all blocks of $X^{\maxuv}$ and with equation~\eqref{eqn:online:const:opt:stypes:lcost}, we get  $C(X^{\minuv}) + C(X^{\maxuv}) \leq C(\hat{X}^u) + C(\hat{X}^v)$ which implies that $X^{\maxuv}$ must be an optimal schedule for $\mathcal{I}^v$. \qedllncs
\end{proof}

\vspace{8pt}
\textbf{Feasibility}
In the following, let $\{\hat{X}^1, \dots, \hat{X}^T\}$ be optimal schedules that fulfill the inequality $\hat{y}^t_{t',k} \geq \hat{y}^{t-1}_{t',k}$ for all $t, t' \in [T]$ and $k \in [m]$ as required by algorithm~$\mathcal{A}$. Lemma~\ref{lemma:online:const:opt:stypes} ensures that such a schedule sequence exists (and also shows how to construct it). 
Before we can prove that algorithm~$\mathcal{A}$ is $2d$-competitive, we have to show that the computed schedule $X^\mathcal{A}$ is feasible.

The following lemma shows that the running times $\bar{t}_j$ are sorted in ascending order, i.e., $\bar{t}_1 \leq \dots \leq \bar{t}_d$. In other words, the higher the server type is, the longer it stays in the active state. 

\begin{lemma} \label{lemma:online:const:alg:bart}
	For $j < j'$, $\bar{t}_j \leq \bar{t}_{j'}$ holds.
\end{lemma}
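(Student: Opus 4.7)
The statement is essentially immediate from the sorting convention and the exclusion of inefficient server types, so the plan is short.

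First I would recall the two structural facts that were set up just before Algorithm~$\mathcal{A}$: the server types are indexed so that $l_1 > l_2 > \dots > l_d$, and because no server type is inefficient, the switching costs then satisfy $\beta_1 < \beta_2 < \dots < \beta_d$. Indeed, if we had $j < j'$ with $\beta_j \geq \beta_{j'}$, then together with $l_j > l_{j'}$ this would make type $j$ inefficient, contradicting the standing assumption.

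Next, fix $j < j'$. From the two sorting facts we have $\beta_j < \beta_{j'}$ and $l_j > l_{j'} \geq 0$. The plan is to divide into two cases according to whether $l_{j'}$ is zero. If $l_{j'} > 0$, then $l_j > 0$ as well, and dividing the strictly smaller numerator by the strictly larger (positive) denominator gives $\beta_j / l_j < \beta_{j'} / l_{j'}$. Applying the floor function preserves this (weakly), so $\bar{t}_j = \lfloor \beta_j / l_j \rfloor \leq \lfloor \beta_{j'}/l_{j'} \rfloor = \bar{t}_{j'}$. If instead $l_{j'} = 0$, then by the strict ordering $l_1 > \dots > l_d$ this can only occur for $j' = d$, and $\bar{t}_{j'}$ is interpreted as $+\infty$ (a server with zero operating cost is never powered down for cost reasons), so the inequality holds trivially; note also that $l_j > l_{j'} = 0$ forces $l_j > 0$, hence $\bar{t}_j$ is finite and well-defined.

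There is no genuine obstacle here; the only thing worth being careful about is the boundary case $l_{j'} = 0$, which is handled by observing it can only arise at $j' = d$ and interpreting $\bar{t}_d$ as $+\infty$ consistently with how Algorithm~$\mathcal{A}$ uses $\bar{t}$ (a server of a zero-operating-cost type is never forcibly shut down by the end-of-lifetime rule in line~4). Everything else is just monotonicity of $x \mapsto \lfloor x \rfloor$.
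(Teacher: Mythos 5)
Your proof is correct and follows essentially the same route as the paper: from the sorting convention $l_1 > \dots > l_d$ and the exclusion of inefficient types one gets $\beta_1 < \dots < \beta_d$, hence $\beta_j/l_j < \beta_{j'}/l_{j'}$ and the floor function preserves the inequality. Your extra care with the boundary case $l_{j'}=0$ (interpreting $\bar{t}_d$ as $+\infty$) is a reasonable refinement that the paper's one-line proof glosses over, but it does not change the argument.
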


\begin{proof} 
	Since $j < j'$, we have $l_j > l_{j'}$ and $\beta_j < \beta_{j'}$, so $\bar{t}_j = \left\lfloor \beta_j / l_j \right\rfloor \leq \left\lfloor \beta_{j'} / l_{j'} \right\rfloor = \bar{t}_{j'}$. \qedllncs
\end{proof}

In an optimal schedule~$\hat{X}^t$, the values $\hat{y}^t_{t',1}, \dots, \hat{y}^t_{t',m}$ are sorted in descending order by definition. This also holds for the schedule calculated by our algorithm.

\begin{lemma} \label{lemma:online:const:alg:ysorting}
	For all time slots $t \in [T]$, the values $y^{\mathcal{A}}_{t,1}, \dots, y^{\mathcal{A}}_{t,m}$ are sorted in descending order, i.e., $y^{\mathcal{A}}_{t,k} \geq y^{\mathcal{A}}_{t,k'}$ for $k < k'$. 
\end{lemma}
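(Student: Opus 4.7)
The plan is to prove the lemma by induction on $t$, carrying along the additional invariant that $e_k \geq e_{k+1}$ for all $k \in [m-1]$ at the end of each iteration of the outer loop of algorithm~$\mathcal{A}$. The base case $t=0$ is trivial, as every $y^{\mathcal{A}}_{0,k}$ and every $e_k$ is initialized to $0$. For the inductive step I fix adjacent lanes $k$ and $k+1$ and use two facts that are immediate from the definition of $\hat{y}^t_{t,k}$ in~\eqref{eqn:online:const:ytkdef} and from Lemma~\ref{lemma:online:const:alg:bart}: the optimal assignment is sorted, $\hat{y}^t_{t,k} \geq \hat{y}^t_{t,k+1}$, and the running times satisfy $\bar{t}_{\hat{y}^t_{t,k}} \geq \bar{t}_{\hat{y}^t_{t,k+1}}$ (with the convention $\bar{t}_0 = 0$).

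The argument is then a case split on which branch lanes $k$ and $k+1$ take at line~4. If both take the update branch, both invariants follow directly from the two facts above. If both take the keep branch, both invariants follow from the inductive hypothesis combined with the two facts. If lane $k$ keeps and lane $k+1$ updates, the keep condition for lane $k$ gives $y^{\mathcal{A}}_{t-1,k} \geq \hat{y}^t_{t,k}$, which combined with sortedness of $\hat{y}^t_{t,\cdot}$ yields $y^{\mathcal{A}}_{t,k} = y^{\mathcal{A}}_{t-1,k} \geq \hat{y}^t_{t,k+1} = y^{\mathcal{A}}_{t,k+1}$, and the timer invariant is equally easy.

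The interesting case is when lane $k$ updates and lane $k+1$ keeps, and here I subdivide according to why lane $k$ updates. If $y^{\mathcal{A}}_{t-1,k} < \hat{y}^t_{t,k}$, then the inductive hypothesis yields $\hat{y}^t_{t,k} > y^{\mathcal{A}}_{t-1,k} \geq y^{\mathcal{A}}_{t-1,k+1} = y^{\mathcal{A}}_{t,k+1}$, establishing sortedness. To preserve the timer invariant in this sub-case I observe that because lane $k+1$ has been in the keep branch continuously since its last update, every extension applied to $e_{k+1}$ at some past time $\tau$ had the form $\tau + \bar{t}_{\hat{y}^\tau_{\tau,k+1}}$ with $\hat{y}^\tau_{\tau,k+1} \leq y^{\mathcal{A}}_{\tau-1,k+1} = y^{\mathcal{A}}_{t-1,k+1} < \hat{y}^t_{t,k}$, so $e_{k+1} \leq (t-1) + \bar{t}_{y^{\mathcal{A}}_{t-1,k+1}} < t + \bar{t}_{\hat{y}^t_{t,k}} = e_k$. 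The alternative sub-case, $t \geq e_k$ together with $y^{\mathcal{A}}_{t-1,k} \geq \hat{y}^t_{t,k}$, is ruled out entirely by the inductive timer invariant: $t \geq e_k \geq e_{k+1}$ would contradict lane $k+1$'s keep condition $t < e_{k+1}$.

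The main obstacle is precisely this last sub-case: without the auxiliary invariant $e_k \geq e_{k+1}$ one cannot exclude the pathological configuration in which lane $k$'s timer has expired while lane $k+1$'s has not and the resulting update would break sortedness. Consequently the proof must treat the sortedness of $y^{\mathcal{A}}_{t,\cdot}$ and the sortedness of the timers $e_\cdot$ as a single, intertwined induction rather than two separate claims.
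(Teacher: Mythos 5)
Your proof is correct and follows essentially the same route as the paper's: both rest on the sortedness of $\hat{y}^t_{t,\cdot}$, the monotonicity of $\bar{t}_j$ (Lemma~\ref{lemma:online:const:alg:bart}), and the fact that the expiration times in higher lanes cannot exceed those in lower lanes. The only organizational difference is that you carry the timer ordering $e_k \geq e_{k+1}$ as an explicit strengthened induction invariant, whereas the paper derives the same fact on the fly (arguing by first counterexample and tracing back to the last time slot at which $e_{k'}$ changed); your bookkeeping makes the problematic sub-case of a timer expiring in a lower lane while a higher lane keeps running visibly impossible, which the paper handles only implicitly.
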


\begin{proof}
	Assume that Lemma~\ref{lemma:online:const:alg:ysorting} does not hold. Let $t$ be the first time slot with $y^{\mathcal{A}}_{t,k} < y^{\mathcal{A}}_{t,k'}$. 
	If $y^{\mathcal{A}}_{t,k'}$ is powered up at time $t$, then $\hat{y}^t_{t,k'} = y^{\mathcal{A}}_{t,k'}$ holds. By the definition of algorithm~$\mathcal{A}$, the server types used during a given time slot are greater than or equal to the server types used by $\hat{X}^t$, so $y^{\mathcal{A}}_{t,k} \geq \hat{y}^t_{t,k}$. The server types in $\hat{X}^t$ are sorted, so we get $y^{\mathcal{A}}_{t,k} \geq \hat{y}^t_{t,k} \geq \hat{y}^t_{t,k'} = y^{\mathcal{A}}_{t,k'}$ which contradicts our assumption.
	
	If $y^{\mathcal{A}}_{t,k'}$ is already running at time $t$, we consider the time slot~$t' < t$ when the value of $e_{k'}$ has changed for the last time. Formally, let $t' < t$ be the last time slot such that $t' + \bar{t}_{\hat{y}^{t'}_{t',k'}} > t$. We have $\hat{y}^{t'}_{t',k} \geq \hat{y}^{t'}_{t',k'}$, so by Lemma~\ref{lemma:online:const:alg:bart}, $y^\mathcal{A}_{t',k}$ runs at least as long as $y^\mathcal{A}_{t',k'}$. Therefore, the fact $y^\mathcal{A}_{t',k} \geq y^\mathcal{A}_{t',k'}$ implies $y^\mathcal{A}_{t,k} \geq y^\mathcal{A}_{t,k'}$ which is a contradiction to our assumption.  \qedllncs
\end{proof}


Now, we are able to prove the feasibility of $X^\mathcal{A}$.

\begin{lemma} \label{lemma:online:const:alg:feasible}
	The schedule $X^\mathcal{A}$ is feasible.
\end{lemma}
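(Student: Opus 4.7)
My plan is to verify the two requirements of feasibility separately, both by induction on $t$.

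For the demand constraint $\sum_{j} x^{\mathcal{A}}_{t,j} \geq \lambda_t$, I will prove the stronger pointwise invariant that $y^{\mathcal{A}}_{t,k} \geq \hat{y}^t_{t,k}$ for every lane $k$. The inductive step is immediate from inspection of lines~4--10 of Algorithm~$\mathcal{A}$: the install branch sets $y^{\mathcal{A}}_{t,k} = \hat{y}^t_{t,k}$, while the continuation branch is entered only when the install condition fails, i.e., when $y^{\mathcal{A}}_{t-1,k} \geq \hat{y}^t_{t,k}$, in which case $y^{\mathcal{A}}_{t,k} = y^{\mathcal{A}}_{t-1,k} \geq \hat{y}^t_{t,k}$. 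Summing over lanes and using the feasibility of $\hat{X}^t$ then gives $\sum_{j} x^{\mathcal{A}}_{t,j} = |\{k: y^{\mathcal{A}}_{t,k} > 0\}| \geq |\{k: \hat{y}^t_{t,k} > 0\}| \geq \lambda_t$.

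For the capacity constraint $x^{\mathcal{A}}_{t,j} \leq m_j$, I again induct on $t$ and suppose for contradiction that it first fails at time $t$ for some type $j$. By Lemma~\ref{lemma:online:const:alg:ysorting} the type-$j$ lanes of $X^{\mathcal{A}}$ at time $t$ form a contiguous interval $K = [k^-,k^+]$ with $|K| \geq m_j + 1$. For each $k \in K$, let $s_k \leq t$ be the install time of the currently running server in lane $k$, so $\hat{y}^{s_k}_{s_k,k} = j$ and the server is active throughout $[s_k,t]$; set $s^* = \max_{k\in K} s_k$. Since every server counted in $K$ is already running at $s^*$, we have $x^{\mathcal{A}}_{s^*,j} \geq |K| > m_j$. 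If $s^* < t$, this contradicts the induction hypothesis, so $s^* = t$. Then exactly one lane $k^* \in K$ is newly installed while the remaining $m_j$ lanes are continued with $y^{\mathcal{A}}_{t-1,k}=j$, yielding $x^{\mathcal{A}}_{t-1,j} = m_j$ and $|K| = m_j+1$.

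The sorting of $y^{\mathcal{A}}_{t-1,\cdot}$ combined with $y^{\mathcal{A}}_{t-1,k}=j$ on $K\setminus\{k^*\}$ forces $k^*\in\{k^-,k^+\}$. If $k^* = k^+$, the install must be triggered by condition~1 (so $y^{\mathcal{A}}_{t-1,k^+}<j$), and sorting of $\hat{X}^t$ gives $\hat{y}^t_{t,k}\geq j$ for all $k\leq k^+$; combined with the continuation condition $\hat{y}^t_{t,k}\leq y^{\mathcal{A}}_{t-1,k}=j$ on $[k^-,k^+-1]$, this yields $\hat{y}^t_{t,k}=j$ for every $k\in K$, so $\hat{x}^t_{t,j}\geq m_j+1$, contradicting feasibility of $\hat{X}^t$. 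The harder case is $k^* = k^-$ with $y^{\mathcal{A}}_{t-1,k^-}=j''>j$, where the algorithm has downgraded lane $k^-$ from type $j''$ to $j$ by expiration. I expect to resolve this by tracing the earlier install of the type-$j''$ server in lane $k^-$ at some time $s<t$ with $\hat{y}^s_{s,k^-}=j''$, using monotonicity to deduce $\hat{y}^t_{s,k^-}\geq j''$ and then applying Lemma~\ref{lemma:online:const:opt:serverchanges} and Lemma~\ref{lemma:online:const:opt:onoff} to $\hat{X}^t$ restricted to lane $k^-$, together with the maintenance of $e_{k^-}$ in the algorithm, to force the $m_j$ continued lanes $[k^-+1,k^+]$ to also carry $\hat{y}^t_{t,\cdot}=j$; once again $\hat{x}^t_{t,j}\geq m_j+1$ contradicts feasibility of $\hat{X}^t$. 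This expiration-induced downgrade is the main technical obstacle, and the only place where the deeper structural lemmas about optimal schedules are really used.
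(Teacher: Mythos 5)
Your handling of the demand constraint and of the case where the newly installed lane is the topmost one coincides with the paper's argument, but the ``expiration-induced downgrade'' case, which you explicitly leave as a sketch, is a genuine gap --- and the route you propose for it would not close. You aim to derive $\hat{x}^t_{t,j}\geq m_j+1$ by forcing the $m_j$ continued lanes to satisfy $\hat{y}^t_{t,k}=j$. Nothing forces this: on a continued lane the algorithm only guarantees $\hat{y}^t_{t,k}\leq y^{\mathcal{A}}_{t,k}=j$ (that is precisely why the lane is merely continued rather than reinstalled), so $\hat{X}^t$ may use a strictly smaller type, or no server at all, in those lanes, and no contradiction with the feasibility of $\hat{X}^t$ arises. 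Lemmas~\ref{lemma:online:const:opt:serverchanges} and~\ref{lemma:online:const:opt:onoff} do not repair this; indeed the paper's feasibility proof never invokes them. (A secondary looseness: ``exactly one lane is newly installed, hence $|K|=m_j+1$'' is unjustified, since several type-$j$ servers may be installed in the same time slot; the paper sidesteps this by case-splitting only on whether the single topmost lane of the violating group is new or continued.)

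The paper resolves the downgrade case by showing the downgrade event itself cannot happen, i.e., the contradiction lives in the algorithm's timer state rather than in $\hat{X}^t$. With $k^+$ the topmost continued lane ($y^{\mathcal{A}}_{t-1,k^+}=y^{\mathcal{A}}_{t,k^+}=j$) and $k^-<k^+$ the downgraded lane ($y^{\mathcal{A}}_{t-1,k^-}>j$ but $y^{\mathcal{A}}_{t,k^-}=j$), let $t'<t$ be the last time slot at which the timer of lane $k^+$ was pushed beyond $t$, so that $t'+\bar{t}_{\hat{y}^{t'}_{t',k^+}}>t$. Since the server types of $\hat{X}^{t'}$ are sorted, $\hat{y}^{t'}_{t',k^-}\geq\hat{y}^{t'}_{t',k^+}$, and by Lemma~\ref{lemma:online:const:alg:bart} the timer of lane $k^-$ was extended at $t'$ to at least $t'+\bar{t}_{\hat{y}^{t'}_{t',k^-}}>t$; moreover the type-$(>j)$ server occupying lane $k^-$ cannot be installed and expire within $[t':t]$ because $y^{\mathcal{A}}_{t-1,k^-}>j\geq\hat{y}^{t'}_{t',k^+}$ gives $\bar{t}_{y^{\mathcal{A}}_{t-1,k^-}}\geq\bar{t}_{\hat{y}^{t'}_{t',k^+}}$. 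Hence that server is still active at time $t$, contradicting $y^{\mathcal{A}}_{t,k^-}=j$. This timer-coupling argument, using only Lemma~\ref{lemma:online:const:alg:bart} and Lemma~\ref{lemma:online:const:alg:ysorting}, is the missing ingredient in your proposal.
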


\begin{proof}
	A schedule is feasible, if (1) there are enough active servers to handle the incoming jobs (i.e., $\sum_{j=1}^{d} x^\mathcal{A}_{t,j} \geq \lambda_t$) and (2) there are not more active servers than available (i.e., $x^\mathcal{A}_{t,j} \in [m_j]_0$).
	
	\renewcommand{\labelenumi}{(\arabic{enumi})}
	\begin{enumerate}
		\item By the definition of algorithm~$\mathcal{A}$, the server types used during a given time slot are greater than or equal to the server types used by $\hat{X}^t$, so there are at least as many active servers as in $\hat{X}^t$. Therefore, $\sum_{j=1}^{d} x^\mathcal{A}_{t,j} \geq \sum_{j=1}^{d} \hat{x}^t_{t,j} \geq \lambda_t$ holds for all $t \in [T]$. 
		
		\item Assume that there exist $t \in [T]$ and $j \in [d]$ such that  $x^\mathcal{A}_{t,j} > m_j$. Let $t$ be the first time slot where algorithm~$\mathcal{A}$ wants to use server type $j$ in lane $k$, although it is used already $m_j$ times in the lower lanes during the same time slot. Let $K$ be the set of lanes where $j$ is already used, i.e., $y^{\mathcal{A}}_{t,k'} = j$ for all $k' \in K \subseteq [k-1]$. We differ between case 1 where $y^{\mathcal{A}}_{t,k}$ is set in line 5 and case 2 where  $y^{\mathcal{A}}_{t,k}$ is set in line 8. 
		
		In the first case ($y^{\mathcal{A}}_{t,k}$ is set in line 5), we know that $\hat{X}^t$ uses $j$ in lane $k$. Since the server types of $\hat{X}^t$ are sorted, the server types of $\hat{X}^t$ in the lower lanes cannot be smaller than $k$. Formally, we have $\hat{y}^t_{t,k'} \geq j$ for all $k' \in [k]$. In the lanes where $\mathcal{A}$ uses server type~$j$, the optimal schedule $\hat{X}^t$ cannot use a greater server type. Thus, there are exactly $m_j$ lanes below lane $k$ where $\hat{y}^t_{t,k'} = j$ holds, so $\hat{X}^t$ cannot use $j$ in lane $k$. 
		
		In the second case ($y^{\mathcal{A}}_{t,k}$ is set in line 8), we know that $y^{\mathcal{A}}_{t-1,k} = j$, but $x^\mathcal{A}_{t-1,j} \leq m_j$, so there must be a lane $k' \in K$ with $y^{\mathcal{A}}_{t-1,k'} > j$ by Lemma~\ref{lemma:online:const:alg:ysorting}. We consider the time slot $t'$ when the value of $e_k$ has changed for the last time. Formally, let $t' < t$ be the last time slot such that $t' + \bar{t}_{\hat{y}^{t'}_{t',k}} > t$. We know that $y^{\mathcal{A}}_{t',k'} = y^{\mathcal{A}}_{t-1,k'}$, because $y^{\mathcal{A}}_{t-1,k'} > j$ cannot be powered up during $[t':t-1]$ and powered down at $t$, as $y^{\mathcal{A}}_{t-1,k'} > j \geq \hat{y}^{t'}_{t',k}$ implies $\bar{t}_{y^{\mathcal{A}}_{t-1,k'}} \geq \bar{t}_{\hat{y}^{t'}_{t',k}}$. Since $\hat{y}^{t'}_{t',k} \leq \hat{y}^{t'}_{t',k'}$ holds, the runtime of $y^{\mathcal{A}}_{t',k}$ in lane~$k'$ was extended at time slot $t'$, so it still runs during time slot~$t$. This is a contradiction to $y^{\mathcal{A}}_{t,k'} = j$. \qedllncs 
	\end{enumerate}%
\end{proof}


\textbf{Competitiveness}
To show the competitiveness of $\mathcal{A}$, we divide the schedule $X^\mathcal{A}$ into blocks $A_{t,k}$ with $t \in [T]$ and $k \in [m]$. Each block $A_{t,k}$ is described by its creation time $t$, its start time $s_{t,k}$, its end time $e_{t,k}$, the used server type $j_{t,k}$ and the corresponding lane~$k$. The start time is the time slot when $j_{t,k}$ is powered up and the end time is the first time slot, when $j_{t,k}$ is inactive, i.e., during the time interval $[s_{t,k} : e_{t,k} - 1]$ the server of type $j_{t,k}$  is in the active state.

There are two types of blocks: \emph{new} blocks and \emph{extended} blocks. A \emph{new} block starts when a new server is powered up, i.e., lines 5 and 6 of algorithm~$\mathcal{A}$ are executed because $y^{\mathcal{A}}_{t-1,k} < \hat{y}^t_{t,k}$ or $t \geq e_k \land y^{\mathcal{A}}_{t-1,k} > \hat{y}^t_{t,k} \land \hat{y}^t_{t,k} > 0$ (in words: the previous block ends and $\hat{X}^t$ has an active server in lane $k$, but the server type is smaller than the server type used by $\mathcal{A}$ in the previous time slot). It ends after $\bar{t}_{y^{\mathcal{A}}_{t,k}}$ time slots. Thus $s_{t,k} \coloneqq t$ and $e_{t,k} \coloneqq t + \bar{t}_{y^{\mathcal{A}}_{t,k}}$ (i.e.,  $e_{t,k}$ equals $e_k$ after executing line 6). 

An \emph{extended} block is created when the running time of a server is extended, i.e., 
the value of $e_k$ is updated, but the server type remains the same (that is $y^{\mathcal{A}}_{t-1,k} = y^{\mathcal{A}}_{t,k}$). We have $e_{t,k} \coloneqq t + \bar{t}_{\hat{y}^t_{t,k}}$ (i.e., the value of $e_k$ after executing line~9 or~6) and $s_{t,k} \coloneqq e_{t',k}$, where $A_{t',k}$ is the previous block in the same lane.
Note that an \emph{extended} block can be created not only in line~9, but also in line~6, if $t = e_k$ and $y^{\mathcal{A}}_{t-1,k} = \hat{y}^t_{t,k}$.
If line~8 and~9 are executed, but the value of $e_k$ does not change (because $t + \bar{t}_{\hat{y}^t_{t,k}}$ is smaller than or equal to the previous value of $e_k$), then the block $A_{t,k}$ does not exist. Figure~\ref{fig:online:const:atk} visualizes the definition of~$A_{t,k}$.

\ifincludefigures
\begin{figure}[t]
	\setlength{\abovecaptionskip}{6pt plus 0pt minus 0pt}
	\setlength{\belowcaptionskip}{0pt plus 0pt minus 0pt}
	
	\newcommand{\xx}{\mkern-2mu}
	\small
	\centering
	\begin{tikzpicture}

	\def\yopt{{0,1,2,1,2,1,1,0,2,0,0,0,3,2,1,2,0,0,0}}
	\def\yalg{{0,1,2,2,2,2,2,2,2,2,2,0,3,3,3,3,3,3,0}}
	
	\pgfmathsetmacro{\xBegin}{0}
	\pgfmathsetmacro{\xStep}{0.545}
	\pgfmathsetmacro{\xLeft}{-0}
	\pgfmathsetmacro{\xLength}{19}
	\pgfmathsetmacro{\tMax}{\xLength - 1}
	
	\pgfmathsetmacro{\yBeginT}{-3.2}
	\pgfmathsetmacro{\yBeginOpt}{-0.8}
	\pgfmathsetmacro{\yBeginAlg}{-1.6}
	\pgfmathsetmacro{\yBeginAtk}{-2.4}
	\pgfmathsetmacro{\yHeight}{\xStep}
	\pgfmathsetmacro{\axisExtend}{0.7}
	\pgfmathsetmacro{\yHeightCorrection}{-0.025}
	
	\pgfmathsetmacro{\tickHalfLength}{0.07}
	
	\tikzstyle{arrow}=[thick, -stealth]
	\tikzstyle{circlenode}=[draw, circle, thick, inner sep=2pt]
	\tikzstyle{blocknew}=[thick, fill=green!30!white]
	\tikzstyle{blockextended}=[thick, fill=blue!20!white]
	\tikzstyle{blocktext}=[font=\scriptsize]

	\node[left] at (\xBegin + \xLeft, \yBeginOpt) {$\hat{y}^t_{t,k} = $};
	\node[left] at (\xBegin + \xLeft, \yBeginAlg) {$y^{\mathcal{A}}_{t,k} =$};
	
	\foreach \x in {0,...,\tMax} {
		\pgfmathsetmacro{\yoptx}{\yopt[\x]}
		\pgfmathsetmacro{\yalgx}{\yalg[\x]}
		\node[] at (\xBegin + \x * \xStep + 0.5 * \xStep, \yBeginOpt) {\yoptx}; 
		\node[] at (\xBegin + \x * \xStep + 0.5 * \xStep, \yBeginAlg) {\yalgx}; 
	}
	
	\draw[arrow] (\xBegin, \yBeginT) to (\xBegin + \xStep * \xLength +  \axisExtend * \xStep, \yBeginT)
	node[below] {$t$};
	\foreach \x in {1,...,\xLength } {
		\draw (\xBegin + \x * \xStep, \yBeginT - \tickHalfLength) to (\xBegin + \x * \xStep, \yBeginT + \tickHalfLength);
		\pgfmathsetmacro{\t}{int(\x - 1)}
		\node[below] at (\xBegin + \x * \xStep - 0.5 * \xStep, \yBeginT) {\t};
	}
	
	\node[circlenode] (a1) at (\xBegin + 1.5 * \xStep, \yBeginOpt) {$\phantom{0}$};
	\draw[arrow, bend right=15] (a1) to (\xBegin + 1.5 * \xStep, \yBeginAtk);
	
	\node[circlenode] (a2) at (\xBegin + 2.5 * \xStep, \yBeginOpt) {$\phantom{0}$};
	\draw[arrow, bend right=15] (a2) to (\xBegin + 2.5 * \xStep, \yBeginAtk);
	
	\node[circlenode] (a3) at (\xBegin + 4.5 * \xStep, \yBeginOpt) {$\phantom{0}$};
	\draw[arrow, bend left=15] (a3) to (\xBegin + 5.5 * \xStep, \yBeginAtk);
	
	\node[circlenode] (a4) at (\xBegin + 6.5 * \xStep, \yBeginOpt) {$\phantom{0}$};
	\draw[arrow, bend left=15] (a4) to (\xBegin + 7.5 * \xStep, \yBeginAtk);
	
	\node[circlenode] (a5) at (\xBegin + 8.5 * \xStep, \yBeginOpt) {$\phantom{0}$};
	\draw[arrow, bend left=15] (a5) to (\xBegin + 8.5 * \xStep, \yBeginAtk);
	
	\node[circlenode] (a6) at (\xBegin + 12.5 * \xStep, \yBeginOpt) {$\phantom{0}$};
	\draw[arrow, bend right=15] (a6) to (\xBegin + 12.5 * \xStep, \yBeginAtk);
	
	\node[circlenode] (a7) at (\xBegin + 15.5 * \xStep, \yBeginOpt) {$\phantom{0}$};
	\draw[arrow, bend left=15] (a7) to (\xBegin + 17.5 * \xStep, \yBeginAtk);
	
	\draw[blocknew] (\xBegin + 1 * \xStep, \yBeginAtk) rectangle
		(\xBegin + 2 * \xStep, \yBeginAtk - \yHeight);
	\draw[blocknew] (\xBegin + 2* \xStep, \yBeginAtk) rectangle
		(\xBegin + 5 * \xStep, \yBeginAtk - \yHeight);
	\draw[blockextended] (\xBegin + 5* \xStep, \yBeginAtk) rectangle
		(\xBegin + 7 * \xStep, \yBeginAtk - \yHeight);
	\draw[blockextended] (\xBegin + 7* \xStep, \yBeginAtk) rectangle
		(\xBegin + 8 * \xStep, \yBeginAtk - \yHeight);
	\draw[blockextended] (\xBegin + 8* \xStep, \yBeginAtk) rectangle
		(\xBegin + 11* \xStep, \yBeginAtk - \yHeight);
	\draw[blocknew] (\xBegin + 12* \xStep, \yBeginAtk) rectangle
		(\xBegin + 17 * \xStep, \yBeginAtk - \yHeight);
	\draw[blockextended] (\xBegin + 17* \xStep, \yBeginAtk) rectangle
		(\xBegin + 18* \xStep, \yBeginAtk - \yHeight);
		
	\draw[dashed,-|] (\xBegin + 2 * \xStep, \yBeginAtk -  0.5 * \yHeight) --
		(\xBegin + 3 * \xStep, \yBeginAtk - 0.5 * \yHeight );
		
	\node[blocktext] at (\xBegin + 1.5 * \xStep, \yBeginAtk - 0.5 * \yHeight + \yHeightCorrection) {$A_{1\xx,\xx k}$};
	\node[blocktext] at (\xBegin + 3.5 * \xStep, \yBeginAtk - 0.5 * \yHeight + \yHeightCorrection) {$A_{2\xx,\xx k}$};
	\node[blocktext] at (\xBegin + 5.999 * \xStep, \yBeginAtk - 0.5 * \yHeight + \yHeightCorrection) {$A_{4\xx,\xx k}$};
	\node[blocktext] at (\xBegin + 7.5 * \xStep, \yBeginAtk - 0.5 * \yHeight + \yHeightCorrection) {$A_{6\xx,\xx k}$};
	\node[blocktext] at (\xBegin + 9.5 * \xStep, \yBeginAtk - 0.5 * \yHeight + \yHeightCorrection) {$A_{8\xx,\xx k}$};
	\node[blocktext] at (\xBegin + 14.5 * \xStep, \yBeginAtk - 0.5 * \yHeight + \yHeightCorrection) {$A_{12\xx,\xx k}$};
	\node[blocktext] at (\xBegin + 17.5 * \xStep, \yBeginAtk - 0.5 * \yHeight + \yHeightCorrection) {$A_{\xx1\xx5\xx,\xx k}$};
	
	\end{tikzpicture}
	\caption{{\normalfont (figure is colored)} Visualization of the definition of the blocks $A_{t,k}$ for one specific lane~$k$. The first line shows the values of $\hat{y}^t_{t,k}$ for $t \in [0:18]$ and the second line the resulting schedule of algorithm~$\mathcal{A}$. In this example, we have $(\bar{t}_1, \bar{t}_2, \bar{t}_3) = (2,3,5)$. The blocks $A_{t,k}$ are printed as rectangles that show the start and end time $s_{t,k}$ and $e_{t,k}$, e.g., $s_{4,k} = 5$ and $e_{4,k} = 7$ (note that the server is in the active state during $[s_{t,k} : e_{t,k}-1]$). The dashed line after $A_{1,k}$ indicates that $e_{1,k} = 3$, since the block is interrupted by $A_{2,k}$. \emph{New} blocks are drawn in green and \emph{extended} blocks are drawn in blue.  The arrows indicate the creation time $t$ of a block. The used server type $j_{t,k}$ is equal to $y^\mathcal{A}_{t,k}$. Blocks that are not printed do not exist, e.g., $A_{3,k}$ does not exist, because $t + \bar{t}_{\hat{y}^t_{t,k}} = 3 + \bar{t}_{\hat{y}^3_{3,k}} = 3 + \bar{t}_1 = 3 + 2 = 5 \leq e_{2,k} = 5$, so $e_k$ is not updated at $t=3$.}
	\label{fig:online:const:atk}
\end{figure}
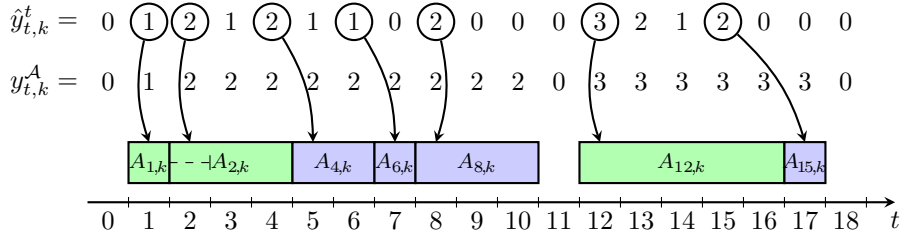
\fi

Let $d_{t,k} \coloneqq e_{t,k} - s_{t,k}$ be the duration of the block $A_{t,k}$ and let $C(A_{t,k})$ be the cost caused by $A_{t,k}$ if the block $A_{t,k}$ exists or 0 otherwise. The next lemma describes how the cost of a block can be estimated.

\begin{lemma} \label{lemma:online:const:alg:costai}
	The cost of the block $A_{t,k}$ is upper bounded by
	\begin{equation}
	C(A_{t,k}) \leq \begin{cases}
	2 \beta_{j_{t,k}} & \text{if $A_{t,k}$ is a \emph{new} block} \\
	l_{j_{t,k}}  d_{t,k} & \text{if $A_{t,k}$ is an \emph{extended} block}
	\end{cases}
	\end{equation}
\end{lemma}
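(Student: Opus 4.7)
The plan is to split the argument according to whether $A_{t,k}$ is a new or an extended block, and in each case to bound the switching and the operating contributions to $C(A_{t,k})$ separately.

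For a \emph{new} block, a server of type $j_{t,k}$ is freshly powered up at time $s_{t,k}$, which contributes exactly $\beta_{j_{t,k}}$ to the switching cost. I would then argue that this server is active for at most $d_{t,k} = \bar{t}_{j_{t,k}}$ time slots within the block: by construction $e_{t,k} = s_{t,k} + \bar{t}_{j_{t,k}}$, and if a server of a different type takes over before $e_{t,k}$, this is realized by the creation of another new block whose own power-up cost is charged separately, and not by modifying $A_{t,k}$. Hence the operating cost of $A_{t,k}$ is at most $l_{j_{t,k}} \bar{t}_{j_{t,k}}$, and the definition $\bar{t}_{j_{t,k}} = \lfloor \beta_{j_{t,k}} / l_{j_{t,k}} \rfloor \leq \beta_{j_{t,k}} / l_{j_{t,k}}$ immediately gives $l_{j_{t,k}} \bar{t}_{j_{t,k}} \leq \beta_{j_{t,k}}$. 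Adding the switching and operating contributions yields $C(A_{t,k}) \leq 2\beta_{j_{t,k}}$.

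For an \emph{extended} block, I would observe that by definition the server of type $j_{t,k}$ was already active at the end of the preceding block $A_{t',k}$ in the same lane (with $e_{t',k} = s_{t,k}$) and that only the bookkeeping variable $e_k$ is updated; no power-up takes place, so no switching cost is incurred. The operating cost is then $l_{j_{t,k}}$ per active time slot during $[s_{t,k}, e_{t,k})$, of which there are at most $d_{t,k}$, yielding the second bound $l_{j_{t,k}} d_{t,k}$.

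The only subtlety I foresee is the degenerate case $l_{j_{t,k}} = 0$, for which $\bar{t}_{j_{t,k}}$ is not a finite integer. In that case the operating cost vanishes regardless of the duration, so the new-block bound reduces to $\beta_{j_{t,k}} \leq 2\beta_{j_{t,k}}$ and the extended-block bound to $0 = l_{j_{t,k}} d_{t,k}$; I would dispose of this in one short sentence. Apart from that the proof is a direct unpacking of the definitions of $\bar{t}_j$ and of the two block types, and I do not expect any real obstacle.
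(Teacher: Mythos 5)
Your proof is correct and follows essentially the same route as the paper's: for a new block, the switching cost $\beta_{j_{t,k}}$ plus an operating cost of at most $l_{j_{t,k}}\bar{t}_{j_{t,k}} = l_{j_{t,k}}\lfloor \beta_{j_{t,k}}/l_{j_{t,k}}\rfloor \leq \beta_{j_{t,k}}$ gives $2\beta_{j_{t,k}}$, and for an extended block no power-up occurs so only the operating cost $l_{j_{t,k}}d_{t,k}$ remains. Your extra remarks on blocks being cut short by a later new block and on the degenerate case $l_{j_{t,k}}=0$ only tighten the argument and do not change its substance.
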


\begin{proof}
	If $A_{t,k}$ is a \emph{new} block, its length is $\bar{t}_{j}$ with $j = j_{t,k}$. Therefore the total cost of $A_{t,k}$ is $\beta_j + l_j \bar{t}_j = \beta_j + l_j \left\lfloor {\beta_j}/{l_j} \right\rfloor \leq 2 \beta_j$.
	If $A_{t,k}$ is an \emph{extended} block, then the server $j = j_{t,k}$ is already running, so there is no switching cost and $C(A_{t,k}) = l_j d_{t,k}$. \qedllncs
\end{proof}

To show the competitiveness of algorithm~$\mathcal{A}$, we introduce another variable that will be used in Lemmas~\ref{lemma:online:const:opt:scaledcost} and~\ref{lemma:online:const:opt:mainlemma}. Let 
\begin{equation*}
\tilde{y}^u_{t,k} \coloneqq \max_{t' \in [t:u]} \hat{y}^{t'}_{t',k} 
\end{equation*}
be the largest server type used in lane $k$ by the schedule $\hat{X}^{t'}$ at time slot $t'$ for $t' \in [t:u]$. The next lemma shows that $\tilde{y}^u_{t,k}$ is monotonically decreasing with respect to $t$ as well as $k$ and increasing with respect to $u$. 

\begin{lemma} \label{lemma:online:const:opt:ytilde}
	Let $u' \geq u$, $t' \leq t$ and $k' \leq k$. It is $\tilde{y}^u_{t,k} \leq \tilde{y}^{u'}_{t',k'}$.
\end{lemma}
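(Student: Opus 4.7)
The plan is to observe that $\tilde{y}^u_{t,k}$ is a maximum of certain quantities over a set, and that enlarging the index set, or replacing the quantities by pointwise larger ones, only enlarges the maximum. Concretely, I would chain two monotonicity steps: one that enlarges the interval of time slots and one that moves to a lower lane.

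First I would argue the lane step. For any single schedule $\hat{X}^{s}$ and any time $s$, the values $\hat{y}^{s}_{s,1}, \dots, \hat{y}^{s}_{s,m}$ are sorted in descending order by the convention made just after equation~\eqref{eqn:online:const:ytkdef}. Hence $k' \leq k$ gives $\hat{y}^{s}_{s,k'} \geq \hat{y}^{s}_{s,k}$ for every $s$. Taking the maximum over a common index set then yields
\begin{equation*}
\max_{s \in [t:u]} \hat{y}^{s}_{s,k} \;\leq\; \max_{s \in [t:u]} \hat{y}^{s}_{s,k'}.
\end{equation*}

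Second, I would argue the interval step. Since $t' \leq t$ and $u \leq u'$, the interval $[t:u]$ is contained in $[t':u']$, so enlarging the range of the max can only increase it:
\begin{equation*}
\max_{s \in [t:u]} \hat{y}^{s}_{s,k'} \;\leq\; \max_{s \in [t':u']} \hat{y}^{s}_{s,k'} \;=\; \tilde{y}^{u'}_{t',k'}.
\end{equation*}
Combining the two displayed inequalities gives $\tilde{y}^{u}_{t,k} \leq \tilde{y}^{u'}_{t',k'}$, which is the claim.

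There is essentially no hard step; the proof is just bookkeeping on the definition. The only point worth being careful about is to use the correct sorting convention for lanes (descending server types, not ascending) when passing from $k$ to $k' \leq k$, and to keep the two sources of monotonicity (interval inclusion versus lane comparison) clearly separated so that a single chain of inequalities yields the stated bound.
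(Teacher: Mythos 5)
Your proof is correct and follows essentially the same route as the paper's: both arguments reduce the claim to (i) interval inclusion $[t:u]\subseteq[t':u']$ enlarging the maximum and (ii) the descending sortedness of $\hat{y}^{s}_{s,1},\dots,\hat{y}^{s}_{s,m}$ giving the pointwise comparison between lanes $k$ and $k'\leq k$. The only cosmetic difference is that the paper splits the interval step into separate $u$- and $t$-cases before chaining, whereas you handle them in one inclusion; this changes nothing of substance.
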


\begin{proof}
	We analyze the special cases where two inequalities are fulfilled with equality.%
	\begin{enumerate}
		\item If $t' = t$ and $k' = k$ holds, then we have
		\begin{equation} \label{eqn:online:const:opt:ytilde:u}
		\tilde{y}^{u'}_{t,k} = \max_{t'' \in [t:u']} \hat{y}^{t''}_{t'',k} \leq \max_{t'' \in [t:u]} \hat{y}^{t''}_{t'',k} = \tilde{y}^{u}_{t,k}.
		\end{equation}
		
		\item Let $u' = u$ and $k' = k$. By using the definition of $\tilde{y}^u_{t,k}$, we get
		\begin{equation}\label{eqn:online:const:opt:ytilde:t}
		\tilde{y}^u_{t,k} = \max_{t'' \in [t:u]} \hat{y}^{t''}_{t'',k} \leq \max_{t'' \in [t':u]} \hat{y}^{t''}_{t'',k} = \tilde{y}^u_{t',k}
		\end{equation}
		since $[t:u] \subset [t':u]$.
		
		\item If $u' = u$ and $t' = t$, then we can use the fact that the server types of $\hat{X}^{t'}$ at time slot $t'$ are sorted, i.e., $\hat{y}^{t''}_{t'',k} \leq  \hat{y}^{t''}_{t'',k'}$ holds for all $t'' \in [T]$ and $k' \leq k$.  Therefore,
		\begin{equation}\label{eqn:online:const:opt:ytilde:k}
		\tilde{y}^u_{t,k} = \max_{t'' \in [t:u]} \hat{y}^{t''}_{t'',k} \leq \max_{t'' \in [t:u]} \hat{y}^{t''}_{t'',k'} = \tilde{y}^u_{t,k'}.
		\end{equation}
	\end{enumerate}
	
	By using equations~\eqref{eqn:online:const:opt:ytilde:u}, ~\eqref{eqn:online:const:opt:ytilde:t} and~\eqref{eqn:online:const:opt:ytilde:k}, we get $\tilde{y}^u_{t,k} \leq \tilde{y}^{u'}_{t',k} \leq \tilde{y}^{u'}_{t',k} \leq \tilde{y}^{u'}_{t',k'}$. \qedllncs
\end{proof}

The cost of schedule $X$ in lane $k$ during time slot $t$ is denoted by
\begin{equation}\label{eqn:online:const:ctkdef}
C_{t,k}(X) \coloneqq 
\begin{cases} 
l_{y_{t,k}} + \beta_{y_{t,k}} & \text{if $y_{t-1,k} \not= y_{t,k} > 0$} \\
l_{y_{t,k}} & \text{if $y_{t-1,k} = y_{t,k} > 0$} \\
0 & \text{otherwise.}
\end{cases}
\end{equation} 
The total cost of $X$ can be written as 
$C(X) = \sum_{t=1}^{T} \sum_{k=1}^{m} C_{t,k}(X)$.
%
The technical lemma below will be needed for our induction proof in  Theorem~\ref{theorem:online:const}. Given the optimal schedules $\hat{X}^u$ and $\hat{X}^v$ with $u < v$, the inequality $\sum_{k=1}^m \sum_{t=1}^{u} C_{t,k}(\hat{X}^u) \leq \sum_{k=1}^m \sum_{t=1}^{u} C_{t,k}(\hat{X}^v)$ is obviously fulfilled (because $\hat{X}^u$ is an optimal schedule for $\mathcal{I}^u$, so $\hat{X}^v$ cannot be better). The lemma below shows that this inequality still holds if the cost $C_{t,k}(\cdot)$ is scaled by $\tilde{y}^u_{t,k}$.

\begin{lemma} \label{lemma:online:const:opt:scaledcost}
	Let $u, v \in [T]$ with $u < v$. It holds that
	\begin{equation} \label{eqn:online:const:opt:scaledcost}
	\sum_{k=1}^m \sum_{t=1}^{u} \tilde{y}^u_{t,k} C_{t,k}(\hat{X}^u) \leq \sum_{k=1}^m \sum_{t=1}^{u} \tilde{y}^u_{t,k} C_{t,k}(\hat{X}^v).
	\end{equation}
\end{lemma}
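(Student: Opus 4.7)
My plan is to combine the monotonicity guaranteed by Lemma~\ref{lemma:online:const:opt:stypes} with a level-set decomposition of the weights $\tilde{y}^u_{t,k}$, thereby reducing the weighted inequality to a family of unweighted inequalities on staircase-shaped regions of the $(t,k)$-grid. As a preliminary step, I invoke Lemma~\ref{lemma:online:const:opt:stypes} to replace $\hat{X}^v$ by $X^{\max(u,v)}$, which is also optimal for $\mathcal{I}^v$. This lets me assume without loss of generality that $\hat{y}^u_{t,k} \leq \hat{y}^v_{t,k}$ for every $t \in [1:u]$ and $k \in [m]$. Crucially, this replacement does not alter $\tilde{y}^u_{t,k}$, which depends only on the fixed schedules $\hat{X}^1,\dots,\hat{X}^u$.

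Next I expand the weights via the identity $\tilde{y}^u_{t,k} = \sum_{j=1}^{d} \mathbf{1}[\tilde{y}^u_{t,k} \geq j]$ and swap the order of summation. Setting $S_j := \{(t,k) \in [1:u] \times [m] : \tilde{y}^u_{t,k} \geq j\}$, the desired inequality becomes $\sum_{j=1}^{d} \sum_{(t,k) \in S_j} C_{t,k}(\hat{X}^u) \leq \sum_{j=1}^{d} \sum_{(t,k) \in S_j} C_{t,k}(\hat{X}^v)$, so it suffices to prove the per-level bound $\sum_{(t,k) \in S_j} C_{t,k}(\hat{X}^u) \leq \sum_{(t,k) \in S_j} C_{t,k}(\hat{X}^v)$ for every $j \in [d]$. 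By Lemma~\ref{lemma:online:const:opt:ytilde}, each $S_j$ is a staircase: for every lane $k$ its slice is a prefix $[1:t^*_j(k)]$ of time slots with $t^*_j(k)$ non-increasing in $k$, and unpacking the definition of $\tilde{y}^u$ shows $t^*_j(k) = \max\{t'' \in [1:u] : \hat{y}^{t''}_{t'',k} \geq j\}$.

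For each level $j$ I construct a candidate schedule $Z^j$ for the instance $\mathcal{I}^u$ by taking $\hat{X}^v$ on $S_j$ and $\hat{X}^u$ on the complement, with a small correction along the staircase boundary to preserve the sorted-lane invariant and the per-type capacity. Since $\hat{y}^u \leq \hat{y}^v$ pointwise on $[1:u] \times [m]$, the resulting $Z^j$ dominates $\hat{X}^u$ componentwise, and the boundary correction only uses server types already present in $\hat{X}^v$, so $Z^j$ is feasible for $\mathcal{I}^u$. Applying optimality of $\hat{X}^u$ gives $C(\hat{X}^u) \leq C(Z^j)$. Since $Z^j$ and $\hat{X}^u$ agree outside $S_j$, the operating-cost contributions there cancel, and after checking the bookkeeping on the boundary one arrives at the per-level inequality.

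The main obstacle is the last step, namely controlling the boundary. Gluing $\hat{X}^v|_{S_j}$ to $\hat{X}^u|_{S_j^c}$ can in principle create a fresh power-up at time $t^*_j(k)+1$, violate the no-immediate-server-change property of Lemma~\ref{lemma:online:const:opt:serverchanges}, or push more than $m_j$ active servers of type $j$ into a single column. Resolving this cleanly is where most of the work goes: using the characterization $t^*_j(k) = \max\{t'' \in [1:u] : \hat{y}^{t''}_{t'',k} \geq j\}$ together with Lemma~\ref{lemma:online:const:opt:laneswitching}, one shows that at the boundary $\hat{X}^v$ already carries a server of type at least $j$ while $\hat{X}^u$ drops below $j$ just after, so the patch can be realized by only powering servers down; no new switching cost is attributed to $S_j^c$, and the cost bookkeeping outside $S_j$ cancels exactly. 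Once this boundary argument is in place, summing the per-level inequalities over $j \in [d]$ yields~\eqref{eqn:online:const:opt:scaledcost}. \qedllncs
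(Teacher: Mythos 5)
Your proposal is correct and follows essentially the same route as the paper: the same level-set decomposition $\tilde{y}^u_{t,k}=\sum_j\mathbf{1}[\tilde{y}^u_{t,k}\geq j]$, the same staircase structure from Lemma~\ref{lemma:online:const:opt:ytilde}, and the same hybrid schedule (take $\hat{X}^v$ on the level set, $\hat{X}^u$ outside, resolve boundary server changes in favor of the larger type) compared against $\hat{X}^u$ via optimality. The only differences are cosmetic: the paper argues by contradiction rather than directly, and it obtains the pointwise domination $\hat{y}^u_{t,k}\leq\hat{y}^v_{t,k}$ from the standing choice of the family $\{\hat{X}^1,\dots,\hat{X}^T\}$ rather than by your (redundant but harmless) appeal to Lemma~\ref{lemma:online:const:opt:stypes}.
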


\begin{proof}
	For $j \in [d]$, let 
	\begin{equation*}
	\tilde{y}^u_{t,k,j} \coloneqq  \begin{cases}
	1 & \text{if $\tilde{y}^u_{t,k} \geq j$} \\
	0 & \text{otherwise}
	\end{cases}
	\end{equation*}
	such that $\tilde{y}^u_{t,k} = \sum_{j=1}^{d} \tilde{y}^u_{t,k,j}$.
	In other words, $\tilde{y}^u_{t,k,j} = 1$ means that the largest server type in the sequence $(\hat{y}^{t'}_{t',k})_{t' \in [t:u]}$ is at least $j$. 
	
	To deduce a contradiction, we assume that there exists a $j \in [d]$ such that 
	\begin{equation} \label{eqn:online:const:opt:scaledcost:assume}
	\sum_{k=1}^m \sum_{t=1}^{u} \tilde{y}^u_{t,k,j} C_{t,k}(\hat{X}^u) > \sum_{k=1}^m \sum_{t=1}^{u}  \tilde{y}^u_{t,k, j} C_{t,k}(\hat{X}^v) .
	\end{equation}
	We consider the schedule $\bar{X}^u$ which is constructed in two steps. First, we insert the schedule $\hat{X}^v$ for all lanes $k$ and time slots $t$ where $\tilde{y}^u_{t,k, j} = 1$ holds into $\hat{X}^u$. Afterwards, we eliminate immediate server changes after $\tilde{y}^u_{t,k, j}$ switches from 1 to 0 by using the greater server type (equivalent to the construction of the \emph{maximum} schedule).  By Lemma~\ref{lemma:online:const:opt:ytilde}, $\tilde{y}^u_{t,k,j} = 1$ implies $\tilde{y}^u_{t',k',j} = 1$ for all $j \in [d]$, $t' \leq t$ and $k' \leq k$, so if $\bar{X}^u$ uses the schedule $\hat{X}^v$ for a given time slot $t$ and lane~$k$, then it also uses $\hat{X}^v$ for the previous time slots $t' \leq t$ and lanes $k' \leq k$. 
	
	The schedule $\bar{X}^u$ is feasible for $\mathcal{I}^u$, because for a given time slot~$t$ and lane~$k$, the server type used by $\hat{X}^v$ is greater than or equal to the server type used by $\hat{X}^u$, so in $\bar{X}^u$ there cannot be more active servers than available. Furthermore the demand requirements are obviously fulfilled. 
	
	The total cost of $\bar{X}^u$ is
	\begin{align*}
	C(\bar{X}^u)  &= \sum_{k=1}^m \sum_{t=1}^{u} \tilde{y}^u_{t,k,j} C_{t,k}(\bar{X}^u) + \sum_{k=1}^m \sum_{t=1}^{u} (1-\tilde{y}^u_{t,k,j}) C_{t,k}(\bar{X}^u) \\
	&\leq \sum_{k=1}^m \sum_{t=1}^{u} \tilde{y}^u_{t,k,j} C_{t,k}(\hat{X}^v) + \sum_{k=1}^m \sum_{t=1}^{u} (1-\tilde{y}^u_{t,k,j}) C_{t,k}(\hat{X}^u) \\
	&\stackrel{\eqref{eqn:online:const:opt:scaledcost:assume}}{<} \sum_{k=1}^m \sum_{t=1}^{u} \tilde{y}^u_{t,k,j} C_{t,k}(\hat{X}^u) + \sum_{k=1}^m \sum_{t=1}^{u} (1-\tilde{y}^u_{t,k,j}) C_{t,k}(\hat{X}^u) \\
	&= C(\hat{X}^u)
	\end{align*}	
	In the first step, we simply split $C(\bar{X}^u)$ into two parts (note that $\tilde{y}^u_{t,k,j}  \in \{0, 1\}$). The first inequality uses the definition of $\bar{X}^u$: for $\tilde{y}^u_{t,k,j} = 1$, we have $C_{t,k}(\bar{X}^u) = C_{t,k}(\hat{X}^v)$ and for $\tilde{y}^u_{t,k,j} = 0$, we have $C_{t,k}(\bar{X}^u) \leq C_{t,k}(\hat{X}^u)$, because $\bar{X}^u$ can use greater server types with lower operating costs than $\hat{X}^u$ due to the elimination of immediate server changes. 
	The last inequality uses our assumption given by equation~\eqref{eqn:online:const:opt:scaledcost:assume}. 
	
	We have shown that $C(\bar{X}^u) < C(\hat{X}^u)$, however, this is a contradiction to the fact that $\hat{X}^u$ is an optimal schedule. Therefore, our assumption was wrong and for all $j \in [d]$,
	\begin{equation*}
	\sum_{k=1}^m \sum_{t=1}^{u} \tilde{y}^u_{t,k,j} C_{t,k}(\hat{X}^u) \leq \sum_{k=1}^m \sum_{t=1}^{u}  \tilde{y}^u_{t,k, j} C_{t,k}(\hat{X}^v) 
	\end{equation*}
	holds. By summarizing these inequalities for all $j \in [d]$ and by using the fact $\sum_{j=1}^{d} \tilde{y}^u_{t,k,j} = \tilde{y}^u_{t,k}$, we get 
	\begin{equation*} 
	\sum_{k=1}^m \sum_{t=1}^{u} \tilde{y}^u_{t,k} C_{t,k}(\hat{X}^u) \leq \sum_{k=1}^m \sum_{t=1}^{u} \tilde{y}^u_{t,k} C_{t,k}(\hat{X}^v). \qedhere
	\end{equation*}
\end{proof}

The following lemma shows how the cost of a single block $A_{v,k}$ can be folded into the term $2\sum_{t=1}^{v-1} \tilde{y}^{v-1}_{t,k} C_{t,k}(\hat{X}^v) $ which is the right hand side of equation~\eqref{eqn:online:const:opt:scaledcost} given in the previous lemma with $u = v-1$.

\begin{lemma} \label{lemma:online:const:opt:mainlemma}
	For all lanes $k \in [m]$ and time slots $v \in [T]$, it is
	\begin{equation} \label{eqn:online:const:opt:mainlemma}
		 2\sum_{t=1}^{v-1} \tilde{y}^{v-1}_{t,k} C_{t,k}(\hat{X}^v) + C(A_{v,k}) \leq  2 \sum_{t=1}^{v} \tilde{y}^v_{t,k} C_{t,k}(\hat{X}^v) .
	\end{equation}
\end{lemma}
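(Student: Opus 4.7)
The plan is to first rewrite the claim by separating the $t=v$ term from the others on the right, reducing it to
\[
 C(A_{v,k}) \;\leq\; 2\tilde{y}^{v}_{v,k}\,C_{v,k}(\hat X^v)\;+\;2\sum_{t=1}^{v-1}\bigl(\tilde{y}^{v}_{t,k}-\tilde{y}^{v-1}_{t,k}\bigr)\,C_{t,k}(\hat X^v).
\]
Using the identity $\tilde y^{v}_{t,k}=\max\{\tilde y^{v-1}_{t,k},\hat y^{v}_{v,k}\}$, every summand on the right is nonnegative; each nonzero difference is an integer of value $\geq 1$, and by Lemma~\ref{lemma:online:const:opt:ytilde} the set of indices $t$ with nonzero difference forms a contiguous suffix of $[1:v-1]$. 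So the strategy is to match the cost $C(A_{v,k})$ against a few well-chosen terms on the right.

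If the block $A_{v,k}$ does not exist, $C(A_{v,k})=0$ and the inequality follows from nonnegativity. Otherwise set $j^*\coloneqq\hat y^v_{v,k}>0$ and invoke Lemmas~\ref{lemma:online:const:opt:laneswitching}--\ref{lemma:online:const:opt:serverchanges}: in $\hat X^v$ the type-$j^*$ server active in lane $k$ at time $v$ occupies a contiguous interval $[a:v]$ with $\hat y^v_{a-1,k}=0$, and monotonicity of the schedule sequence $\hat y^{t'}_{t',k}\leq\hat y^v_{t',k}$ caps every term of the defining maximum, giving $\tilde y^v_{t,k}=j^*$ for all $t\in[a:v]$. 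If $A_{v,k}$ is a new block, Lemma~\ref{lemma:online:const:alg:costai} gives $C(A_{v,k})\leq 2\beta_{j^*}$: the easy subcase $a=v$ is closed by the single term $2\tilde y^v_{v,k}C_{v,k}(\hat X^v)=2j^*(l_{j^*}+\beta_{j^*})\geq 2\beta_{j^*}$, and for $a<v$ I would argue that the new-block trigger forces $\tilde y^{v-1}_{a,k}<j^*$, so that $\tilde y^v_{a,k}-\tilde y^{v-1}_{a,k}\geq 1$, and multiplying by $C_{a,k}(\hat X^v)=l_{j^*}+\beta_{j^*}\geq\beta_{j^*}$ supplies the required $2\beta_{j^*}$. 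If $A_{v,k}$ is an extended block, the definition of $e_{v,k}$ together with the fact that the previous value of $e_k$ was $\geq v$ yields $d_{v,k}\leq\bar t_{j^*}$, so $C(A_{v,k})\leq l_{j_{v,k}}d_{v,k}\leq l_{j^*}\bar t_{j^*}\leq\beta_{j^*}$; this is financed by the $t=v$ term when $\hat y^v_{v-1,k}\neq j^*$ (then $C_{v,k}(\hat X^v)$ already contains $\beta_{j^*}$), and by the same $[a:v]$ interval argument as for new blocks when $\hat y^v_{v-1,k}=j^*$.

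The main obstacle is establishing the key inequality $\tilde y^{v-1}_{a,k}<j^*$ in every nontrivial subcase. This requires a careful combined argument using the monotonicity of the chosen schedule sequence $(\hat X^t)$, the ordering $\bar t_1\leq\cdots\leq\bar t_d$ from Lemma~\ref{lemma:online:const:alg:bart}, and the observation that algorithm $\mathcal{A}$ can only reach a state enabling a new block at $v$ (via the trigger $y^{\mathcal{A}}_{v-1,k}<j^*$ or the expiration clause) after a server of the relevant type has fully elapsed its runtime, which bounds how much slack $\tilde y^{v-1}$ could have built up between $a$ and $v-1$.
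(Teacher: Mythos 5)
Your rearrangement into the differences $\tilde y^v_{t,k}-\tilde y^{v-1}_{t,k}$ is sound, and the degenerate subcases (nonexistent block, new block with $a=v$, extended block with a power-up at $v$ in $\hat X^v$) go through. The genuine gap is the charging scheme in the remaining subcases: you finance the block cost from a \emph{single} term, namely the power-up of $\hat X^v$'s block at its start time $a$, which requires $\tilde y^{v-1}_{a,k}<j^*$. That inequality is false in general. Take $d=1$, a single job at time $1$, no jobs during $[2:\bar t_1+1]$, and a job at $v=\bar t_1+2$, with $\beta_1/l_1\notin\mathbb{N}$. The algorithm's server expires after slot $\bar t_1$, so a new block is created at $v$; but $\hat X^v$ keeps its server on throughout $[1:v]$ (the gap costs $l_1\bar t_1<\beta_1$, cheaper than restarting), so $a=1$, there is no power-up anywhere in $[2:v]$, and $\tilde y^{v-1}_{1,k}\ge\hat y^1_{1,k}=1=j^*$, making your chosen difference term zero. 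The $t=v$ term contributes only $2l_1$, far short of the $2\beta_1$ you must cover. So the step you yourself flag as ``the main obstacle'' is not a technicality to be patched; the claim it rests on is false.

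The missing idea is to charge against the \emph{accumulated} cost of $\hat X^v$ over the window $[v-\bar t_{j^*}:v]$ (for new blocks; $[v-d_{v,k}+1:v]$ for extended blocks). The new-block trigger guarantees $\hat y^t_{t,k}\le j^*-1$ for every $t$ in that window except $v$, hence $\tilde y^v_{t,k}-\tilde y^{v-1}_{t,k}\ge 1$ throughout it, and the key dichotomy is: either $\hat X^v$ powers up type $j^*$ somewhere in the window (contributing $\beta_{j^*}$ to $\sum_t C_{t,k}(\hat X^v)$), or type $j^*$ runs through all $\bar t_{j^*}+1$ slots of the window and the operating cost alone is $l_{j^*}\left(\left\lfloor\beta_{j^*}/l_{j^*}\right\rfloor+1\right)\ge\beta_{j^*}$. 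Doubling then covers the $2\beta_{j^*}$ of a new block. For extended blocks you additionally give away too much by relaxing $l_{j_{v,k}}d_{v,k}$ to $\beta_{j^*}$: the window of positive differences has length only $d_{v,k}-1$, so the available budget is roughly $2l_{j'}d_{v,k}$ with $j'=\hat y^v_{v,k}$, which can be far below $\beta_{j^*}$ when $d_{v,k}\ll\bar t_{j^*}$; you must keep the tighter bound $C(A_{v,k})\le l_{j_{v,k}}d_{v,k}\le l_{j'}d_{v,k}$ and match it against $\sum_{t=v-d_{v,k}+1}^{v}C_{t,k}(\hat X^v)\ge l_{j'}d_{v,k}$, which again follows from the same power-up-or-operating-cost dichotomy.
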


\begin{proof}
	If the block $A_{v,k}$ does not exists, equation~\eqref{eqn:online:const:opt:mainlemma} holds by Lemma~\ref{lemma:online:const:opt:ytilde} and $C(A_{v,k}) = 0$. 
	
	If $A_{v,k}$ is a \emph{new} block, then $C(A_{v,k}) \leq 2 \beta_{j}$ with $j \coloneqq j_{v,k} = \hat{y}^v_{v,k}$ by Lemma~\ref{lemma:online:const:alg:costai}. Since $A_{v,k}$ is a \emph{new} block, server type $j$ was not used in the last time slot of the last $\bar{t}_j$ schedules, i.e., $\hat{y}^{t}_{t,k} \leq j - 1$ for $t \in [v - \bar{t}_j : v-1]$. If $\hat{y}^{v-\bar{t}_j}_{v - \bar{t}_j,k} = j$ would hold, then $y^{\mathcal{A}}_{{v-1},k} = j$ and there would be an \emph{extended} block at time slot $v$.
	By using the facts above and the definition of $\tilde{t}^v_{t,k}$, for $t \in [v - \bar{t}_j : v-1]$, we get 
	\begin{align*}
	\tilde{y}^{v-1}_{t,k} &= \max_{t' \in [t:v-1]} \hat{y}^{t'}_{t',k} \leq j-1 = \hat{y}^v_{v,k} - 1 
	\leq  \max_{t' \in [t:v]} \hat{y}^{t'}_{t',k} - 1= \tilde{y}^v_{t,k} - 1 . \numberthis\label{eqn:online:const:opt:mainlemma:ytilde:minusone}
	\end{align*}
	By using Lemma~\ref{lemma:online:const:opt:ytilde} and equation~\eqref{eqn:online:const:opt:mainlemma:ytilde:minusone}, we can estimate the first sum in~\eqref{eqn:online:const:opt:mainlemma}:
	\begin{align*}
	\sum_{t=1}^{v-1} \tilde{y}^{v-1}_{t,k} C_{t,k}(\hat{X}^v) 
	\stackrel{L\ref{lemma:online:const:opt:ytilde},\eqref{eqn:online:const:opt:mainlemma:ytilde:minusone}}{\leq}& 
	\sum_{t=1}^{v - \bar{t}_j - 1} \tilde{y}^{v}_{t,k} C_{t,k}(\hat{X}^v) + \sum_{t=v - \bar{t}_j}^{v-1} (\tilde{y}^{v}_{t,k}-1)C_{t,k}(\hat{X}^v) \\
	\stackrel{\phantom{L\ref{lemma:online:const:opt:ytilde},\eqref{eqn:online:const:opt:mainlemma:ytilde:minusone}}}{\leq}&
	\sum_{t=1}^{v} \tilde{y}^{v}_{t,k} C_{t,k}(\hat{X}^v) - \beta_j .
	\numberthis\label{eqn:online:const:opt:mainlemma:sumsplit:a}
	\end{align*}
	For the second inequality, we add $(\tilde{y}^v_{v,k} - 1) \cdot C_{v,k}(\hat{X}^v) \geq 0$ and use $\sum_{t=v - \bar{t}_j}^{v}  C_{t,k}(\hat{X}^v)  \geq \beta_j$ which holds because
	either $j$ was powered up in $\hat{X}^v$ during $[v - \bar{t}_j : v]$ (then there is the switching cost of $\beta_j$) or $j$ runs for $\bar{t}_j + 1$ time slots resulting in an operating cost of
	$
	l_j \cdot (\bar{t}_j + 1) = l_j \cdot \left( \left\lfloor \beta_j / l_j \right\rfloor + 1 \right) \geq \beta_j
	$. 
	Altogether, we get (beginning from the left hand side of equation~\eqref{eqn:online:const:opt:mainlemma} that has to be shown)
	\begin{align*}
	2\sum_{t=1}^{v-1} \tilde{y}^{v-1}_{t,k} C_{t,k}(\hat{X}^v) + C(A_{v,k}) 
	\stackrel{\eqref{eqn:online:const:opt:mainlemma:sumsplit:a},L\ref{lemma:online:const:alg:costai}}{\leq}&
	2\sum_{t=1}^{v} \tilde{y}^{v}_{t,k} C_{t,k}(\hat{X}^v) - 2\beta_j + 2 \beta_j \\
	\leq \hspace{8pt} & 2\sum_{t=1}^{v} \tilde{y}^{v}_{t,k} C_{t,k}(\hat{X}^v).
	\end{align*}
	
	If $A_{v,k}$ is an \emph{extended} block, then $C(A_{v,k}) \leq l_j d$ with $j \coloneqq j_{v,k}$ and $d \coloneqq d_{t,k}$ by Lemma~\ref{lemma:online:const:alg:costai}. 
	Let $j' \coloneqq \hat{y}^v_{v,k}$ be the server type in $\hat{X}^v$ that provoked the extended block. For each $t \in [v - d + 1 : v - 1]$, $\hat{y}^t_{t,k} \leq j' - 1$ holds, because otherwise the duration of $A_{v,k}$ would be smaller than~$d$. 
	Analogously to \emph{new} blocks, equation~\eqref{eqn:online:const:opt:mainlemma:ytilde:minusone} holds for all $t \in [v - d + 1 : v-1]$.
	The first sum of equation~\eqref{eqn:online:const:opt:mainlemma} is at most
	\begin{align*}
	\sum_{t=1}^{v-1} \tilde{y}^{v-1}_{t,k} C_{t,k}(\hat{X}^v) 
	\stackrel{L\ref{lemma:online:const:opt:ytilde},\eqref{eqn:online:const:opt:mainlemma:ytilde:minusone}}{\leq}& \sum_{t=1}^{v - d} \tilde{y}^{v}_{t,k} C_{t,k}(\hat{X}^v) + \sum_{t=v - d + 1}^{v-1} (\tilde{y}^{v}_{t,k}-1)C_{t,k}(\hat{X}^v) \\
	\stackrel{\phantom{(11),(11)}}{=}& \sum_{t=1}^{v-1} \tilde{y}^{v}_{t,k} C_{t,k}(\hat{X}^v) - \sum_{t=v - d + 1}^{v-1} C_{t,k}(\hat{X}^v)  \\
	\stackrel{\phantom{(11),(11)}}{\leq}& \sum_{t=1}^{v} \tilde{y}^{v}_{t,k} C_{t,k}(\hat{X}^v) - \sum_{t=v - d + 1}^{v} C_{t,k}(\hat{X}^v). \numberthis\label{eqn:online:const:opt:mainlemma:sumsplit:b}
	\end{align*}
	
	The last term in \eqref{eqn:online:const:opt:mainlemma:sumsplit:b} satisfies
	\begin{equation}\label{eqn:online:const:opt:mainlemma:sumcost:b}
	\sum_{t=v - d + 1}^{v} C_{t,k}(\hat{X}^v) \geq l_{j'} d,
	\end{equation}
	because either $j'$ runs for $d$ time slots in $\hat{X}^v$ during $[v - d + 1: v]$ (then the operating cost is exactly $l_{j'} d$) or $j'$ was powered up during this interval resulting in a cost of
	\begin{equation*}
	\beta_{j'} \geq l_{j'} \left\lfloor \frac{\beta_{j'}}{l_{j'}} \right\rfloor = l_{j'} \bar{t}_{j'} \geq l_{j'} d 
	\end{equation*}
	as the duration $d$ of block $A_{v,k}$ is upper bounded by $\bar{t}_j$. 
	
	Altogether, we get
	\begin{align*}
	& 2\sum_{t=1}^{v-1} \tilde{y}^{v-1}_{t,k} C_{t,k}(\hat{X}^v) + C(A_{v,k}) \\
	\stackrel{\eqref{eqn:online:const:opt:mainlemma:sumsplit:b},L\ref{lemma:online:const:alg:costai}}{\leq}& 2\sum_{t=1}^{v} \tilde{y}^{v}_{t,k} C_{t,k}(\hat{X}^v) - 2\sum_{t=v - d + 1}^{v} C_{t,k}(\hat{X}^v) +  l_j d \\
	\stackrel{\eqref{eqn:online:const:opt:mainlemma:sumcost:b}}{\leq} \hspace{8pt} & 2\sum_{t=1}^{v} \tilde{y}^{v}_{t,k} C_{t,k}(\hat{X}^v). 
	\end{align*}
	The last inequality holds, because $j' = \hat{y}^v_{v,k} \leq y^\mathcal{A}_{v,k} = j$ implies $l_{j'} d \geq l_j d$. \qedllncs
\end{proof}

\begin{theorem} \label{theorem:online:const}
	Algorithm~$\mathcal{A}$ is $2d$-competitive.
\end{theorem}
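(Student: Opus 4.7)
The plan is to prove by induction on $v \in [T]_0$ that
\begin{equation*}
\sum_{v'=1}^{v} \sum_{k=1}^{m} C(A_{v',k}) \;\leq\; 2\sum_{k=1}^{m}\sum_{t=1}^{v} \tilde{y}^{v}_{t,k} C_{t,k}(\hat{X}^{v}),
\end{equation*}
which combined with $\tilde{y}^{v}_{t,k} \leq d$ will give the desired competitive ratio. The base case $v=0$ is trivial since both sides vanish. The main work lies in the inductive step, where the two big technical ingredients Lemma~\ref{lemma:online:const:opt:scaledcost} and Lemma~\ref{lemma:online:const:opt:mainlemma} are combined.

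For the inductive step, I would first apply the induction hypothesis at time $v-1$, obtaining a bound in terms of $\hat{X}^{v-1}$. Then I apply Lemma~\ref{lemma:online:const:opt:scaledcost} with $u = v-1$ to replace $C_{t,k}(\hat{X}^{v-1})$ by $C_{t,k}(\hat{X}^{v})$ on the right hand side without increasing it. Adding $\sum_{k=1}^{m} C(A_{v,k})$ to both sides, the resulting right hand side is exactly $\sum_{k=1}^{m}$ of the left hand side of inequality~\eqref{eqn:online:const:opt:mainlemma}. Applying Lemma~\ref{lemma:online:const:opt:mainlemma} lane by lane and summing over $k \in [m]$ yields the desired bound at time $v$.

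Once the induction is complete, I specialize to $v = T$. Since the blocks $\{A_{v,k}\}_{v \in [T], k \in [m]}$ partition the costs incurred by the algorithm (a \emph{new} block accounts for the switching cost and the operating cost of a newly powered-up server over $\bar{t}_{j_{v,k}}$ time slots, while an \emph{extended} block accounts for the operating cost during an extension interval), we have $C(X^{\mathcal{A}}) = \sum_{v=1}^{T}\sum_{k=1}^{m} C(A_{v,k})$. Using $\tilde{y}^{T}_{t,k} \leq d$ for every $t$ and $k$, we conclude
\begin{equation*}
C(X^{\mathcal{A}}) \;\leq\; 2\sum_{k=1}^{m}\sum_{t=1}^{T} \tilde{y}^{T}_{t,k} C_{t,k}(\hat{X}^{T}) \;\leq\; 2d\sum_{k=1}^{m}\sum_{t=1}^{T} C_{t,k}(\hat{X}^{T}) \;=\; 2d\cdot C(\hat{X}^{T}).
\end{equation*}

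The main obstacle has essentially been resolved already in Lemma~\ref{lemma:online:const:opt:mainlemma} and Lemma~\ref{lemma:online:const:opt:scaledcost}: the former is what lets a block's cost be charged against the monotonically growing quantity $\tilde{y}^{v}_{t,k} C_{t,k}(\hat{X}^{v})$, and the latter is what makes the telescoping over $v$ work across different reference schedules $\hat{X}^{v-1}$ and $\hat{X}^{v}$. Once these are in hand, the remaining step is a clean two-line induction plus the observation $\tilde{y}^{v}_{t,k} \leq d$, which is the single place where the factor $d$ in the competitive ratio appears.
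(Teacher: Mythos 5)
Your proposal follows essentially the same route as the paper: the identical induction hypothesis $C_v(X^{\mathcal{A}})\leq 2\sum_{k}\sum_{t\leq v}\tilde{y}^v_{t,k}C_{t,k}(\hat{X}^v)$, the same chaining of Lemma~\ref{lemma:online:const:opt:scaledcost} (to pass from $\hat{X}^{v-1}$ to $\hat{X}^v$) and Lemma~\ref{lemma:online:const:opt:mainlemma} (to absorb the block costs lane by lane), and the same final bound via $\tilde{y}^T_{t,k}\leq d$. The argument is correct and matches the paper's proof of Theorem~\ref{theorem:online:const}.
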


\begin{proof}
	The feasibility of $X^\mathcal{A}$ was already proven in Lemma~\ref{lemma:online:const:alg:feasible}, so we have to show that $C(X^\mathcal{A}) \leq 2d \cdot C(\hat{X}^T)$.
	Let 
	$C_v(X^\mathcal{A}) \coloneqq \sum_{t=1}^{v} \sum_{k=1}^m C(A_{t,k})$
	denote the cost of algorithm~$\mathcal{A}$ up to time slot $v$. We will show by induction that 
	\begin{equation} \label{eqn:online:const:induction}
	C_v(X^\mathcal{A}) \leq 2\sum_{k=1}^m \sum_{t=1}^{v} \tilde{y}^v_{t,k} C_{t,k}(\hat{X}^v)
	\end{equation} 
	holds for all $v \in [T]_0$. 
	
	For $v = 0$, we have no costs for both $X^\mathcal{A}$ and $\hat{X}^v$, so inequality~\eqref{eqn:online:const:induction} is fulfilled. Assume that inequality~\eqref{eqn:online:const:induction} holds for $v-1$. By using the induction hypothesis as well as Lemmas~\ref{lemma:online:const:opt:scaledcost} and~\ref{lemma:online:const:opt:mainlemma}, we get
	\begin{align*}
	C_v(X^\mathcal{A}) &\stackrel{\phantom{L7,L8}}{=}
	C_{v-1}(X^\mathcal{A})  + \sum_{k=1}^m C(A_{v,k}) \\
	&\hspace{3pt}\stackrel{\text{I.H.}}{\leq}\hspace{3pt}
	2\sum_{k=1}^m \sum_{t=1}^{v-1} \tilde{y}^{v-1}_{t,k} C_{t,k}(\hat{X}^{v-1}) +  \sum_{k=1}^m C(A_{v,k}) \\
	&\stackrel{L\ref{lemma:online:const:opt:scaledcost},L\ref{lemma:online:const:opt:mainlemma}}{\leq } 2\sum_{k=1}^m \sum_{t=1}^{v} \tilde{y}^v_{t,k} C_{t,k}(\hat{X}^v) . \numberthis\label{eqn:online:const:induction:cvxa}
	\end{align*}
	Since $\tilde{y}^v_{t,k}  \leq d$, we get
	\begin{align*}
	C_T(X^\mathcal{A})
	\stackrel{\eqref{eqn:online:const:induction:cvxa}}{\leq} 2\sum_{k=1}^m \sum_{t=1}^{T} \tilde{y}^T_{t,k} C_{t,k}(\hat{X}^T) 
	\stackrel{\phantom{(20)}}{\leq} 
	2d \sum_{k=1}^m \sum_{t=1}^{T} C_{t,k}(\hat{X}^T) 
	&\stackrel{\phantom{(20)}}{\leq} 
	2d \cdot C(\hat{X}^T).
	\end{align*}
	The schedule $\hat{X}^T$ is optimal for the problem instance $\mathcal{I}$, so algorithm~$\mathcal{A}$ is $2d$-competitive. \qedllncs
\end{proof}

\section{Randomized Online Algorithm}
\label{sec:online:rand}

The $2d$-competitive algorithm can be randomized to achieve a competitive ratio of $\frac{e}{e-1} d \approx 1.582 d$ against an oblivious adversary.
The randomized algorithm $\mathcal{B}$ chooses $\gamma \in [0,1]$ according to the probability density function $f_\gamma(x) = e^x / (e-1)$ for $x \in [0,1]$. 
The variables $\bar{t}_j$ are set to $\left\lfloor \gamma \cdot \beta_j / l_j \right\rfloor$, so the running time of a server is randomized. Then, algorithm $\mathcal{A}$ is executed. Note that $\gamma$ is determined at the beginning of the algorithm and not for each block.

Lemmas~\ref{lemma:online:const:opt:laneswitching}-\ref{lemma:online:const:alg:feasible} as well as~\ref{lemma:online:const:opt:ytilde} and~\ref{lemma:online:const:opt:scaledcost} still hold, because they do not depend on the exact value of~$\bar{t}_j$.
Only Lemmas~\ref{lemma:online:const:alg:costai}  and~\ref{lemma:online:const:opt:mainlemma} have to be adapted.
First of all, we have to introduce a new variable. Let
\begin{equation*}
\hat{\tau}_{t,k} \coloneqq \max \left\{\tau \in [\bar{t}_{\hat{y}^t_{t,k}}] \bigm\vert \forall \tau' \in [\tau-1] : \hat{y}^{t-\tau'}_{t-\tau',k} < \hat{y}^t_{t,k}\right\}
\end{equation*} be the number of time slots we have to go backwards in time to find an optimal schedule $\hat{X}^{t-\tau}$ that uses a server type greater than or equal to $\hat{y}^t_{t,k}$ in its last time slot in lane $k$.
The following lemma replaces Lemma~\ref{lemma:online:const:alg:costai} and estimates the expected cost of the block $A_{t,k}$ depending on $\hat{\tau}_{t,k}$.

\begin{lemma} \label{lemma:online:const:rand:costai}
	Let $c = e/(e-1)$, $j \coloneqq \hat{y}^t_{t,k}$ and $\tau \coloneqq \hat{\tau}_{t,k}$
	The expected cost of the block $A_{t,k}$ is upper bounded by
	$
	\mathbb{E}[C(A_{t,k})] \leq l_j \tau c
	$. 
\end{lemma}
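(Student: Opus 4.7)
My plan is to adapt the two-case analysis of Lemma~\ref{lemma:online:const:alg:costai} to the randomized setting, splitting on whether $A_{t,k}$ is a \emph{new} block (a freshly powered-up server of type $j_{t,k}=\hat{y}^t_{t,k}=j$ with duration $\bar{t}_j$) or an \emph{extended} block (an already-running server of type $j_{t,k}=y^\mathcal{A}_{t-1,k}\ge j$ whose lifetime is pushed out to $t+\bar{t}_j$). The randomness enters through $\bar{t}_j=\lfloor\gamma\beta_j/l_j\rfloor$, and the computation that drives everything is $\int_0^1 \gamma\,\frac{e^\gamma}{e-1}\,d\gamma=\frac{1}{e-1}$, giving $\mathbb{E}[1+\gamma]=c=\frac{e}{e-1}$.

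For an extended block, the bound is essentially deterministic. Since $j_{t,k}\ge j$ and operating costs are sorted in descending order, $l_{j_{t,k}}\le l_j$. The duration $d_{t,k}$ is controlled by $\hat{\tau}_{t,k}$: the block's end time is $t+\bar{t}_j$ and its start time is the end of the previous block in the same lane, but any intermediate optimal schedule terminating with a server of type $\ge j$ would already have pushed $e_k$ forward through line~9 of algorithm~$\mathcal{A}$, forcing $d_{t,k}\le\hat{\tau}_{t,k}=\tau$. This yields the clean pointwise bound $C(A_{t,k})\le l_j d_{t,k}\le l_j\tau\le l_j\tau c$.

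For a new block, I would first bound the cost pointwise by $\beta_j+l_j\bar{t}_j\le(1+\gamma)\beta_j$ using $l_j\bar{t}_j\le\gamma\beta_j$, then take the expectation to obtain $\mathbb{E}[C(A_{t,k})]\le c\beta_j$. The same structural argument used in Lemma~\ref{lemma:online:const:opt:mainlemma} (which carries over unchanged since it does not depend on the exact value of $\bar{t}_j$) shows that $\hat{\tau}_{t,k}$ attains its cap $\bar{t}_j$ for new blocks, because otherwise the block would have been extended rather than freshly created. The target inequality then becomes the amortized statement that, in expectation, the $c\beta_j$ cost is offset by $l_j$ times $\tau$ scaled by $c$, with the randomness of $\tau=\bar{t}_j$ integrating correctly against $f_\gamma$.

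The main obstacle is this last step in the new-block case. A pointwise inequality $(1+\gamma)\beta_j\le l_j\bar{t}_j c$ cannot hold uniformly in $\gamma$ — the switching cost $\beta_j$ dominates $l_j\bar{t}_j\le\gamma\beta_j$ for small $\gamma$ — so the factor $c=e/(e-1)$ has to be extracted from the exponential density $f_\gamma(x)=e^x/(e-1)$ rather than from any single realization. This is the familiar ski-rental mechanism: the density is engineered precisely so that the amortized switching cost matches the amortized operating cost times $c-1=1/(e-1)$. Carefully reconciling the expectation of cost with $l_j\tau c$, given that $\tau$ itself inherits randomness through its cap by $\bar{t}_j$, is where I expect the bulk of the technical effort and the heart of the randomization argument to lie.
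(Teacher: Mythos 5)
The extended-block half of your argument is sound and matches the paper's reasoning: $j_{t,k}\ge j$ gives $l_{j_{t,k}}\le l_j$, and the duration of an extended block is capped by $\tau=\hat\tau_{t,k}$, so $C(A_{t,k})\le l_j\tau$ pointwise. But the new-block case --- which you yourself flag as ``where the bulk of the technical effort lies'' --- is exactly the content of the lemma, and your plan for it does not close. The paper does not condition on the random event ``new vs.\ extended'' at all; it conditions on the \emph{deterministic} threshold $q\coloneqq l_j\tau/\beta_j$. On $\{\gamma\le q\}$ it bounds the cost by $\beta_j+l_j\bar t_j\le(1+\gamma)\beta_j$ whether or not the block actually turns out to be new, and on $\{\gamma>q\}$ it shows the block is forced to be extended with duration at most $\tau$, hence cost at most $l_j\tau$. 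The identity $\beta_j q=l_j\tau$ then makes the two integrals collapse:
$\int_0^q(1+x)f_\gamma(x)\beta_j\,\mathrm{d}x+\int_q^1 l_j\tau f_\gamma(x)\,\mathrm{d}x=\beta_j q\bigl(F_\gamma(q)+\tfrac{1}{e-1}\bigr)+l_j\tau\bigl(1-F_\gamma(q)\bigr)=l_j\tau c$.
You never identify $q$ or this cancellation, and without it the two cases cannot be recombined into $l_j\tau c$.

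Two further concrete problems. First, the bound you aim for in the new-block case, $\mathbb{E}[C(A_{t,k})]\le c\beta_j$, points the wrong way: since $\tau\le\lfloor\beta_j/l_j\rfloor$ we have $l_j\tau c\le c\beta_j$, so $c\beta_j$ is weaker than the target and cannot imply it. Second, your remark that ``$\tau$ itself inherits randomness through its cap by $\bar t_j$'' would make the right-hand side $l_j\tau c$ a random quantity and the lemma ill-posed; the proof requires $\tau=\hat\tau_{t,k}$ to be treated as deterministic (it is determined by the optimal schedules $\hat X^{t'}$, not by the algorithm's coin flip), and this is precisely what allows $q$ to act as a fixed threshold against which $\gamma$ is integrated. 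Your instinct that the ski-rental mechanism with density $e^x/(e-1)$ drives the factor $e/(e-1)$ is correct, but the argument as proposed is missing the decomposition that makes that mechanism apply.
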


\begin{proof}
	\newcommand{\gammathreshold}{q}
	Let $\gammathreshold \coloneqq \frac{l_j}{\beta_j} \tau$ (note that both $j$ and $\tau$ do not depend on random decisions). We estimate the cost of $A_{t,k}$ depending on $\gamma$.
	
	If $\gamma > \gammathreshold$, then the server $y^\mathcal{B}_{t-\tau,k} \geq j$ is still running at time slot~$t$, since $\hat{y}^{t-\tau}_{t-\tau,k} \geq \hat{y}^t_{t,k} = j$ implies 
	\begin{equation*}
	\bar{t}_{\hat{y}^{t-\tau}_{t-\tau,k}} \geq \bar{t}_j = \left\lfloor \gamma \cdot \beta_j / l_j  \right\rfloor > \left\lfloor \gammathreshold \cdot  \beta_j / l_j \right\rfloor \geq \tau.
	\end{equation*}
	Therefore, $A_{t,k}$ is an \emph{extended} block with duration at most $\tau$ (or $A_{t,k}$ does not exists which is equivalent to an extended block with duration~0). Furthermore, server type $j_{t,k} = y^\mathcal{B}_{t-\tau,k}$ used in $A_{t,k}$ is greater than or equal to $j = \hat{y}^t_{t,k}$, so $l_{j_{t,k}} \leq l_j$. Thus, for $\gamma > \gammathreshold$, we have $C(A_{t,k}) \leq l_j \tau$.
	
	If $\gamma \leq \gammathreshold$, then there can be a \emph{new} block at time slot $t$. Note that this is only a necessary, not a sufficient condition for a \emph{new} block (e.g., if $\hat{y}^{t-\tau-1}_{t-\tau-1,k} > \hat{y}^t_{t,k}$). If $A_{t,k}$ is a \emph{new} block, then its cost is given by $\beta_j + l_j \bar{t}_j$. If $y^\mathcal{B}_{t-\tau,k}$ still runs at time slot $t$, then $A_{t,k}$ is an \emph{extended} block whose cost is at most $l_j \bar{t}_j$, since $j \leq j_{t,k}$. Thus, for $\gamma \leq \gammathreshold$, we have $C(A_{t,k}) \leq \beta_j + l_j \bar{t}_j = \beta_j + l_j \left\lfloor \gamma \cdot \beta_j / l_j  \right\rfloor$. 
	
	Now, we can estimate the expected cost of $A_{t,k}$ by using the density function~$f_\gamma$.
	\begin{align*}
	\mathbb{E}[C(A_{t,k})] &\leq 
	\int_{0}^{\gammathreshold} f_\gamma (x) \left( \beta_j + l_j \left\lfloor x \cdot \frac{\beta_j}{l_j} \right\rfloor \right) \diff x
	+ \int_{\gammathreshold}^{1} f_\gamma (x) l_j \tau \diff x \\
	&\leq 
	\beta_j \left( \gammathreshold \cdot F_\gamma (\gammathreshold) + \frac{\gammathreshold}{e-1} \right) + l_j \tau \big(1 - F_\gamma (\gammathreshold)\big) .
	\end{align*}%
	The last inequality uses $l_j \left\lfloor x \cdot \beta_j / l_j  \right\rfloor \leq \beta_j x$, so the integrals can easily be calculated. By using $\beta_j q = l_j \tau$ (which follows from the definition of $\gammathreshold$), we get 
	\begin{align*}
	\mathbb{E}[C(A_{t,k})] &\leq \beta_j \left( F_\gamma (\gammathreshold) \gammathreshold + \frac{\gammathreshold}{e-1} \right) + l_j \tau (1 - F_\gamma (\gammathreshold)) \\
	&= l_j \tau \left(\frac{1}{e-1} + 1\right) \\
	&= l_j \tau c. \qedhere
	\end{align*}
	
\end{proof}

The following lemma replaces Lemma~\ref{lemma:online:const:opt:mainlemma} and shows how the expected cost of block $A_{v,k}$ can be folded into the term $c \cdot \sum_{t=1}^{v-1} \tilde{y}^{v-1}_{t,k} C_{t,k}(\hat{X}^v)$ which is the right hand side of equation~\eqref{eqn:online:const:opt:scaledcost}. 

\begin{lemma} \label{lemma:online:const:rand:mainlemma}
	For all lanes $k \in [m]$ and time slots $v \in [T]$, it holds
	\begin{equation} \label{eqn:online:const:rand:mainlemma}
	c \cdot \sum_{t=1}^{v-1} \tilde{y}^{v-1}_{t,k} C_{t,k}(\hat{X}^v) + \mathbb{E}[C(A_{v,k})] \leq  c \cdot \sum_{t=1}^{v} \tilde{y}^v_{t,k} C_{t,k}(\hat{X}^v) .
	\end{equation}
\end{lemma}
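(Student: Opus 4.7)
The plan is to follow the same template as the proof of Lemma~\ref{lemma:online:const:opt:mainlemma}, but replace the case split between \emph{new} and \emph{extended} blocks by a single uniform argument anchored at the backward distance $\tau \coloneqq \hat{\tau}_{v,k}$, and use Lemma~\ref{lemma:online:const:rand:costai} in place of Lemma~\ref{lemma:online:const:alg:costai}. Set $j \coloneqq \hat{y}^v_{v,k}$. If the block $A_{v,k}$ does not exist (in particular when $j = 0$), then $\mathbb{E}[C(A_{v,k})] = 0$ and the claim reduces to Lemma~\ref{lemma:online:const:opt:ytilde}, so one can assume $j \geq 1$ and hence $\tau \geq 1$.

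First I would quantify the gap between $\tilde{y}^{v-1}_{t,k}$ and $\tilde{y}^v_{t,k}$. By the definition of $\hat{\tau}_{v,k}$, for every $\tau' \in [\tau - 1]$ the schedule $\hat{X}^{v-\tau'}$ uses a type $\leq j-1$ at time $v-\tau'$ in lane $k$; hence for $t \in [v-\tau+1 : v-1]$ the maximum defining $\tilde{y}^{v-1}_{t,k}$ is $\leq j-1$, whereas $\tilde{y}^v_{t,k} \geq \hat{y}^v_{v,k} = j$, giving $\tilde{y}^v_{t,k} - \tilde{y}^{v-1}_{t,k} \geq 1$ on this range. Combined with Lemma~\ref{lemma:online:const:opt:ytilde} for $t \leq v-\tau$ and with $\tilde{y}^v_{v,k} = j \geq 1$ at the endpoint, telescoping the difference of the two sides of~\eqref{eqn:online:const:rand:mainlemma} (without $\mathbb{E}[C(A_{v,k})]$) would yield a lower bound of $c \sum_{t=v-\tau+1}^{v} C_{t,k}(\hat{X}^v)$.

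The main obstacle will be showing that this last sum is at least $c\, l_j \tau$. Let $s \leq v$ be the most recent time at which type $j$ is powered up in lane $k$ of $\hat{X}^v$, which exists because $\hat{y}^v_{v,k} = j$. If $s \leq v-\tau+1$, server $j$ runs throughout $[v-\tau+1 : v]$ and the operating cost alone contributes $l_j \tau$. Otherwise $s \in [v-\tau+2 : v]$, so $C_{s,k}(\hat{X}^v)$ contains the switching cost $\beta_j$, and the bound $\tau \leq \lfloor \beta_j / l_j \rfloor$ built into the definition of $\hat{\tau}_{v,k}$ yields $\beta_j \geq l_j \tau$; either way the bound holds.

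With this in hand, Lemma~\ref{lemma:online:const:rand:costai} delivers $\mathbb{E}[C(A_{v,k})] \leq l_j \tau c$, and chaining the inequalities closes the argument. The subtle points to double-check are (i) that the definition of $\hat{\tau}_{v,k}$ really enforces $\tau \leq \lfloor \beta_j / l_j \rfloor$, since this is what reconciles the two cases in the cost lower bound, and (ii) the edge case $\tau = 1$, where the interval $[v-\tau+1 : v-1]$ is empty but the $t=v$ term $c\, \tilde{y}^v_{v,k} C_{v,k}(\hat{X}^v) \geq c\, l_j$ already supplies the needed contribution via the operating cost of $j$ in $\hat{X}^v$ at time $v$.
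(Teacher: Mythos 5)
Your proposal is correct and follows essentially the same route as the paper's proof: the same reduction via $\tilde{y}^{v-1}_{t,k} \leq \tilde{y}^{v}_{t,k} - 1$ on $[v-\hat{\tau}_{v,k}+1 : v-1]$, the same lower bound $\sum_{t=v-\tau+1}^{v} C_{t,k}(\hat{X}^v) \geq l_j \tau$ via the run-versus-power-up dichotomy with $\tau \leq \bar{t}_j$, and the same final chaining with Lemma~\ref{lemma:online:const:rand:costai}. The edge cases you flag (nonexistent block, $\tau = 1$, the $t=v$ endpoint) are handled in the paper exactly as you describe.
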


\begin{proof}
	If $\hat{y}^v_{v,k} = 0$, then $A_{v,k}$ does not exist, so $\mathbb{E}[C(A_{v,k})] = 0$ and therefore equation~\eqref{eqn:online:const:rand:mainlemma} holds by $\tilde{y}^{v-1}_{t,k} \leq \tilde{y}^v_{t,k}$ (see Lemma~\ref{lemma:online:const:opt:ytilde}).
	
	Thus, in the following we consider the case $\hat{y}^v_{v,k} > 0$. 
	For all $t \in [v - \tau + 1 : v - 1]$ with $\tau \coloneqq \hat{\tau}_{v,k}$, the inequality $\tilde{y}^{v-1}_{t,k} \leq \tilde{y}^v_{t,k} - 1$ holds (see equation~\eqref{eqn:online:const:opt:mainlemma:ytilde:minusone} in the proof of Lemma~\ref{lemma:online:const:opt:mainlemma}). Therefore, we get
	\begin{align*}
	& \sum_{t=1}^{v-1} \tilde{y}^{v-1}_{t,k} C_{t,k}(\hat{X}^v) \\
	\stackrel{L\ref{lemma:online:const:opt:ytilde},\eqref{eqn:online:const:opt:mainlemma:ytilde:minusone}}{\leq} &
	\sum_{t=1}^{v-\tau} \tilde{y}^{v}_{t,k} C_{t,k}(\hat{X}^v) + \sum_{t=v-\tau+1}^{v-1} (\tilde{y}^{v}_{t,k} - 1) C_{t,k}(\hat{X}^v) \\
	\stackrel{\phantom{L\ref{lemma:online:const:opt:ytilde},\eqref{eqn:online:const:opt:mainlemma:ytilde:minusone}}}{\leq}&
	\sum_{t=1}^{v} \tilde{y}^{v}_{t,k} C_{t,k}(\hat{X}^v) - \sum_{t = v-\tau+1}^{v} C_{t,k}(\hat{X}^v) \numberthis\label{eqn:online:const:rand:mainlemma:a}
	\end{align*}
	For the last inequality, we add the term $(\tilde{y}^v_{v,k} - 1) C_{t,k}(\hat{X}^v)$ which is positive, since $\hat{y}^v_{v,k} > 0$. 
	
	The last term in \eqref{eqn:online:const:rand:mainlemma:a} satisfies
	\begin{equation} \label{eqn:online:const:rand:mainlemma:b}
	\sum_{t = v-\tau+1}^{v} C_{t,k}(\hat{X}^v) \geq l_j \tau
	\end{equation}
	with $j \coloneqq \hat{y}^v_{v,k}$, because either $j$ runs for $\tau$ time slots in $\hat{X}^v$ or $j$ is powered up during $[v - \tau + 1 : v]$ resulting in a cost of 
	\begin{equation*}
	\beta_j \geq l_j \left\lfloor \frac{\beta_j}{l_j} \gamma \right\rfloor = l_j \bar{t}_j \geq l_j \tau ,
	\end{equation*}
	as $\tau \leq \bar{t}_j$ by definition.
	
	By using Lemma~\ref{lemma:online:const:rand:costai}, we get
	\begin{align*}
	& c \cdot \sum_{t=1}^{v-1} \tilde{y}^{v-1}_{t,k} C_{t,k}(\hat{X}^v) + \mathbb{E}[C(A_{v,k})] \\
	\stackrel{\eqref{eqn:online:const:rand:mainlemma:a},L\ref{lemma:online:const:rand:costai}}{\leq} & c \cdot \sum_{t=1}^{v} \tilde{y}^{v}_{t,k} C_{t,k}(\hat{X}^v) - c \cdot \sum_{t = v-\tau+1}^{v} C_{t,k}(\hat{X}^v) + l_j \tau \cdot c \\
	\stackrel{\eqref{eqn:online:const:rand:mainlemma:b}}{\leq} \hspace{8pt} & c \cdot \sum_{t=1}^{v} \tilde{y}^T_{t,k} C_{t,k}(\hat{X}^v) \qedhere
	\end{align*}
\end{proof}

\begin{theorem} \label{theo:online:const:rand}
	Algorithm $\mathcal{B}$ is $\frac{e}{e-1} d$-competitive against an oblivious adversary.
\end{theorem}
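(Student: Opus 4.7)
The proof will follow the exact structure of Theorem~\ref{theorem:online:const}, with the deterministic Lemmas~\ref{lemma:online:const:alg:costai} and~\ref{lemma:online:const:opt:mainlemma} replaced by their randomized counterparts, Lemmas~\ref{lemma:online:const:rand:costai} and~\ref{lemma:online:const:rand:mainlemma}. The plan is to show, by induction on $v \in [T]_0$, the key invariant
\begin{equation*}
\mathbb{E}[C_v(X^\mathcal{B})] \leq c \sum_{k=1}^m \sum_{t=1}^{v} \tilde{y}^v_{t,k}\, C_{t,k}(\hat{X}^v),
\end{equation*}
where $c = e/(e-1)$ and $C_v(X^\mathcal{B}) := \sum_{t=1}^{v} \sum_{k=1}^m C(A_{t,k})$. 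Feasibility of $X^\mathcal{B}$ is immediate since Lemmas~\ref{lemma:online:const:opt:laneswitching}--\ref{lemma:online:const:alg:feasible} do not depend on the numerical values of the $\bar{t}_j$, only on their ordering (which is still preserved because Lemma~\ref{lemma:online:const:alg:bart} still holds: multiplying by the common factor $\gamma$ and taking floors keeps the sequence monotone).

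First I would dispatch the base case $v=0$, where both sides are zero. For the inductive step, I would split
\begin{equation*}
\mathbb{E}[C_v(X^\mathcal{B})] = \mathbb{E}[C_{v-1}(X^\mathcal{B})] + \sum_{k=1}^m \mathbb{E}[C(A_{v,k})],
\end{equation*}
apply the induction hypothesis to the first summand, then apply Lemma~\ref{lemma:online:const:opt:scaledcost} with $u = v-1$ (which is still valid since it reasons only about optimal offline schedules, not the algorithm's internal randomness) to replace $C_{t,k}(\hat{X}^{v-1})$ by $C_{t,k}(\hat{X}^v)$ inside the scaled sum. Then, for each lane $k$, Lemma~\ref{lemma:online:const:rand:mainlemma} absorbs the $\mathbb{E}[C(A_{v,k})]$ term into the sum, upgrading the index from $v-1$ to $v$. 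Linearity of expectation lets the lemma be applied lane-by-lane even though the same random $\gamma$ is shared across lanes; what we need is only an upper bound on each $\mathbb{E}[C(A_{v,k})]$.

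Having established the invariant for $v = T$, I would finish by observing that $\tilde{y}^T_{t,k} \leq d$ by definition, so
\begin{equation*}
\mathbb{E}[C(X^\mathcal{B})] = \mathbb{E}[C_T(X^\mathcal{B})] \leq c \sum_{k=1}^m \sum_{t=1}^{T} \tilde{y}^T_{t,k} C_{t,k}(\hat{X}^T) \leq c\,d \sum_{k=1}^m \sum_{t=1}^{T} C_{t,k}(\hat{X}^T) = \tfrac{e}{e-1}\, d \cdot C(\hat{X}^T),
\end{equation*}
and $\hat{X}^T$ is optimal for $\mathcal{I}$, which proves $\mathcal{B}$ is $\frac{e}{e-1}d$-competitive against an oblivious adversary (oblivious because $\gamma$ is drawn once at the start and the adversarial input $\Lambda$ does not depend on it).

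The main conceptual obstacle has already been handled in Lemma~\ref{lemma:online:const:rand:costai} (the randomization-over-$\gamma$ calculation that turns the $2\beta_j$ bound for \emph{new} blocks into $l_j \tau c$) and Lemma~\ref{lemma:online:const:rand:mainlemma} (which carefully folds this expected cost into the scaled sum using $\hat{\tau}_{v,k}$ in place of $\bar{t}_j$). The remaining step at the theorem level is essentially bookkeeping: the induction machinery is identical to Theorem~\ref{theorem:online:const}, with the factor $2$ systematically replaced by $c = e/(e-1)$.
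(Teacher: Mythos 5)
Your proposal is correct and follows essentially the same route as the paper: the same induction invariant $E_v(X^\mathcal{B}) \leq c \sum_{k}\sum_{t} \tilde{y}^v_{t,k} C_{t,k}(\hat{X}^v)$, the same use of Lemma~\ref{lemma:online:const:opt:scaledcost} to pass from $\hat{X}^{v-1}$ to $\hat{X}^{v}$, the same lane-by-lane absorption via Lemma~\ref{lemma:online:const:rand:mainlemma}, and the same final bound $\tilde{y}^T_{t,k} \leq d$. Your added remarks on feasibility, linearity of expectation across lanes sharing the same $\gamma$, and why the adversary must be oblivious are all consistent with the paper.
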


\begin{proof}
	Lemma~\ref{lemma:online:const:alg:feasible} still holds for algorithm $\mathcal{B}$, so the schedule~$X^\mathcal{B}$ is feasible.
	We have to show that $\mathbb{E}[C(X^\mathcal{B})] \leq c d \cdot C(\hat{X}^T)$ with $c = \frac{e}{e-1}$. Let 
	\begin{equation*}
	E_v(X^\mathcal{B}) \coloneqq \mathbb{E}\left[ \sum_{t=1}^{v} \sum_{k=1}^{m} C(A_{v,k}) \right] = \sum_{t=1}^{v} \sum_{k=1}^{m} \mathbb{E}[C(A_{v,k})]
	\end{equation*}
	denote the expected cost of algorithm~$\mathcal{B}$ up to time slot $v$. We will show by induction that
	\begin{equation} \label{eqn:online:const:rand:induction}
	E_v(X^\mathcal{B}) \leq c \sum_{k=1}^{m} \sum_{t=1}^{v} \tilde{y}^v_{t,k} C_{t,k}(\hat{X}^v)
	\end{equation}
	holds for all $v \in [T]_0$. 
	
	For $v = 0$, we have no costs for both $X^\mathcal{B}$ and $\hat{X}^v$, so inequality~\eqref{eqn:online:const:rand:induction} is fulfilled. Assume that inequality~\eqref{eqn:online:const:rand:induction} holds for $v-1$. By using the induction hypothesis as well as Lemmas~\ref{lemma:online:const:opt:scaledcost} and~\ref{lemma:online:const:rand:mainlemma}, we get
	\begin{align*}
	E_v(X^\mathcal{B}) &\stackrel{\phantom{L2.13}}{=}E_{v-1}(X^\mathcal{B})  + \sum_{k=1}^m \mathbb{E}[C(A_{v,k})] \\
	&\stackrel{\hspace{3pt}\text{I.H.}\hspace{3pt}}{\leq} c \sum_{k=1}^m \sum_{t=1}^{v-1} \tilde{y}^{v-1}_{t,k} C_{t,k}(\hat{X}^{v-1}) +  \sum_{k=1}^m \mathbb{E}[C(A_{v,k})] \\
	&\stackrel{L\ref{lemma:online:const:opt:scaledcost}}{\leq} c\sum_{k=1}^m \sum_{t=1}^{v-1} \tilde{y}^{v-1}_{t,k} C_{t,k}(\hat{X}^v) +  \sum_{k=1}^m \mathbb{E}[C(A_{v,k})] \\
	&\stackrel{L\ref{lemma:online:const:rand:mainlemma}}{\leq } c\sum_{k=1}^m \sum_{t=1}^{v} \tilde{y}^v_{t,k} C_{t,k}(\hat{X}^v) . \numberthis\label{eqn:online:const:rand:induction:ev}
	\end{align*}
	
	Since $\tilde{y}^v_{t,k} \leq d$, we get
	\begin{align*}
	\mathbb{E}[C(X^\mathcal{B})]
	= E_T(X^\mathcal{B})
	&\stackrel{\eqref{eqn:online:const:rand:induction:ev}}{\leq} c\sum_{k=1}^m \sum_{t=1}^{T} \tilde{y}^T_{t,k} C_{t,k}(\hat{X}^T) \\
	&\stackrel{\phantom{(11)}}{\leq} cd \sum_{k=1}^m \sum_{t=1}^{T} C_{t,k}(\hat{X}^T) \\
	&\stackrel{\phantom{(11)}}{\leq} cd \cdot C(\hat{X}^T).
	\end{align*}
	The schedule $\hat{X}^T$ is optimal for the problem instance $\mathcal{I}$, so algorithm~$\mathcal{B}$ is $cd$-competitive. \qedllncs
\end{proof}


\section{Lower bound}
\label{sec:online:lower}

In this section, we show that there is no deterministic online algorithm that achieves a competitive ratio that is better than $2d$.

We consider the following problem instance: Let $\beta_j \coloneqq N^{2j}$ and $l_j \coloneqq 1 /N^{2j}$ where $N$ is a sufficiently large number that depends on the number of servers types $d$. The value of $N$ will be determined later. The adversary will send a job for the current time slot if and only if the online algorithm has no active server during the previous time slot. This implies that the online algorithm has to power up a server immediately after powering down any server. Note that $\lambda_t \in \{0,1\}$, i.e., it is never necessary to power up more than one server. The optimal schedule is denoted by $X^\ast$. Let $\mathcal{A}$ be an arbitrary deterministic online algorithm and let $X^\mathcal{A}$ be the schedule computed by $\mathcal{A}$. 

W.l.o.g., in $X^\mathcal{A}$ there is no time slot with more than one active server. If this were not the case, we could easily convert the schedule into one where the assumption holds without increasing the cost. Assume that at time slot $t$ a new server of type $k$ is powered up such that there are (at least) two active servers at time $t$. If we power up the server at $t+1$, the schedule is still feasible, but the total costs are reduced by $l_k$. We can repeat this procedure until there is at most one active server for each time slot.

\begin{lemma} \label{lemma:online:lower:induction}
	Let $k \in [d]$. If $X^\mathcal{A}$ only uses servers of type lower than or equal to $k$ and if the cost of $\mathcal{A}$ is at least $C(X^\mathcal{A}) \geq {N} \beta_k$, then the cost of $\mathcal{A}$ is at least
	\begin{equation} \label{eqn:online:lower:induction}
	C(X^\mathcal{A}) \geq (2k - \epsilon_k) \cdot C(X^\ast)
	\end{equation}
	with $\epsilon_k = 9k^2 / {N}$ and $N \geq 6k$.
\end{lemma}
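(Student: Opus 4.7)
My plan is to prove this lemma by strong induction on $k$, exploiting the adversary's property that a job arrives at time $t$ if and only if $\mathcal{A}$ held no active server at $t-1$.

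\textbf{Base case ($k=1$).} Here $X^\mathcal{A}$ uses only type-$1$ servers, so every power-up of $\mathcal{A}$ is triggered by (and matched with) exactly one job. Writing $C(X^\mathcal{A}) = p\beta_1 + a \cdot l_1$ where $p$ is the number of power-ups and $a$ the total active time, the hypothesis $C(X^\mathcal{A}) \geq N\beta_1$ forces either $p$ or $a$ to be large. For each gap $\tau_i$ between consecutive jobs the optimum can pay $\min(\beta_1, \tau_i \, l_1)$, so $C(X^\ast) \leq \beta_1 + \sum_i \min(\beta_1, \tau_i l_1)$. An elementary amortization (the worst case being $\tau_i = \beta_1 / l_1 = N^4$, which balances the two options) gives $C(X^\mathcal{A}) / C(X^\ast) \geq 2 - O(1/N) \geq 2 - 9/N$, verifying the base case.

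\textbf{Inductive step.} Assume the lemma for $k-1$ and let $X^\mathcal{A}$ use only types $\leq k$ with cost $\geq N\beta_k$. I would partition the timeline into \emph{high intervals}, during which $\mathcal{A}$ runs a type-$k$ server, and \emph{low intervals}, during which $\mathcal{A}$ uses only types $\leq k-1$ (or is idle). Because $\mathcal{A}$ is always active during a high interval, no job arrives there, so the full adversarial job sequence lives in the low intervals. Split $C(X^\mathcal{A}) = C_k + C_{<k}$ into the two contributions.

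\textbf{Case A: $C_{<k} < N\beta_{k-1}$.} Then $C_k \geq N\beta_k - N\beta_{k-1} = N\beta_k(1 - 1/N^2)$. On the other hand, since $\beta_k / l_k = N^{4k}$, the optimum can cover every job by running a single type-$k$ server (splitting into blocks of length $\beta_k / l_k$ if needed), giving $C(X^\ast) = O(\beta_k)$. The resulting ratio is $\Omega(N)$, which easily exceeds $2k - \epsilon_k$ for $N \geq 6k$.

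\textbf{Case B: $C_{<k} \geq N\beta_{k-1}$.} View the restriction of $X^\mathcal{A}$ to the union of low intervals as a schedule using types $\leq k-1$ on the induced job subsequence. The induction hypothesis then yields $C_{<k} \geq (2(k-1) - \epsilon_{k-1}) \cdot C^\ast_{<k}$, where $C^\ast_{<k}$ is the best cost achievable using only types $\leq k-1$. Since $C(X^\ast) \leq C^\ast_{<k}$, this contributes $2(k-1) - \epsilon_{k-1}$ to the ratio; the remaining $2 - (\epsilon_k - \epsilon_{k-1})$ must come from $C_k$. For this, I would repeat the base-case amortization one level up: each type-$k$ power-up in $\mathcal{A}$ forces the optimum either to keep a matching server active (paying at least $l_k$ per slot, comparable to $\beta_k$ over the block of length $\bar{t}_k$) or to switch on its own, and the resulting amortization of $C_k$ against $C(X^\ast)$ gives the missing factor of $2$ up to an additive $O(k/N)$ loss. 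Combining the two halves produces $C(X^\mathcal{A}) \geq (2k - \epsilon_k) C(X^\ast)$ provided $\epsilon_k \geq \epsilon_{k-1} + \Theta(k/N)$, which is consistent with $\epsilon_k = 9k^2/N$.

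The main obstacle will be Case B: one must argue that $X^\ast$ cannot exploit type-$k$ to drive $C(X^\ast)$ substantially below $C^\ast_{<k}$ \emph{while also} obtaining the extra factor of $2$ from $C_k$. The delicate point is that opt may mix types (using $k$ across several low intervals to amortize switching), so the charging must compare, high-interval by high-interval, the cost $\beta_k$ paid by $\mathcal{A}$ against what $X^\ast$ saves by either also using type $k$ or by running type $\leq k-1$ through the same span. The specific geometric spacing $\beta_j = N^{2j}$, $l_j = 1/N^{2j}$ is what makes the bookkeeping close — each level contributes $\Theta(k/N)$ to $\epsilon_k$, so telescoping over $k$ levels gives the claimed $\epsilon_k = 9k^2/N$ under the condition $N \geq 6k$.
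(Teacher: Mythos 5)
Your skeleton matches the paper's: induction on $k$, a decomposition of the timeline into phases where $\mathcal{A}$ runs type $k$ versus phases where it uses only types $\le k-1$, and two candidate schedules to upper-bound $C(X^\ast)$. The base case is essentially the paper's argument. But the inductive step has a genuine gap, and it sits exactly where you flag "the main obstacle": you never show that the factor $2(k-1)-\epsilon_{k-1}$ from the induction hypothesis and the extra factor $2$ from the type-$k$ blocks \emph{add up against one and the same optimum}. Your transfer step is broken: the induction hypothesis bounds $C_{<k}$ against the optimum of the \emph{sub-instance} of jobs living in the low intervals, while the inequality you invoke, $C(X^\ast)\le C^\ast_{<k}$, either points the wrong way (the full instance has more jobs, so its optimum is generally \emph{larger} than the sub-instance optimum) or silently swaps in the type-restricted optimum of the full instance, which is not what the hypothesis controls. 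Without a correct version of this step, $C_{<k}\ge \alpha\, C^\ast_{\mathrm{sub}}$ tells you nothing about $C_{<k}/C(X^\ast)$. The paper resolves this by never comparing against the true optimum directly: it builds two explicit feasible schedules (type $k$ running throughout; and, per phase, either the induction-hypothesis schedule, $\mathcal{A}$ itself, or a one-slot type-$1$ server), splits the numerator $C(X^\mathcal{A})$ into pieces, and charges each piece against the bound furnished by one of the two strategies, so that the contributions $1+1$ and $\alpha$ provably accumulate.

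Two further problems. First, your Case A claim $C(X^\ast)=O(\beta_k)$ does not follow from "run a single type-$k$ server": the horizon can last $\Theta(N\beta_k/l_k)$ slots before $\mathcal{A}$'s cost reaches $N\beta_k$, so that schedule costs $\Theta(N\beta_k)$ and yields only a constant ratio; you would instead have to count the jobs (bounded by the number of power-downs) and serve each one cheaply. Second, applying the induction hypothesis to the concatenation of \emph{all} low intervals glosses over both whether the concatenation is still a valid adversarial instance at the junctions and whether the precondition $C\ge N\beta_{k-1}$ is met where you need it. The paper instead distinguishes long phases (cost $> \beta_k/N = N\beta_{k-1}$, so the hypothesis applies to each individually) from short phases, which are handled by copying $\mathcal{A}$; this long/short split, absent from your proposal, is what makes the per-phase use of the induction hypothesis legitimate and keeps the error terms summable to $\epsilon_k = 9k^2/N$.
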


\begin{proof}
	We will prove the lemma by induction. 
	
	For $k = 1$, let $t$ be the length of the schedule $X^\mathcal{A}$ and let $n$ denote how often server type 1 is powered up in $X^\mathcal{A}$. The cost of $X^\mathcal{A}$ is $C(X^\mathcal{A}) = n \beta_1 + l_1 (t - n + 1)$. We use two strategies to estimate the cost of an optimal schedule. In the first strategy the server runs for the whole time, so the cost is $\beta_1 + l_1 t$. The second strategy is to power down the server when it is idle, so the cost is $n(\beta_1 + l_1)$.
	
	We differ between the cases $n \geq N/8$ (case 1) and $n < N/8$ (case~2).
	In case~1, the competitive ratio is
	\begin{align*}
	\frac{C(X^\mathcal{A})}{C(X^\ast)} &= \frac{n \beta_1 + l_1 (t-n+1)}{C(X^\ast)} \\
	&\geq \frac{(n-1)\beta_1 - l_1 (n-1)}{n(\beta_1 + l_1)} + \frac{\beta_1 + l_1 t}{\beta_1 + l_1 t} \\
	&= \left(1 - \frac{1}{n} \right) - \frac{2l_1 (n-1)}{n(\beta_1 + l_1)} + 1 . \numberthis\label{lemma:online:lower:induction:one:a}
	\end{align*}
	For the inequality, we split the cost of $X^\mathcal{A}$ into two terms and estimate the cost of $X^\ast$ in left quotient with the second strategy and $C(X^\ast)$ in the right quotient with the first strategy.
	
	The quotient $\frac{2l_1 (n-1)}{n(\beta_1 + l_1)}$ can be estimated by using $n(\beta_1 + l_1) \geq (n-1) \beta_1$, the definitions of $l_1$ and $\beta_1$ as well as the precondition of the lemma that requires $N \geq 6k$.
	\begin{equation*}
	\frac{2l_1 (n-1)}{n(\beta_1 + l_1)} \leq \frac{2l_1 (n-1)}{(n-1) \beta_1} \leq \frac{2}{N^4} < \frac{1}{N}.
	\end{equation*}
	
	By using this result in equation~\eqref{lemma:online:lower:induction:one:a} as well as $n \geq N/8$, we get
	\begin{align*}
	\frac{C(X^\mathcal{A})}{C(X^\ast)} > 2 - \frac{8}{N} - \frac{1}{N} = 2 - \epsilon_1 ,
	\end{align*}
	since $\epsilon_1 = 9/N$.
	
	In case 2, we use the fact that $C(X^\mathcal{A}) \geq N \beta_1$, so the competitive ratio is at least
	\begin{align*}
	\frac{C(X^\mathcal{A})}{C(X^\ast)} \geq \frac{N\beta_1}{n(\beta_1 + l_1)} 
	> \frac{8n \beta_1}{2n\beta_1} 
	&= 4 .
	\end{align*}
	In the first inequality, we use the second strategy to estimate the cost of $C(X^\ast)$. The second inequality holds because $n < N/8$ and $n(\beta_1 + l_1) \leq 2n\beta_1$. 
	
	For both cases, we have shown that $C(X^\mathcal{A}) \geq (2 - \epsilon_1) C(X^\ast)$ holds, so equation~\eqref{eqn:online:lower:induction} is fulfilled for $k=1$.

	Next, assume that Lemma~\ref{lemma:online:lower:induction} holds for $k-1$.
	
	We divide the schedule $X^\mathcal{A}$ into phases $L_0, K_1, L_1, K_2, \dots, L_n$ such that in the phases $K_1, \dots, K_n$ server type $k$ is used exactly once, while in the intermediate phases $L_0, \dots, L_n$ the other server types $1, \dots, k-1$ are used. A phase~$K_i$ begins when a server of type $k$ is powered up and ends when it is powered down. The phases $L_i$ can have zero length (if the server type $k$ is powered up immediately after it is powered down, so between $K_i$ and $K_{i+1}$ an empty phase~$L_i$ is inserted). 
	
	The operating cost during phase $K_i$ is denoted by $\delta_i \beta_k$. 
	The operating and switching costs during phase $L_i$ are denoted by $p_i \beta_k$.
	We divide the intermediate phases $L_i$ into long phases where $p_i > 1 / {N}$ holds and short phases where $p_i \leq 1 / {N}$. Note that we can use the induction hypothesis only for long phases. The index sets of the long and short phases are denoted by $\mathcal{L}$ and $\mathcal{S}$, respectively.
	
		\ifincludefigures
		\begin{figure}[t]
	\setlength{\abovecaptionskip}{9pt plus 0pt minus 0pt}
	\setlength{\belowcaptionskip}{0pt plus 0pt minus 0pt}
	
	\centering
	\begin{tikzpicture}
		\pgfmathsetmacro{\xBegin}{0}
		\pgfmathsetmacro{\xStep}{0.67}
		
		\pgfmathsetmacro{\yBeginAlg}{0}
		\pgfmathsetmacro{\yBeginLambda}{0.9}
		\pgfmathsetmacro{\yBeginOptA}{1.8}
		\pgfmathsetmacro{\yBeginOptB}{-0.9}
		\pgfmathsetmacro{\ySize}{0.6}
		\pgfmathsetmacro{\yLambda}{0.6}
		
		\tikzstyle{llong}=[fill=blue!20!white]
		\tikzstyle{lshort}=[fill=green!20!white]
		\tikzstyle{optone}=[fill=red!20!white]
		
		\draw[llong] (\xBegin + 0*\xStep, \yBeginAlg) rectangle (\xBegin + 2*\xStep, \yBeginAlg + \ySize)
			node[pos=.5] {$L_0$};
		\draw (\xBegin + 2*\xStep, \yBeginAlg) rectangle (\xBegin + 6*\xStep, \yBeginAlg + \ySize)
			node[pos=.5] {$K_1$};
		\draw (\xBegin + 6*\xStep, \yBeginAlg) rectangle (\xBegin + 8.5*\xStep, \yBeginAlg + \ySize)
			node[pos=.5] {$K_2$};
		\draw[lshort] (\xBegin + 8.5*\xStep, \yBeginAlg) rectangle (\xBegin + 9.3*\xStep, \yBeginAlg + \ySize)
			node[pos=.5] {$L_2$};
		\draw (\xBegin + 9.3*\xStep, \yBeginAlg) rectangle (\xBegin + 13.5*\xStep, \yBeginAlg + \ySize)
			node[pos=.5] {$K_3$};
		\draw[llong] (\xBegin + 13.5*\xStep, \yBeginAlg) rectangle (\xBegin + 15.2*\xStep, \yBeginAlg + \ySize)
			node[pos=.5] {$L_3$};
			
		\draw (\xBegin, \yBeginLambda) 
			to (\xBegin + 2*\xStep, \yBeginLambda)
			to (\xBegin + 2*\xStep, \yBeginLambda + \yLambda)
			to (\xBegin + 2.2*\xStep, \yBeginLambda + \yLambda)
			to (\xBegin + 2.2*\xStep, \yBeginLambda)
			to (\xBegin + 6*\xStep, \yBeginLambda)
			to (\xBegin + 6*\xStep, \yBeginLambda + \yLambda)
			to (\xBegin + 6.2*\xStep, \yBeginLambda + \yLambda)
			to (\xBegin + 6.2*\xStep, \yBeginLambda)
			to (\xBegin + 9.3*\xStep, \yBeginLambda)
			to (\xBegin + 9.3*\xStep, \yBeginLambda + \yLambda)
			to (\xBegin + 9.5*\xStep, \yBeginLambda + \yLambda)
			to (\xBegin + 9.5*\xStep, \yBeginLambda)
			to (\xBegin + 15.2*\xStep, \yBeginLambda);
			
		\foreach \x in {0, 0.1,...,1.81} {
			\draw (\xBegin + \x * \xStep, \yBeginLambda)
				to (\xBegin + \x * \xStep, \yBeginLambda + \yLambda);
		}
		\foreach \x in {8.5, 8.6,...,9.11} {
			\draw (\xBegin + \x * \xStep, \yBeginLambda)
			to (\xBegin + \x * \xStep, \yBeginLambda + \yLambda);
		}
		\foreach \x in {13.5, 13.6,...,15.01} {
			\draw (\xBegin + \x * \xStep, \yBeginLambda)
			to (\xBegin + \x * \xStep, \yBeginLambda + \yLambda);
		}
			
		\draw (\xBegin + 0 * \xStep, \yBeginOptA) rectangle (\xBegin + 15.2*\xStep, \yBeginOptA + \ySize)
			node[pos=.5] {$y_{t,k} = k$};
			
		\draw[] (\xBegin + 0 * \xStep, \yBeginOptB) rectangle (\xBegin + 2*\xStep, \yBeginOptB + \ySize)
			node[pos=.5] {$1/\alpha$};
		\draw[optone] (\xBegin + 2 * \xStep, \yBeginOptB) rectangle (\xBegin + 2.2*\xStep, \yBeginOptB + \ySize);
		\draw[optone] (\xBegin + 6 * \xStep, \yBeginOptB) rectangle (\xBegin + 6.2*\xStep, \yBeginOptB + \ySize);
		\draw[] (\xBegin + 8.5 * \xStep, \yBeginOptB) rectangle (\xBegin + 9.3*\xStep, \yBeginOptB + \ySize)
			node[pos=.5] {$\mathcal{A}$};
		\draw[optone] (\xBegin + 9.3 * \xStep, \yBeginOptB) rectangle (\xBegin + 9.5*\xStep, \yBeginOptB + \ySize);
		\draw[] (\xBegin + 13.5 * \xStep, \yBeginOptB) rectangle (\xBegin + 15.2*\xStep, \yBeginOptB + \ySize)
			node[pos=.5] {$1/\alpha$};
			
		\node[left] at (\xBegin, \yBeginAlg + 0.5 * \ySize) {$X^{\mathcal{A}}$};
		\node[left] at (\xBegin, \yBeginLambda + 0.5 * \yLambda) {$\lambda_t$};
		\node[left] at (\xBegin, \yBeginOptA + 0.5 * \ySize) {Strategy 1};
		\node[left] at (\xBegin, \yBeginOptB + 0.5 * \ySize) {Strategy 2};
		
	\end{tikzpicture}
	\caption{{\normalfont (figure is colored)} Visualization of the two strategies to estimate the cost of an optimal schedule. \normalfont The schedule of algorithm~$\mathcal{A}$ and the incoming jobs $\lambda_t$ are shown in the middle. Long phases are marked in blue and short phases are marked in green ($L_1$ is a short phase with zero length). Strategy 1 simply uses server type $k$ the whole time. During the short phases, strategy 2 behaves like algorithm~$\mathcal{A}$. For the long phases, there is a solution that results in only $1/\alpha$ of the cost of $X^{\mathcal{A}}$ with $\alpha \coloneqq 2k - 2 - \epsilon_{k-1}$. In the red blocks server type 1 is activated for exactly one time slot. }
	\label{fig:online:lower:induction}
\end{figure}
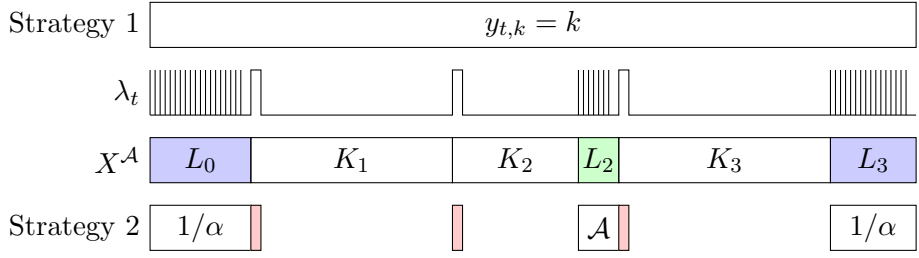
		\fi
	
	To estimate the cost of an optimal schedule we consider two strategies (see Figure~\ref{fig:online:lower:induction}): In the first strategy, a server of type $k$ is powered up at the first time slot and runs for the whole time except for phases $K_i$ with $\delta_i > 1$, then powering down and powering up are cheaper than keeping the server in the active state ($\beta_k$ vs. $\delta_i \beta_k$). 
	The operating cost for the phases $K_i$ is $\delta^\ast_i \beta_k$ with $\delta^\ast_i \coloneqq \min \{1, \delta_i\}$ and the operating cost for the phases $L_i$ is at most $\frac{1}{N^2} p_i \beta_k$, because algorithm~$\mathcal{A}$ uses servers whose types are lower than $k$ and therefore the operating cost of $\mathcal{A}$ is at least $N^2$ times larger. Thus, the total cost of this strategy is upper bounded by
	\begin{equation}\label{eqn:online:lower:induction:opt:strategy:a}
	C(X^\ast) \leq \beta_k \left(1 + \sum_{i=1}^n \delta^\ast_i  + \sum_{i \in \mathcal{L} \cup \mathcal{S}} \frac{1}{N^2} p_i \right).
	\end{equation} 
	
	In the second strategy, for the long phases $L$ we use the strategy given by our induction hypothesis, while for the short phases $S$ we behave like algorithm~$\mathcal{A}$ and in the phases $K_i$ we run the server type 1 for exactly one time slot (note that in $K_i$ we only have $\lambda_t = 1$ in the first time slot of the phase). Therefore the total cost is at most 
	\begin{equation}\label{eqn:online:lower:induction:opt:strategy:b}
	C(X^\ast) \leq \beta_k \left( \sum_{i \in \mathcal{L}} \frac{1}{\alpha} p_i + \sum_{i \in \mathcal{S}} p_i + 2n \beta_1 / \beta_k \right)
	\end{equation}
	with $\alpha \coloneqq 2k - 2 - \epsilon_{k-1}$.
	
	The total cost of $\mathcal{A}$ is equal to $\beta_k \left( \sum_{i=1}^n (1 + \delta_i) + \sum_{i \in \mathcal{L} \cup \mathcal{S}} p_i \right)$, so the competitive ratio is given by 
	\begin{align*}
	\frac{C(X^\mathcal{A})}{C(X^\ast)} &\geq \frac{\sum_{i=1}^n (1 + \delta_i) + \sum_{i \in \mathcal{L} \cup \mathcal{S}} p_i}{C(X^\ast) / \beta_k} \\
	&= \frac{1 + \sum_{i=1}^n \delta_i + \sum_{i \in \mathcal{L} \cup \mathcal{S}} \frac{1}{N^2} p_i}{C(X^\ast) / \beta_k} \\
	&\qquad + \frac{n-1 + \sum_{i \in \mathcal{L} \cup \mathcal{S}} p_i \left(1 - \frac{1}{N^2}\right)}{C(X^\ast) / \beta_k} \\
	&\geq 1 + \frac{n-1 + \sum_{i \in \mathcal{L} \cup \mathcal{S}} p_i \left(1 - \frac{1}{N^2}\right)}{C(X^\ast) / \beta_k} .
	\end{align*}
	In the first step, the numerator is separated into two parts. Then $C(X^\ast)$ in the first fraction is estimated by equation~\eqref{eqn:online:lower:induction:opt:strategy:a} (first strategy). In the next step, we transform the second fraction. 
	\begin{align*}
	\frac{C(X^\mathcal{A})}{C(X^\ast)}  \geq 1 &+ \frac{\left(\sum_{i\in \mathcal{L}} p_i + \sum_{i \in \mathcal{S}} \alpha p_i + 2n \alpha \frac{\beta_1}{\beta_k}\right)\left(1 - \frac{1}{N^2}\right)}{C(X^\ast) / \beta_k} \\ 
	&- \frac{\sum_{i \in \mathcal{L} \cup \mathcal{S}} \frac{1}{N^2}p_i}{C(X^\ast) / \beta_k} 
	+\frac{\sum_{i \in \mathcal{L} \cup \mathcal{S}} \frac{1}{N^2}p_i}{C(X^\ast) / \beta_k}\\
	&+ \frac{n-1 - \left(\sum_{i \in \mathcal{S}} (\alpha - 1) p_i + 2n \alpha \frac{\beta_1}{\beta_k}\right)\left(1 - \frac{1}{N^2}\right)}{C(X^\ast) / \beta_k} \\ 
	\phantom{\frac{C(X^\mathcal{A})}{C(X^\ast)}} 
	\geq 1 &+ \alpha \left(1 - \frac{1}{N^2}\right) - \frac{\alpha}{N^2} 
	+ \frac{\sum_{i \in \mathcal{L} \cup \mathcal{S}} \frac{1}{N^2}p_i}{C(X^\ast) / \beta_k} \\ 
	&+ \frac{n-1 - \left(\sum_{i \in \mathcal{S}} (\alpha - 1) p_i + 2n \alpha \frac{\beta_1}{\beta_k}\right)\left(1 - \frac{1}{N^2}\right)}{C(X^\ast) / \beta_k} .\numberthis\label{eqn:online:lower:induction:calc:a}
	\end{align*}
	The last inequality uses equation~\eqref{eqn:online:lower:induction:opt:strategy:b} (second strategy) to estimate $C(X^\ast)$. In particular, we have 
	\begin{equation*}
	- \frac{\sum_{i \in \mathcal{L} \cup \mathcal{S}} \frac{1}{N^2}p_i}{\sum_{i \in \mathcal{L}} \frac{1}{\alpha} p_i + \sum_{i \in \mathcal{S}} p_i + 2n \beta_1 / \beta_k}  
	\geq -\frac{\sum_{i \in \mathcal{L} \cup \mathcal{S}} \frac{1}{N^2}p_i}{\sum_{i \in \mathcal{L} \cup \mathcal{S}} \frac{1}{\alpha}p_i} 
	= -\frac{\alpha}{N^2} .
	\end{equation*} 
	The fraction $\frac{\sum_{i \in \mathcal{L} \cup \mathcal{S}} \frac{1}{N^2}p_i}{C(X^\ast) / \beta_k}$ of~\eqref{eqn:online:lower:induction:calc:a} is transformed as follows
	\begin{align*}
	\frac{C(X^\mathcal{A})}{C(X^\ast)} \geq 1 &+ \alpha \left(1 - \frac{2}{N^2}\right) \\
	&+ \frac{\left(n + 1 + \sum_{i \in \mathcal{L} \cup \mathcal{S}} \frac{1}{N^2} p_i\right)\left(1 - \xi\right)}{C(X^\ast) / \beta_k} \\
	&+ \frac{\left(n + 1 + \sum_{i \in \mathcal{L} \cup \mathcal{S}} \frac{1}{N^2} p_i\right)\xi}{C(X^\ast) / \beta_k} \\
	&- \frac{2 + \left(\sum_{i \in \mathcal{S}} (\alpha - 1) p_i  + 2n\alpha \frac{\beta_1}{\beta_k}\right) \left(1 - \frac{1}{N^2}\right)}{C(X^\ast) / \beta_k}
	\end{align*}
	with $0 < \xi < 1$. By using equation~\eqref{eqn:online:lower:induction:opt:strategy:a} and $\delta^\ast_i \leq 1$ for all $i \in [n]$, we get
	\begin{align*}
	\frac{C(X^\mathcal{A})}{C(X^\ast)} &\geq 2 - \xi + \alpha \left(1 - \frac{2}{N^2}\right) 
	+ \frac{\left(n + 1 + \sum_{i \in \mathcal{L} \cup \mathcal{S}} \frac{1}{N^2} p_i\right)\xi}{C(X^\ast) / \beta_k} \\
	&\phantom{{}\geq 2 - \xi}- \frac{2 + \left(\sum_{i \in \mathcal{S}} (\alpha - 1) p_i  + 2n\alpha \frac{\beta_1}{\beta_k}\right) \left(1 - \frac{1}{N^2}\right) }{C(X^\ast) / \beta_k} \\
	&\geq 2 - \xi + \alpha \left(1 - \frac{2}{N^2}\right) 
	- \frac{-n\xi + 2 + \frac{2k(n+1)}{{N}} + \frac{4kn}{N^2}}{C(X^\ast) / \beta_k} .
	\end{align*}
	For the last estimation we used the following inequalities:
	\begin{align*}
	n + 1 + \sum_{i \in \mathcal{L} \cup \mathcal{S}} \frac{1}{N^2} p_i &\geq n, \\
	\alpha - 1 &\leq 2k, \\
	\sum_{i \in \mathcal{S}} p_i &\leq \frac{n+1}{N}  && \parbox{5.5cm}{(by $|\mathcal{S}| \leq n+1$ \\ \phantom{(}and $p_i \leq 1/N$ for $i \in \mathcal{S}$),} \\
	1 - \frac{1}{N^2} &\leq 1 \\
	\text{and} \qquad 2n\alpha \frac{\beta_1}{\beta_k} &\leq 4kn/N^2 && \textrm{(by $\alpha \leq 2k$, $k \geq 2$ and $\beta_k = N^{2k}$)}. \\
	\end{align*}
	With $N^2 \geq {N}$, $\xi \coloneqq 6k / {N}$, $\alpha \leq 2k$, the definition of $\alpha = 2k - 2 - \epsilon_{k-1}$ and $N \geq 2k$, we get
	\begin{align*}
	\frac{C(X^\mathcal{A})}{C(X^\ast)} &\geq 2k - \epsilon_{k-1} - \frac{10k}{{N}} - \frac{3}{C(X^\ast) / \beta_k} . \numberthis\label{eqn:online:lower:induction:calc:b}
	\end{align*}
	
	If $C(X^\ast) < \frac{{N}}{2k} \beta_k$ holds, then $C(X^\mathcal{A}) \geq {N} \beta_k$ (a precondition of Lemma~\ref{lemma:online:lower:induction}) implies $2k \cdot C(X^\ast) < {N} \beta_k \leq C(X^\mathcal{A})$, so equation~\eqref{eqn:online:lower:induction} is fulfilled and Lemma~\ref{lemma:online:lower:induction} holds. If $C(X^\ast) \geq \frac{{N}}{2k} \beta_k$, then $\frac{3}{C(X^\ast) / \beta_k} \leq \frac{6k}{{N}}$ and inequality~\eqref{eqn:online:lower:induction:calc:b} gives
	\begin{align*}
	\frac{C(X^\mathcal{A})}{C(X^\ast)} &\geq 2k - \epsilon_{k-1} - \frac{16k}{{N}} \\
	&\geq 2k - \frac{9(k-1)^2}{{N}} - \frac{16k}{{N}} \\
	&\geq 2k - \frac{9k^2}{{N}} \\
	&\geq 2k - \epsilon_k . \qedhere
	\end{align*}
\end{proof}

\begin{theorem} \label{theo:online:lower}
 	There is no deterministic online algorithm for the data-center right-sizing problem with heterogeneous servers and time- and load-independent operating costs whose competitive ratio is smaller than $2d$.
\end{theorem}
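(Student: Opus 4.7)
The plan is to reduce the theorem to Lemma~\ref{lemma:online:lower:induction} applied with $k = d$. The crucial observation is that the hypothesis ``$X^{\mathcal{A}}$ only uses servers of type lower than or equal to $k$'' is trivially satisfied for $k = d$, since the instance has only $d$ server types. Thus the only real work is to verify the cost precondition $C(X^{\mathcal{A}}) \geq N\beta_d$ and then let $N$ tend to infinity so that the slack term $\epsilon_d = 9d^2/N$ vanishes.

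First I would fix an arbitrary deterministic online algorithm $\mathcal{A}$ and an arbitrary $\epsilon > 0$ and aim to show that $\mathcal{A}$'s competitive ratio exceeds $2d - \epsilon$. I choose $N \geq \max\{6d,\, 9d^2/\epsilon\}$ and instantiate the adversarial construction with $\beta_j = N^{2j}$, $l_j = 1/N^{2j}$ as described in the preamble of the section. I then play the adversary's strategy (a job is sent at time $t$ iff $\mathcal{A}$ had no active server at $t-1$) against $\mathcal{A}$ for $T$ time slots, where $T$ will be chosen later.

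Next I verify the precondition $C(X^{\mathcal{A}}) \geq N\beta_d$. Since $\vec{x}_0 = \vec{0}$ and the adversary responds to every inactive time slot with a job in the following slot, the schedule $X^{\mathcal{A}}$ cannot contain two consecutive time slots with no active server. Writing $a_t \in \{0,1\}$ for whether $\mathcal{A}$ has an active server at time $t$ (we may assume at most one, as argued in the preamble), this yields $\sum_{t=1}^{T-1} a_t \geq (T-1)/2$, so the pure operating cost of $X^{\mathcal{A}}$ is at least $\tfrac{T-1}{2}\,l_d = \tfrac{T-1}{2 N^{2d}}$. Choosing $T$ at least polynomial in $N$ (e.g.\ $T \geq 2N^{4d+1} + 1$) makes this quantity at least $N\beta_d = N^{2d+1}$, so the precondition is satisfied regardless of how $\mathcal{A}$ plays.

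Now I invoke Lemma~\ref{lemma:online:lower:induction} with $k = d$ to conclude $C(X^{\mathcal{A}}) \geq (2d - \epsilon_d)\,C(X^\ast) > (2d - \epsilon)\,C(X^\ast)$. Since both sides grow without bound as $T \to \infty$, any additive constant in the definition of competitive ratio is absorbed in the limit, and since $\epsilon > 0$ was arbitrary, no deterministic online algorithm can attain a competitive ratio smaller than $2d$. The main obstacle in this plan is not conceptual but bookkeeping: one must make sure the cost precondition of the lemma holds uniformly for every algorithm (including lazy algorithms that keep a single cheap server active indefinitely), which is exactly why the ``no two consecutive zeros'' argument above is used rather than a case analysis on switching behavior.
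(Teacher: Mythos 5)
Your proposal is correct and follows essentially the same route as the paper: apply Lemma~\ref{lemma:online:lower:induction} with $k=d$ (for which the server-type restriction is vacuous), after choosing $N$ large enough that $\epsilon_d=9d^2/N$ is below the target $\epsilon$ and running the adversary long enough that the cost precondition $C(X^{\mathcal A})\geq N\beta_d$ holds. Your explicit ``no two consecutive inactive slots'' argument is just a more detailed justification of the paper's one-line remark that the cost of $\mathcal A$ can be driven arbitrarily large because all $l_j>0$.
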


\begin{proof}
	Assume that there is an $(2d-\epsilon)$-competitive deterministic online algorithm $\mathcal{A}$. Let $N \coloneqq \max \{6d , \lceil 9k^2 / \epsilon + 1\rceil \}$. We construct a workload as described at the beginning of Section~\ref{sec:online:lower} until the cost of $\mathcal{A}$ is greater than $N \beta_d$ (note that $l_j > 0$ for all $j \in [d]$,  so the cost of $\mathcal{A}$ can be arbitrarily large). By using Lemma~\ref{lemma:online:lower:induction} with $k = d$, we get
	\begin{align*}
	C(X^{\mathcal{A}}) &\geq (2d - \epsilon_d) \cdot C(X^\ast) \\
	&\geq \left(2d - \frac{9k^2}{\lceil 9k^2 / \epsilon + 1\rceil}\right) \cdot C(X^\ast) \\
	&> (2d - \epsilon) \cdot C(X^\ast) ,
	\end{align*}
	which is a contradiction to our assumption that algorithm $\mathcal{A}$ is $(2d-\epsilon)$\hyp{}competitive. Therefore, there is no deterministic online algorithm whose competitive ratio is smaller than $2d$. \qedllncs
\end{proof}


The schedule constructed for the lower bound only uses at most one job in each time slot, so there is no reason for an online algorithm to utilize more than one server of a specific type. Thus, for a data center with $m$ unique servers (i.e. $m_j = 1$ for all $j \in [d]$), the best achievable competitive ratio is $2d = 2m$. 

\begin{corollary}
	There is no deterministic online algorithm for the data-center right-sizing problem with $m$ unique servers and time- and load-independent operating costs whose competitive ratio is smaller than $2m$.
\end{corollary}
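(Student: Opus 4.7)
The plan is to observe that the lower bound construction used in Theorem~\ref{theo:online:lower} already fits, without modification, into the restricted model where every server type has multiplicity one. Concretely, the adversary in Theorem~\ref{theo:online:lower} only issues jobs with $\lambda_t \in \{0,1\}$, and the paper has already argued (in the paragraph preceding Lemma~\ref{lemma:online:lower:induction}) that without loss of generality the deterministic online algorithm~$\mathcal{A}$ keeps at most one active server per time slot. Since the optimal offline strategies used to upper bound $C(X^\ast)$ in the proof of Lemma~\ref{lemma:online:lower:induction} (namely, the ``keep one type-$k$ server running throughout'' strategy and the ``mimic $\mathcal{A}$ with a one-shot type-$1$ activation in each $K_i$'' strategy) also only ever employ one server at a time, none of the estimates change if we reduce each $m_j$ from its original value down to~$1$.

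Thus I would prove the corollary by a direct reduction to Theorem~\ref{theo:online:lower}. First, fix any candidate deterministic online algorithm $\mathcal{A}$ for the $m$-unique-servers model, and set $d := m$ with $\vec{m} = (1,\dots,1)$ and costs $\beta_j := N^{2j}$, $l_j := 1/N^{2j}$ as in Section~\ref{sec:online:lower}, where $N$ is chosen as in the proof of Theorem~\ref{theo:online:lower}. Feed $\mathcal{A}$ the same adversarial workload (a job is issued at time $t$ iff $\mathcal{A}$ has no active server at time $t-1$); because each $\lambda_t \in \{0,1\}$, the feasibility constraint $x_{t,j} \in \{0,1\}$ is never binding, and the WLOG reduction to ``at most one active server per slot'' applies verbatim.

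Now invoke Lemma~\ref{lemma:online:lower:induction} with $k=d=m$: it yields $C(X^{\mathcal{A}}) \geq (2m - \epsilon_m) \cdot C(X^\ast)$ once $C(X^\mathcal{A}) \geq N\beta_m$, and the final step of Theorem~\ref{theo:online:lower} shows how, by choosing $N$ sufficiently large relative to any putative slack $\epsilon>0$, we can drive this below $2m-\epsilon$, contradicting the assumed competitive ratio. The only subtlety to check is that the offline schedule $X^\ast$ constructed in the proof of Lemma~\ref{lemma:online:lower:induction} remains feasible under $m_j=1$; but as noted above, each of the two reference strategies activates at most one server of any given type at any time, so feasibility is preserved.

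I do not anticipate a real obstacle here: the corollary is essentially a bookkeeping observation on top of Theorem~\ref{theo:online:lower}. The only thing to be careful about is explicitly verifying that the reference offline schedules used in the lemma respect $x_{t,j} \le 1$, which is immediate from their definition (one type-$k$ server running contiguously, plus at most one type-$1$ activation per phase~$K_i$, plus the long-phase schedule that inductively also satisfies $m_j = 1$).
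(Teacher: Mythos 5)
Your proposal is correct and follows essentially the same route as the paper, which justifies the corollary precisely by observing that the lower-bound construction of Theorem~\ref{theo:online:lower} only ever issues jobs with $\lambda_t \in \{0,1\}$, so neither the online algorithm nor the reference offline schedules ever need more than one server of any type, and the bound $2d$ becomes $2m$ when $d=m$ and $m_j=1$. Your write-up is in fact somewhat more careful than the paper's one-paragraph justification, explicitly checking feasibility of the two offline strategies under $x_{t,j}\le 1$, but the underlying argument is identical.
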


\section{Summary}
In this paper, we have settled the competitive ratio of online algorithms for right-sizing heterogeneous data centers with $d$ different server types. We investigated a basic setting where each server type has a constant operating cost per time unit. In contrast to related publications like~\cite{LinWierman2013} or~\cite{Sellke2020}, we studied the discrete setting where the number of active servers must be an integral number. Thereby we gain truly feasible solutions. 
We developed a $2d$-competitive deterministic online algorithm and showed that $2d$ is a lower bound for deterministic algorithms. Hence our algorithm is optimal. Furthermore, we presented a randomized version that achieves a competitive ratio of $\frac{e}{e-1}d \approx 1.582d$ against an oblivious adversary.

\appendix
\section{Variables}
\label{sec:appendix:variables}

The following table gives an overview of the variables defined in this paper.

\begingroup
	\setlength{\tabcolsep}{4pt}
	\renewcommand{\arraystretch}{1.3}
	\scriptsize
	
	\centering
	\begin{longtable}{|l|p{13cm}|}
		\hline
		\textbf{Variable} & \textbf{Description} \\
		\hline
		\endhead
		\hline
		\endfoot
		$A_{t,k}$ & Block at time slot $t$ in lane $k$ of the schedule $X^\mathcal{A}$ \\
		$\mathcal{A}$ & Our deterministic online algorithm (Section~\ref{sec:online:const}) or any online algorithm (Section~\ref{sec:online:lower}) \\
		$\beta_j$ & Switching cost of server type $j$ \\
		$C(X)$ & Total cost of the schedule $X$ (see equation~\eqref{eqn:online:cost}) \\
		$C_{t,k}(X)$ & Switching and operating cost of the schedule $X$ at time $t$ in lane $k$ (see equation~\eqref{eqn:online:const:ctkdef}) \\
		$C(A_{t,k})$ & Cost of block $A_{t,k}$ (see Lemma~\ref{lemma:online:const:alg:costai}) \\
		$d$ & Number of server types \\
		$d_{t,k}$ & Duration of block $A_{t,k}$. Formally, $d_{t,k} \coloneqq e_{t,k} - s_{t,k}$ \\
		$e_k$ & Variable in algorithm~$\mathcal{A}$ that stores the time slot when the server in lane $k$ will be powered down \\
		$e_{t,k}$ & Last time slot (exclusive) of block $A_{t,k}$, i.e. $e_{t,k}$ is the first time slot after $A_{t,k}$ \\
		$\mathcal{I}$ & Problem instance. Formally,  $\mathcal{I} \coloneqq (T, d, \vec{m}, \vec{\beta}, \vec{l}, \Lambda)$ \\
		$\mathcal{I}^t$ & Problem instance that ends at time slot $t$. Formally, $\mathcal{I}^t \coloneqq (t, d, \vec{m}, \vec{\beta}, \vec{l}, \Lambda^t)$ \\
		$j_{t,k}$ & Server type used in block $A_{t,k}$ \\
		$m$ & Total number of servers, $m \coloneqq \sum_{j=1}^{d} m_j$\\
		$m_j$ & Number of servers of type $j$ \\
		$l_j$ & Operating cost of server type $j$ \\
		$\lambda_t$ & Job volume that arrives at time slot $t$ \\
		$s_{t,k}$ & First time slot of block $A_{t,k}$ \\
		$\bar{t}_j$ & Number of time slots that a server of type $j$ stays active in algorithm~$\mathcal{A}$; $\bar{t}_j \coloneqq \left\lfloor \beta_j / l_j \right\rfloor$ \\
		$T$ & Total number of time slots \\
		$X$ & An arbitrary schedule. Formally, $X = (\vec{x}_1, \dots, \vec{x}_T)$ and $\vec{x}_t = (x_{t,1}, \dots, x_{t,d})$ \\
		$X^\ast$ & An optimal schedule \\
		$X^\mathcal{A}$ & The schedule calculated by our deterministic online algorithm (in Section~\ref{sec:online:const}) or by any online algorithm (in Section~\ref{sec:online:lower}) \\
		$X^\mathcal{B}$ & The schedule calculated by our randomized online algorithm \\
		$\hat{X}^t$ & Optimal schedule for the problem instance $\mathcal{I}^t$ that ends at time $t$ \\
		$x_{t,j}$ & Number of active servers of type $j$ at time $t$ in the schedule $X$ \\
		$x^\mathcal{A}_{t,j}$ & Number of active servers of type $j$ at time $t$ in the schedule $X^\mathcal{A}$ \\
		$\hat{x}^u_{t,j}$ & Number of active servers of type $j$ at time $t$ in the schedule $\hat{X}^u$ \\
		$y_{t,k}$ & Server type used in the $k$-th lane at time $t$ in the schedule $X$ (see equation~\eqref{eqn:online:const:ytkdef}) \\
		$y^\mathcal{A}_{t,k}$ & Server type used in the $k$-th lane at time $t$ in the schedule $X^\mathcal{A}$ (see equation~\eqref{eqn:online:const:ytkdef}) \\
		$\hat{y}^u_{t,k}$ & Server type used in the $k$-th lane at time $t$ in the schedule $\hat{X}^u$ (see equation~\eqref{eqn:online:const:ytkdef}) \\
		$\tilde{y}^u_{t,k}$ & Largest server type used in lane $k$ by the schedule $\hat{X}^{t'}$ at time slot $t'$ for $t' \in [t:u]$. Formally, $\tilde{y}^u_{t,k} \coloneqq \max_{t' \in [t:u]} \hat{y}^{t'}_{t',k}$ \\
	\end{longtable}
\endgroup

\bibliographystyle{plainurl}
\bibliography{literature}

\end{document}